\theoremstyle{plain}
\newtheorem*{theorem*}{Theorem}
\newtheorem*{theoremB'}{Theorem B'}
\newtheorem*{theoremB''}{Theorem B''}
\newtheorem{theorem}{Theorem}[section]
\newtheorem{lemma}[theorem]{Lemma}
\newtheorem{proposition}[theorem]{Proposition}
\newtheorem{claim}[theorem]{Claim}
\newtheorem{definition}[theorem]{Definition}
\newtheorem{rmk}[theorem]{Remark}
\newtheorem{question}[theorem]{Question}
\newtheorem{open}[theorem]{Open Problem}
\theoremstyle{remark}
\numberwithin{equation}{section}
\renewcommand{\textbf}[1]{\begingroup\bfseries\mathversion{bold}#1\endgroup}
\newcommand{\op}[1]{{\rm{#1}}}
\newcommand{\Z}{\mathbb Z}
\newcommand{\N}{\mathbb N}
\newcommand{\R}{\mathbb R}
\newcommand{\psum}[1]{\sum_{x\in L}{\vphantom{\sum}}'}
\newcounter{margcount} 
\title[Optimal lattice and non-c.m. functions]{Optimal and non-optimal lattices for non-completely monotone interaction potentials}
\author{Laurent B\'etermin}
\address{QMATH, Department of Mathematical Sciences, University of Copenhagen, Universitetsparken 5, DK-2100 Copenhagen \O, Denmark}
\email{betermin@math.ku.dk}
\author{Mircea Petrache}
\address{Departamento de Matematicas, Pontificia Universidad Catolica de Chile, Av. Vicuna Mackenna 4860, Macul, Santiago, 6904441, Chile}
\email{mpetrache@uc.cl}
\begin{document}
\maketitle
\begin{abstract}
We investigate the minimization of the energy per point $E_f$ among $d$-dimensional Bravais lattices, depending on the choice of pairwise potential equal to a radially symmetric function $f(|x|^2)$. We formulate criteria for minimality and non-minimality of some lattices for $E_f$ at fixed scale based on the sign of the inverse Laplace transform of $f$ when $f$ is a superposition of exponentials, beyond the class of completely monotone functions. We also construct a family of non-completely monotone functions having the triangular lattice as the unique minimizer of $E_f$ at any scale. For Lennard-Jones type potentials, we reduce the minimization problem among all Bravais lattices to a minimization over the smaller space of unit-density lattices and we establish a link to the maximum kissing problem. New numerical evidence for the optimality of particular lattices for all the exponents are also given. We finally design one-well potentials $f$ such that the square lattice has lower energy $E_f$ than the triangular one. Many open questions are also presented.

\end{abstract}
\noindent
\textbf{AMS Classification:}  Primary 74G65; Secondary 82B20, 11F27\\
\textbf{Keywords:} Lattice energies, Theta functions, Lennard-Jones potentials, Triangular lattice, Completely monotone functions, Laplace transform

\section{Introduction and main results}
 
Efforts for rigorously proving crystallization phenomena, i.e. the fact that ground states of systems exhibit a periodic order, have recently been very active. This phenomenon is observed numerically and experimentally in several settings, but its rigorous mathematical justification appears to often be very challenging and the principles at work seem to be far from being completely understood (see the reviews \cite{radin1987low, Blanc:2015yu}). In physics-inspired phenomenological models, results are known for one-dimensional models \cite{VN1,vn1d1,vn1d2,Rad1,BHS, BlancLebris,SandierSerfaty1d,Leble1d,BetKnupfNolte} and for some higher-dimensional cases \cite{Rad2,Rad3,Crystal,ELi,Stef1,Stef2,TheilFlatley,Luca:2016aa}. In parallel to this direction, the study of the problem by number theory and related combinatorial techniques (see the book \cite{ConSloanPacking}) provided important results in dimensions $2$ and also $3$ \cite{Rankin, Cassels, Diananda, Ennola, Eno2, Mont, OPS, hales2005proof, Suto1, Suto2}, as well as in some particular higher dimensions \cite{CohnElkies, banasz, Coulangeon:2010uq, CoulLazzarini, SarStromb,CoulSchurm2018}, leading to the recent proof of optimality of best packings in dimensions $8$ and $24$ \cite{Viazovska, CKMRV}.

\medskip

For $d\geq 1$, let $\mathcal L_d:=\{A\mathbb Z^d:\ A\in GL(d)\}$ be the space of all $d$-dimensional Bravais lattices. Our goal is to study here the main model problem for the crystallization question, namely the  minimization of the potential energy $E_f$ defined, for any $L\in \mathcal{L}_d$, by
\begin{equation}\label{defintro-Ef}
E_f[L]:=\sum_{p\in L\setminus\{0\}}{\vphantom{\sum}}f(|p|^2),
\end{equation}
where $|\cdot |$ is the euclidean norm on $\R^d$ and $f$ is an admissible function in the sense of Definition \ref{def:admissible} below, i.e. $f(|x|^2)$ is integrable away from the origin.

\medskip

Studying $E_f[L]$ globally amongst lattices gives an important information about the case of infinite systems of points in $\mathbb R^d$, if we answer the question of characterizing for which $f$ some special lattices (see the list at the end of this section) are, or are not, global minima amongst lattices. This type of question appeared before in several different contexts, which include Ginzburg-Landau vortices \cite{Sandier_Serfaty}, Bose-Einstein Condensates \cite{AftBN,Mueller:2002aa}, Gaussian Core models \cite{cohnkumarsoft} and Thomas-Fermi models for solids \cite{MR1842045,Betermin:2014fy} (see the review \cite{Blanc:2015yu}).

\medskip

There are two main sources of normalization which ``fix the scale'' of minimizers of $E_f$ and ensure that the minimum over $L$ (or more generally over all configurations) can be achieved: 
\begin{itemize}
\item either we \emph{fix the density of our configurations}, as a constraint in our minimization
\item or \emph{the shape of $f$ itself selects a preferred scale}, when we minimize amongst lattices of all possible scales.
\end{itemize}
For the first situation the typical example is that of Gaussian kernels $f(r^2)=e^{-a r^2}$, and in the second case the typical case is that of $f$ equal to a one-well potential such as the Lennard-Jones case $f(r^2)=a_1r^{-12}- a_2 r^{-6}$. In fact historically, the main motivation in physics for introducing one-well potentials is precisely the above, namely to provide \emph{good phenomenological models}, in which the scale-fixing is encoded directly via the potential itself, and needs not be artificially fixed (see e.g. \cite[p. 7]{Kaplan}).

\medskip

For both the above settings, the global optimality of a given lattice for $E_f$ amongst all lattices can be proved rigorously in very few examples, and the general treatment is based mainly on the following two principles:
\begin{itemize}
\item \textbf{(cristallization at fixed density)} The minimization for $f(r^2)=e^{-a r^2}$ being a Gaussian, and amongst lattices of fixed density has been treated in $d=2$ (in which case $E_f$ is the so-called \emph{lattice theta function} \eqref{def-thetalattice}), was studied in the fundamental work \cite{Mont} (see also higher-dimensional results \cite{Coulangeon:2010uq}), which prove that at fixed density the triangular lattice (defined by \eqref{def-triangularintro}) is the unique minimizer, \emph{for all choices of the variance} $a>0$. By a change of variable, this means also that for any Gaussian kernel and amongst lattices of \emph{any fixed density}, the triangular lattice is the unique minimizer. This result can be extended and transferred to all functions which are superposition of Gaussians with positive coefficients, which translates to requiring that $f$ is a \emph{completely monotone function} (see Definition \ref{def:complmonot} and the following discussion), a class which includes all inverse power admissible functions. Some conjectures from number theory are then naturally formulated for this class of $f$ (see e.g. \cite{CohnKumar}), due to the above basic principle.
\item \textbf{(minimization for one-well $f$)} In the absence of the complete monotone assumption on $f$, all known results on crystallization work under \emph{strong localization} hypotheses, and in the setting in which the study of $f$ can effectively be reduced to a \emph{finite-range situation}. The model-case to which proofs reduce is the so-called ''sticky disk potential`` in dimension $d=2$, with $f$ defined by $f(1)=-1$, $f(r)=0$ for $r>1$ and $f(r)=+\infty$ for $r<1$. In this case the general crystallization result (for the case in which we allow as competitors to $E_f$ general configurations too) can easily be proved by only discussing the nearest-neighbors of a given point, as first done by Heitmann and Radin \cite{Rad1}. The most general $f$ whose study is known to be reducible to the Heitmann-Radin situation is to our knowledge the one appearing in \cite{Crystal}, to which we refer for further references.
\end{itemize}
Note that there is a huge difference between the two above settings: any completely monotone function is in particular positive, decreasing and convex, whereas any one-well potential is negative at infinity and not monotone, and not convex. This means that \emph{a wide class of potentials does not fit in either category}, and thus escapes treatment by the known techniques.

\medskip

In this work, we extend the scope and clarify the limitations of the abovementioned Gaussian superposition and localization principles. We concentrate on the minimization amongst Bravais lattices, and in dimension $d=2$, because we feel that the main principles at work for $d=2$ are at the moment the same ones that can work also in $d\ge 3$, and there is no gain of information in treating the higher dimensions in higher generality in this work. Some of our results generalize directly to other dimensions $d\ge 3$, and we will point this out whenever this is the case. Dimension $d=1$ is better understood, but presents some important open questions. Section \ref{sec:1d} contains a survey of the main available $1$-dimensional results and examples, which are presented here as a source of inspiration for possible behaviours to test in dimension $d\ge 2$ in future work.

\medskip

As mentioned before, for the main physically relevant simple potentials, crystallization remains not rigorously proved. We mention here the following basic questions:
\begin{question}\label{q1}
If $f(r^2)$ is not a positive superposition of gaussians, can the triangular lattice still be a minimizer of $E_f$ among lattices at any fixed density?
\end{question}
\begin{question}\label{q2}
Does there exist a Lennard-Jones type potential $f(r^2)=a_1 r^{-x_1} - a_2 r^{-x_2}$ with $a_1,a_2>0, x_1>x_2>2$, such that we can prove crystallization amongst periodic configurations in dimension $d=2$? In other words, can we prove that $E_f$ has the triangular lattice as minimizer when considered on lattice configurations?
\end{question}
\begin{question}\label{q3}
Can we prove crystallization (i.e. that the minimum of $E_f$ amongst all configurations is achieved by a lattice) for some $f$ which has not extremely fast decay at infinity? For example can we prove it in dimension $d=2$ for some $f$ such that $|f(r^2)|\ge C r^{-6}$ for all large enough $r$?
\end{question}
Our main results are as follows:
\begin{enumerate} \item For the long-range case, we formulate criteria depending on the expression of $f$ as a superposition of exponentials, and we distinguish different behaviours based on the sign of the inverse Laplace transform of $f$ in Theorem \ref{MainTh1}. In particular we provide an example in Section \ref{not_cm} which gives a positive answer to Question \ref{q1}.
\item We prove, in Proposition \ref{Mainprop1} and Theorem \ref{MainTh4} that for a large class of one-well potentials \emph{the square lattice has lower energy than the triangular one}, which proves for the first time in dimension $d>1$ that the two frameworks described above can have essentially distinct properties.
\item In case of one-well potentials, in Theorem \ref{MainTh2} and \ref{MainTh3}, we reduce the number of parameters needed to understand the behavior of $f$ belonging to the class of Lennard-Jones type potentials (this larger class was originally introduced by Mie in 1903 \cite{mie1903kinetischen}), and we give new evidence for a positive answer to a \emph{stronger version} of Question \ref{q2}, which however we do not have the tools to answer.
\end{enumerate}
The proofs of our results and counterexamples are implemented by extracting precise principles which could also be applied in dimension higher than $2$.

\medskip

We have already studied the above kind of questions for some more specific choices of $f$ and in some related problems, in \cite{Betermin:2014fy,BetTheta15,Beterloc,BeterminPetrache,Beterminlocal3d,MorseLB}, again with special emphasis on ``physical'' dimensions $d\in \{2,3\}$. The first author and Kn\"upfer have treated the case of two-dimensional interaction of masses located on lattice sites in \cite{BetKnupfdiffuse,Softtheta}, and the second author and Serfaty have treated the case of Jellium-type energies in $2$-dimensions for power-law interactions in \cite{petrache2017next}.

\medskip

Some noteworthy Bravais lattices, which play important roles in our energy minimization problem, are the square lattice $\mathbb Z^2$ and the triangular lattice $\mathsf{A}_2$ and its renormalized version $\Lambda_1$, defined by
\begin{equation}\label{def-triangularintro}
\mathsf{A}_2:=\Z(1,0)\oplus \Z\left(\frac{1}{2},\frac{\sqrt{3}}{2}  \right),\quad \Lambda_1:=\sqrt{\frac{2}{\sqrt{3}}}\mathsf{A}_2.
\end{equation}
In dimension $3$, by abuse of notation (because the lattices $\mathsf{D}_n$ are not usually with a unit density), special roles will be played by the Face-Centred-Cubic (FCC) lattice $\mathsf{D}_3$ and its dual, the Body-Centred-Cubic (BCC) lattice $\mathsf{D}_3^*$, which will be defined by
\begin{align}
&\mathsf{D}_3:=2^{-\frac{1}{3}}\left[ \Z(0,1,1)\oplus \Z(1,0,1) \oplus \Z(1,1,0)  \right],\label{defFCC}\\ 
&\mathsf{D}_3^*:=2^{\frac{1}{3}}\left[\Z(1,0,0)\oplus \Z(0,1,0)\oplus \Z\left(\frac{1}{2},\frac{1}{2},\frac{1}{2}  \right)  \right].\label{defBCC}
\end{align}
We have represented these lattices in Figure \ref{Lattices}.
\begin{figure}[!h]
\centering
\includegraphics[width=6cm]{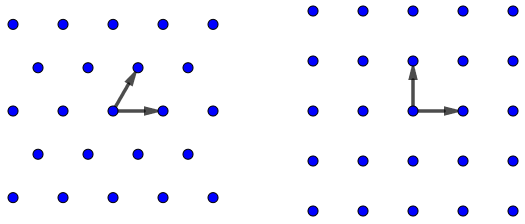} \\
\includegraphics[width=3cm]{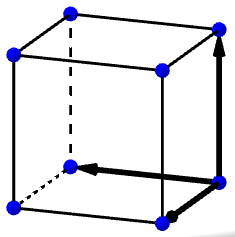}\quad\includegraphics[width=3cm]{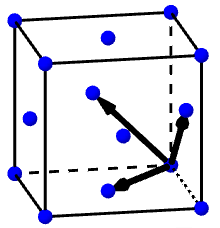}\quad\includegraphics[width=3cm]{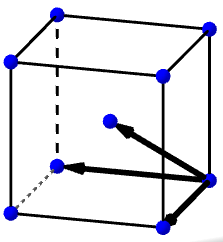}
\caption{Representation of the triangular and square lattices $\mathsf{A}_2,\Z^2$ (first line) and the simple cubic, FCC and BCC lattices $\Z^3,\mathsf{D}_3,\mathsf{D}_3^*$ (second line)}
\label{Lattices}
\end{figure}

In higher dimensions $d\in \{4,8,24\}$, we consider (unit-density versions of) the classical lattices $\mathsf{D}_4,\mathsf{E}_8,\Lambda_{24}$, which are respectively defined in \cite[Sect. 7.2, 8.1 and 11]{ConSloanPacking}.

\medskip

We now pass to introducing the precise statements of our results.

\subsection{Minimization at fixed density by writing $f(r^2)$ as a superposition of Gaussians}\label{sec:intro_gauss}

We recall the notion of completely monotone function:
\begin{definition}\label{def:complmonot}
We say that $f:(0,+\infty)\to [0,+\infty)$ is \emph{completely monotone} if for any $k\in \N_0$ and any $r>0$ $(-1)^{k}f^{(k)}(r)\geq 0$. 
\end{definition}

We introduce the following normalizations and notations on spaces of lattices. Denote respectively by $\mathcal{L}_d^\circ\subset \mathcal L_d $ and $\mathcal{L}_d^1\subset \mathcal L_d$ the subsets of lattices with respectively unit density and unit shortest non-zero vector. Furthermore, we denote by $D_{\mathcal{L}_d}$ and $D_{\mathcal{L}_d^\circ}$ the fundamental domains of $\mathcal L_d$ and $\mathcal L_d^\circ$ (see Section \ref{sect-notations} for a precise definition), where each Bravais lattice appears only once. Moreover, the shape $[L]$ of a Bravais lattice $L\in \mathcal{L}_d^\circ$ is its equivalence class modulo rotation and dilation among Bravais lattices (see Definition \ref{defn:shape}).

\medskip

Our starting point is the well-known result (see e.g. \cite[p. 169]{CohnKumar} or \cite[Prop 3.1]{BetTheta15}) which says that if $f$ is a completely monotone (admissible) function, then the triangular lattice $\Lambda_1$ is the unique minimizer of $L\mapsto E_f[\lambda L]$ on $D_{\mathcal L_d^\circ}$ for any fixed $\lambda>0$. Indeed, this follows by superposition, from the following two celebrated results. 
\begin{theorem*}[{Montgomery's Theorem \cite[Thm 1]{Mont}}] In dimension $d=2$, the triangular lattice $\Lambda_1$ is the unique minimum of $D_{\mathcal{L}_2^\circ}\ni L\mapsto \theta_L(\alpha)$, for any fixed $\alpha>0$.
\end{theorem*}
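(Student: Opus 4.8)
The plan is to pass to the classical modular parametrization of unit-density planar lattices and to study the lattice theta function as a function on the fundamental domain of $\op{SL}(2,\Z)$ acting on the upper half-plane $\mb{H}$. Since $L\mapsto\theta_L(\alpha)$ depends on $L$ only through its rotation class, writing each $L\in\m{L}_2^\circ$ as $(\Im\tau)^{-1/2}\big(\Z(1,0)\oplus\Z(\Re\tau,\Im\tau)\big)$ identifies $D_{\m{L}_2^\circ}$ with the standard fundamental domain $\m{F}=\{\tau=x+iy\in\mb{H}:\ |x|\le\tfrac12,\ x^2+y^2\ge1\}$, and a short computation turns the energy into
\[
\Theta_\alpha(\tau)\defl\theta_L(\alpha)=\sum_{(m,n)\in\Z^2}\exp\!\Big(-\frac{\pi\alpha}{y}\big((m+nx)^2+n^2y^2\big)\Big),
\]
a real-analytic, $\op{PSL}(2,\Z)$-invariant function of $\tau$, with $\Lambda_1$ corresponding to the corner $\rho=e^{i\pi/3}=\tfrac12+i\tfrac{\sqrt3}{2}$. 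It then suffices to show that, for every fixed $\alpha>0$, $\Theta_\alpha$ attains its minimum on $\m{F}$ only at $\rho$.

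I would carry out this minimization in two steps. First, applying Poisson summation in the index $m$ yields a Fourier expansion $\Theta_\alpha(x+iy)=c_0(y)+2\sum_{j\ge1}c_j(y)\cos(2\pi jx)$ in which every coefficient $c_j(y)$ is a positive sum of Gaussians; in particular $x\mapsto\Theta_\alpha(x+iy)$ is even and $1$-periodic. The key claim of the first step is that, for each fixed $y$, this map is \emph{strictly decreasing on $[0,\tfrac12]$}. Granting it, on the slice of $\m{F}$ at height $y$ — which is $\{\sqrt{1-y^2}\le|x|\le\tfrac12\}$ for $\tfrac{\sqrt3}{2}\le y<1$ and $\{|x|\le\tfrac12\}$ for $y\ge1$ — the minimum is uniquely at $x=\tfrac12$, so the minimum over $\m{F}$ lies on the right edge $\{\tfrac12+iy:\ y\ge\tfrac{\sqrt3}{2}\}$. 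The second step is to show that $y\mapsto\Theta_\alpha(\tfrac12+iy)$ is \emph{strictly increasing on $[\tfrac{\sqrt3}{2},\infty)$}, which for the Poisson-transformed (dual) series can be obtained by differentiating and checking positivity after a suitable pairing of terms, or by transporting the right edge onto the unit arc $\{|\tau|=1\}$ via $\tau\mapsto-1/\tau$ and exploiting the additional symmetry there. Combining the two steps gives $\min_{\m{F}}\Theta_\alpha=\Theta_\alpha\big(\tfrac12+i\tfrac{\sqrt3}{2}\big)=\Theta_\alpha(\rho)$, attained only at $\rho$; translating back, $\Lambda_1$ is the unique minimizer of $D_{\m{L}_2^\circ}\ni L\mapsto\theta_L(\alpha)$, and since every estimate is carried out at fixed $\alpha$ the statement holds for all $\alpha>0$.

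The hard part will be the first step, the strict monotonicity in $x$. It does \emph{not} follow from the positivity of the $c_j(y)$ alone — a general cosine series with non-negative coefficients need not be minimized at $x=\tfrac12$ — so one must use the precise arithmetic form of the coefficients, which are divisor-type sums of Gaussians. Concretely one is reduced to proving $\sum_{j\ge1}j\,c_j(y)\sin(2\pi jx)\ge0$ on $(0,\tfrac12)$; this is the core of Montgomery's argument \cite{Mont}, and is handled by a careful re-summation through Poisson's formula, using the rapid Gaussian decay so that the first harmonic dominates the remainder. The second step, though also delicate, is comparatively routine once the first is available: on the right edge the dual expansion converges geometrically, and uniqueness of the minimizer propagates directly from the strictness of the two monotonicities.
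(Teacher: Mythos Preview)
The paper does not contain a proof of this statement: Montgomery's Theorem is quoted there as a known result from \cite{Mont} and used as a black box (together with the Hausdorff--Bernstein--Widder theorem) to derive Proposition~\ref{prop:completemonot} and the later results of Section~\ref{sect-allscales}. So there is no ``paper's own proof'' to compare against.

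That said, what you have sketched is essentially Montgomery's original argument from \cite{Mont}: parametrize $D_{\m{L}_2^\circ}$ by the standard fundamental domain $\m{F}\subset\mb{H}$, Poisson-sum in one index, establish strict monotonicity in $x$ on $[0,\tfrac12]$ at fixed $y$, and then strict monotonicity in $y$ along the right edge to pin the minimum at $\rho$. Your diagnosis of where the real work lies is accurate --- the $x$-monotonicity does not follow from nonnegativity of the cosine coefficients alone, and Montgomery handles it by a careful estimate showing the $j=1$ term dominates. One small caution on your second step: the map $\tau\mapsto-1/\tau$ sends the unit arc $|\tau|=1$ to itself, not the right edge $x=\tfrac12$ to the arc, so if you want to exploit that symmetry you should either work directly on the arc (which also bounds $\m{F}$) or stay on the right edge and argue by direct differentiation of the dual series, as Montgomery does. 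With that adjustment the outline is sound, but since the paper simply cites \cite{Mont}, the appropriate move here is to do the same rather than reproduce the proof.
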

The lattice theta function $\theta_L$ mentioned above is defined for $L\in \mathcal{L}_d$ (or, more generally, for any configuration $L\subset \mathbb R^d$ for which the below sum is finite) and $\alpha>0$ by
\begin{equation}\label{def-thetalattice}
\theta_L(\alpha):=\sum_{p\in L} e^{-\pi \alpha |p|^2}.
\end{equation}
\begin{theorem*}[Hausdorff-Bernstein-Widder Theorem \cite{Bernstein}] A function $f$ is completely monotone if and only if $f$ is the Laplace transform of a positive Borel measure $\rho_f$ on $(0,+\infty)$.
\end{theorem*}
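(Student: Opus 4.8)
\emph{Plan.} I would treat the two implications separately; the reverse implication is a one-line differentiation, the forward one is the substance (Widder's theorem).

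\textbf{Laplace transform $\Rightarrow$ completely monotone.} Suppose $f(r)=\int_{(0,+\infty)}e^{-rt}\,d\rho_f(t)$, finite for every $r>0$. Fix $r_{0}>0$ and $\delta\in(0,r_{0})$; since $\int_{(0,\infty)}e^{-(r_{0}-\delta)t}\,d\rho_f(t)=f(r_{0}-\delta)<+\infty$ while $t^{k}e^{-rt}\le C_{k,r_{0},\delta}\,e^{-(r_{0}-\delta)t}$ for all $r\ge r_{0}$ and $t>0$, dominated convergence lets me differentiate $k$ times under the integral sign on $[r_{0},\infty)$, giving $f^{(k)}(r)=\int_{(0,\infty)}(-t)^{k}e^{-rt}\,d\rho_f(t)$. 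As $r_{0}>0$ is arbitrary this holds throughout $(0,\infty)$, so $(-1)^{k}f^{(k)}(r)=\int_{(0,\infty)}t^{k}e^{-rt}\,d\rho_f(t)\ge0$ for all $k\in\N_0$, i.e.\ $f$ is completely monotone.

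\textbf{Completely monotone $\Rightarrow$ Laplace transform.} First note that a completely monotone $f$ is automatically $C^{\infty}$, nonnegative, nonincreasing and convex, and that iterating the mean value theorem yields the finite-difference positivity $(-1)^{n}\big(\Delta_{h}^{n}f\big)(x)=(-1)^{n}h^{n}f^{(n)}(\xi)\ge 0$ for all $h>0$, $x>0$. After subtracting the constant $c:=\lim_{x\to+\infty}f(x)$ (still completely monotone; in the representation to come $c$ is an atom at $t=0$, which vanishes precisely for the \emph{admissible}, hence decaying, functions $f$ relevant here) and, if convenient, replacing $f$ by $f(\,\cdot\,+\varepsilon)$ to make it bounded, I set for $n\ge1$
\[
\phi_{n}(t):=\frac{(-1)^{n}}{n!}\Big(\frac{n}{t}\Big)^{n+1}f^{(n)}\!\Big(\frac{n}{t}\Big)\;\ge\;0,\qquad t>0,
\]
the nonnegativity being exactly complete monotonicity, and let $d\mu_{n}(t):=\phi_{n}(t)\,dt$.

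\textbf{The limiting argument, and the main obstacle.} The heart of the proof — essentially its only non-soft ingredient — is to establish the \emph{Post inversion limit} $\int_{0}^{\infty}e^{-st}\,d\mu_{n}(t)\to f(s)$ as $n\to\infty$ for each $s>0$: a Laplace-method estimate, the kernel in $\int_{0}^{\infty}e^{-st}\phi_{n}(t)\,dt$ concentrating at the stationary point of its logarithm. Granting it, $\sup_{n}\int_{0}^{\infty}e^{-at}\,d\mu_{n}(t)<\infty$ for each fixed $a>0$, so the measures $e^{-at}\,d\mu_{n}$ have uniformly bounded total mass and, by Helly's selection theorem, a subsequence converges vaguely on $[0,\infty)$ to a finite positive measure; undoing the weight produces a positive measure $\rho_f$. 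The exponential weight supplies the uniform integrability (for $s$ bounded away from $0$) that prevents mass from escaping to $+\infty$ along the Helly limit, so one passes to the limit in the Post formula to get $f(s)=\int_{(0,\infty)}e^{-st}\,d\rho_f(t)$, with no atom at the origin after the above normalization; uniqueness of $\rho_f$, not needed here, follows from injectivity of the Laplace transform. A variant avoiding the kernel asymptotics replaces this construction by the remark that for each $\lambda>0$ the sequence $c_{n}:=f(n/\lambda)$ is a completely monotone sequence, hence by Hausdorff's moment theorem $c_{n}=\int_{0}^{1}u^{n}\,d\nu_{\lambda}(u)$; the substitution $u=e^{-t/\lambda}$ and the limit $\lambda\to+\infty$ then reconstruct $\rho_f$ via the same Helly compactness.
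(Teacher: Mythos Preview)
The paper does not prove this theorem: it is stated as a classical result with a citation to \cite{Bernstein} and used as a black box, so there is no ``paper's own proof'' to compare against. Your sketch is a reasonable outline of the standard Widder argument (Post inversion plus Helly compactness, or alternatively the reduction to Hausdorff's moment problem), and the reverse implication is correctly handled by differentiation under the integral.

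That said, you are candid that the main step --- the Post--Widder limit $\int_0^\infty e^{-st}\,d\mu_n(t)\to f(s)$ --- is only asserted (``granting it''), and this is genuinely where all the work lies; the Laplace-method heuristic you mention does not by itself give a proof without a careful stationary-phase-type estimate or an appeal to Widder's original argument. The Helly step also needs a word on why no mass escapes to $t=0$ (not just to $+\infty$), since vague limits on $[0,\infty)$ can drop mass at the boundary. Finally, a minor point: a completely monotone function is not \emph{a priori} $C^\infty$ from the definition alone (the definition presupposes the derivatives exist), so the opening ``note that $f$ is automatically $C^\infty$'' is circular as phrased; one usually works with the finite-difference form of complete monotonicity from the outset, as you in fact go on to do.
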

A direct important consequence, also proved by Rankin \cite{Rankin}, Ennola \cite{Eno2}, Cassels \cite{Cassels} and Diananda \cite{Diananda}, is the minimality of $\Lambda_1$ on $D_{\mathcal{L}_2^\circ}$, for any $s>2$, for the Epstein zeta function defined by
\begin{equation}\label{def-epsteinzeta}
\zeta_L(s)=\sum_{p\in L\setminus\{0\}} \frac{1}{|p|^s}.
\end{equation}
The next natural question is now to study the minimization of $L\mapsto E_f[\lambda L]$ on $D_{\mathcal{L}_d^\circ}$, for fixed $\lambda>0$, when instead of being positive as in the above theorem, the measure $\mu_f$ is negative on some open sets of $(0,+\infty)$, i.e. $f$ is not completely monotone. This question has been studied by the first author and Zhang in \cite{Betermin:2014fy,BetTheta15} for the special case of the Lennard-Jones type potentials \eqref{defLJ} in dimension $d=2$, where the minimality of $\Lambda_1$ for $\lambda$ small enough was proved (with an explicit upper bound), as well as its non-minimality for $\lambda$ large enough (again with an explicit lower bound). Another natural question is the nature of lattices $\Lambda$ that are minimizers of $L\mapsto E_f[\lambda L]$ at fixed density $\lambda$ for any $\lambda>0$. Using the Laplace transform representation
\begin{equation}\label{laplacerepresent}
\forall r>0,\quad f(r)=\int_0^{+\infty} e^{-rt}d\mu_f(t),
\end{equation}
where $\mu_f=\mathcal{L}^{-1}[f]$ is the inverse Laplace transform of $f$, which we assume to be well defined and to be a Radon measure, we prove the following results in Proposition \ref{prop-asymptnonopt} and Proposition \ref{prop-anyscale2} below.
\begin{theorem}[Minimization at fixed scale]\label{MainTh1}
Let $d\geq 1$ and assume that $L_m$ is, for all $\alpha>0$, the unique minimizer on $D_{\mathcal{L}_d^\circ}$ of $L\mapsto \theta_L(\alpha)$, defined by \eqref{def-thetalattice}. Let $f$ be an admissible potential with representation \eqref{laplacerepresent}. Then the following holds for any $r_0>0$:
\begin{enumerate}
\item If $\mu_f$ is negative on $(0,r_0)$, then for any Bravais lattice $L\in \mathcal{L}_d^\circ\backslash\{L_m\}$, there exists $\lambda_0$ such that for any $\lambda>\lambda_0$, $E_f[\lambda L]<E_f[\lambda L_m]$;
\item  If $\mu_f$ is negative on $(r_0,+\infty)$, then for any Bravais lattice $L\in \mathcal{L}_d^\circ\backslash\{L_m\}$, there exists $\lambda_1$ such that for any $0<\lambda<\lambda_1$, $E_f[\lambda L]<E_f[\lambda L_m]$.
\item If $\mu_f$ is positive on $(0,r_0)$ or on $(r_0,+\infty)$, and $\Lambda$ is a minimizer of $L\mapsto E_f[\lambda L]$ for any $\lambda >0$ on $D_{\mathcal{L}_d^\circ}$, then $\Lambda=L_m$.
\item If $\mu(r)$ is negative on $(0,r_0)$ or on $(r_0,+\infty)$, then the minimizer of $L\mapsto E_f[\lambda L]$ cannot be the same for all $\lambda>0$.
\end{enumerate}
In particular, these results hold in dimension $d=2$ for the triangular lattice $L_m=\Lambda_1$.
\end{theorem}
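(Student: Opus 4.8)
The plan is to reduce everything to Montgomery-type minimality of the lattice theta function by writing $E_f[\lambda L]$ as a superposition of theta functions via \eqref{laplacerepresent}. Starting from $f(r)=\int_0^\infty e^{-rt}\,d\mu_f(t)$, one formally gets
\begin{equation*}
E_f[\lambda L]=\sum_{p\in L\setminus\{0\}}f(\lambda^2|p|^2)=\int_0^{+\infty}\big(\theta_{\lambda L}(t/\pi)-1\big)\,d\mu_f(t),
\end{equation*}
after a change of variable relating $e^{-\lambda^2|p|^2 t}$ to the summand in \eqref{def-thetalattice}; I would first justify this interchange of sum and integral using admissibility of $f$ (integrability away from the origin controls the tail, and the $-1$ subtracting the $p=0$ term handles the mass near $t=0$). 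The key monotonicity input is then the elementary but crucial fact, derived from the hypothesis that $L_m$ uniquely minimizes $L\mapsto\theta_L(\alpha)$ for every $\alpha>0$ on $D_{\mathcal{L}_d^\circ}$, that for any fixed $L\ne L_m$ in $\mathcal{L}_d^\circ$ the function $t\mapsto \theta_{\lambda L}(t/\pi)-\theta_{\lambda L_m}(t/\pi)$ is strictly positive for all $t>0$ and all $\lambda>0$.

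For parts (1) and (2), the idea is an asymptotic analysis of $E_f[\lambda L]-E_f[\lambda L_m]=\int_0^\infty\big(\theta_{\lambda L}(t/\pi)-\theta_{\lambda L_m}(t/\pi)\big)d\mu_f(t)$ as $\lambda\to\infty$ (resp. $\lambda\to 0$). As $\lambda\to+\infty$ the rescaled lattice $\lambda L$ becomes sparse, so $\theta_{\lambda L}(t/\pi)-1$ is dominated by the shortest-vector contribution and decays like $e^{-c\lambda^2 t}$; the integral is thus increasingly concentrated near $t=0$, i.e. on the region $(0,r_0)$ where $\mu_f$ is assumed negative. One then shows that after dividing by the leading exponential factor the difference has a negative limit, because the sign of the integrand is governed by $\mu_f<0$ on $(0,r_0)$ and the positive theta-difference, giving $E_f[\lambda L]<E_f[\lambda L_m]$ for $\lambda$ large; part (2) is the mirror image, using the modular (Jacobi) transformation of $\theta$ to see that $\lambda\to 0$ concentrates the weight on large $t$, i.e. on $(r_0,+\infty)$. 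Part (4) is then immediate: by (1)–(2) (applied with the appropriate sign convention), if $\mu_f$ is negative on an interval of either type, some competitor beats $L_m$ in the corresponding scaling limit, so no single lattice — in particular not $L_m$, which is forced to be the candidate by Montgomery minimality of each $\theta_{\lambda L_m}(t/\pi)$ in the complementary regime — can be optimal for all $\lambda$; hence the minimizer must genuinely depend on $\lambda$.

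For part (3), suppose $\mu_f\ge 0$ on $(0,r_0)$ (the other case is analogous) and let $\Lambda$ minimize $L\mapsto E_f[\lambda L]$ on $D_{\mathcal{L}_d^\circ}$ for every $\lambda>0$. Using the splitting $E_f[\lambda L]=\int_0^{r_0}(\cdots)+\int_{r_0}^\infty(\cdots)$ and the fact that on $(0,r_0)$ the weight is nonnegative so the first integral is minimized by $L_m$ (by the pointwise theta inequality above), while the second integral is an $O(1)$ perturbation whose relative size, compared to the first, vanishes as $\lambda\to+\infty$ since the mass concentrates at $t\to 0$ — I would show $E_f[\lambda\Lambda]-E_f[\lambda L_m]\to$ a nonnegative limit, with equality forcing $\Lambda=L_m$ by the uniqueness in Montgomery's theorem; since $\Lambda$ is by assumption a minimizer, this limit must be $\le 0$, hence $=0$, hence $\Lambda=L_m$. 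The main obstacle throughout is making the asymptotic-dominance arguments rigorous: one must quantify how fast $\theta_{\lambda L}(t/\pi)-1$ decays and how the negative versus positive parts of $\mu_f$ compete, which requires care because $\mu_f$ is only assumed to be a Radon measure (possibly unbounded, possibly with slowly decaying tails), so controlling $\int_{r_0}^\infty$ against $\int_0^{r_0}$ uniformly is the delicate point; the admissibility hypothesis on $f$ should be exactly what is needed to close this gap, and the final statement for $d=2$, $L_m=\Lambda_1$ follows directly from Montgomery's Theorem quoted above.
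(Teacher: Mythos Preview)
Your overall strategy is the same as the paper's: express $E_f[\lambda L]-E_f[\lambda L_m]$ as an integral of the strictly positive function $\theta_{\lambda L}-\theta_{\lambda L_m}$ against $d\mu_f$, and then argue that in the appropriate scaling limit the sign is dictated by the sign of $\mu_f$ near $0$ (resp.\ near $+\infty$). One technical simplification you are missing, and which the paper exploits, is the change of variable $u=\lambda^2 t/\pi$: this turns the representation into
\[
E_f[\lambda L]-E_f[\lambda L_m]=\frac{\pi}{\lambda^2}\int_0^{+\infty}\bigl(\theta_L(u)-\theta_{L_m}(u)\bigr)\,\rho_f\!\left(\frac{\pi u}{\lambda^2}\right)\,du,
\]
so that the theta difference is a \emph{fixed} positive function with exponential decay, and all the $\lambda$-dependence sits in the argument of $\rho_f$. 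Then ``mass concentrates near $t=0$ as $\lambda\to\infty$'' becomes the trivial observation that the negativity region $\{u:\rho_f(\pi u/\lambda^2)<0\}=(0,r_0\lambda^2/\pi)$ eventually swallows all of $(0,\infty)$, and the exponential tail of $\theta_L-\theta_{L_m}$ does the rest. This avoids having to quantify the $\lambda$-dependent decay of $\theta_{\lambda L}(t/\pi)-1$ that you worry about. For part (2) the paper couples this with the Jacobi identity $\theta_L(\alpha)=\alpha^{-d/2}\theta_{L^*}(1/\alpha)$ and uses the self-duality $L_m^*=L_m$ (forced by the hypothesis) rather than a direct $\lambda\to 0$ asymptotic.

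Your argument for part (4) has a genuine gap. You correctly deduce from (1)--(2) that $L_m$ cannot be the minimizer for all $\lambda$, but your claim that $L_m$ ``is forced to be the candidate by Montgomery minimality in the complementary regime'' is not justified: you have no sign information on $\mu_f$ outside the interval where it is negative, so part (3) does not apply and Montgomery's theorem by itself says nothing about minimizers of $E_f$. To rule out a hypothetical $\Lambda\neq L_m$ minimizing for all $\lambda$, the paper runs the \emph{same} asymptotic argument in reverse: since $\Lambda\neq L_m$, one can find $L_0$ with $\theta_{L_0}(u)>\theta_\Lambda(u)$ for every $u>0$ (take $L_0$ sufficiently degenerate), and then the negativity of $\mu_f$ on $(0,r_0)$ gives $E_f[\lambda L_0]<E_f[\lambda\Lambda]$ for large $\lambda$, a contradiction. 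A smaller point: in part (3) you should argue for a strict \emph{sign} of $E_f[\lambda\Lambda]-E_f[\lambda L_m]$ at some finite $\lambda$, not for a nonnegative \emph{limit}, since both energies tend to $0$ as $\lambda\to\infty$ and the limit is automatically $0$ regardless of whether $\Lambda=L_m$.
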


The major issue is to identify $L_m$ for a given dimension $d$. This result is only known, so far, in dimension $d=2$ where $L_m=\Lambda_1$. We have previously worked on this problem in higher dimensions and we have given in \cite{BeterminPetrache} many results about minimizers of $L\mapsto \theta_L(\alpha)$ on different subclasses of lattices. It is actually conjectured by Cohn and Kumar in \cite[Conjecture 9.4]{CohnKumar} that $\Lambda_1,\mathsf{E}_8$ and $\Lambda_{24}$ are the unique minimizers of $L\mapsto \theta_L(\alpha)$ on $D_{\mathcal{L}_d^\circ}$, $d$ respectively equal to $\{2,8,24\}$ for any fixed $\alpha>0$. Their conjecture is even more general than that: they claim that these lattices, as well as $\Lambda_1$ in dimension $d=2$, are the unique minimizers  of $\mathcal C\mapsto\theta_{\mathcal C}(\alpha)$, for all $\alpha>0$, among \emph{all periodic configurations $\mathcal C$} of density $1$. For a space $X$ (in our case we always assume $X=\mathbb R^d$), configurations in $X$ with the property of minimizing the energy $E_f$, at fixed density, for all Gaussian kernels $f(r^2)=e^{-ar^2}$, are called \emph{universally optimal}. The local minimality of $\mathsf{D}_4$ among four-dimensional periodic configurations of unit density for the lattice theta function, for all $\alpha>0$, proved in \cite{Coulangeon:2010uq}, suggests that $\mathsf{D}_4$ should also be universally optimal in dimension $d=4$. Furthermore, recent results by Viazovska et al. \cite{Viazovska,CKMRV} about the best packings in dimensions $8$ and $24$ have shown the efficiency of the Cohn-Elkies linear programming bounds for sphere packing \cite{CohnElkies} and could possibly lead to a proof of this conjecture in those dimensions (see \cite{CohnKumar} for the link between linear programming bounds and energy minimization problems).

\medskip

A natural conjecture would be that for an admissible function $f$, if for any $\lambda>0$, the minimizer $L_m$ of the theta function for any $\alpha>0$ is the unique minimizer of $L\mapsto E_f[\lambda L]$ on $D_{\mathcal L_d^\circ}$, then $f$ is completely monotone (see e.g. \cite{BetTheta15}), i.e. $\mu_f$ is a positive measure. Our example of Section~\ref{not_cm} shows that this is not true in dimension $d=2$ for $L_m=\Lambda_1$. Indeed, considering, for $\varepsilon\geq 0$, the following potential
\begin{equation}\label{deffepsilon}
f_\varepsilon(r):=\frac{6}{r^4}-\frac{2(2+\varepsilon)}{r^3}+\frac{1+\varepsilon}{r^2},
\end{equation}
such that $\mu_{f_\varepsilon}$ is negative on $(1,1+\varepsilon)$, we numerically show that there exists $\varepsilon_0\approx 1.148$ such that for any $0< \varepsilon<\varepsilon_0$, the triangular lattice $\Lambda_1$ is the unique minimizer of $L\mapsto E_{f_\varepsilon}[\lambda L]$ on $D_{\mathcal{L}_d^\circ}$ for any fixed $\lambda>0$. We therefore observe that the fact that $\mathcal{L}^{-1}f$ is negative in a small interval, and if the measure of this negative part is not too big with respect to the measure of its positive part, the triangular lattice stays optimal at all scales for $E_f$. This results leads to another one that is more general, concerning the measure of the negative and positive parts of $f$:
\begin{question}``How negative'' can the inverse Laplace transform $\mu_f:=\mathcal L^{-1}f$ be, while preserving the property that the minimum $\min_{L\in D_{\mathcal L_2^\circ}}E_f[\lambda L]$ is achieved at all $\lambda>0$ by the triangular lattice? For example, if $\mu_f=\mu_f^+-\mu_f^-$ with $\mu_f^\pm$ positive finite measures, then we can ask more precisely: how large can the ratio of $R_f:=\int d\mu_f^-/\int d\mu_f^+$ be while preserving the above property?
\end{question}

An answer to the above question would be of a great interest for understanding the following problem that we leave open:
\begin{question}
What is the largest class of functions $f$ such that for any $\lambda>0$, the triangular lattice $\Lambda_1$ is the unique minimizer of $L\mapsto E_f[\lambda L]$ on $D_{\mathcal L_2^\circ}$?
\end{question}
Note that a similar question arises also in general dimension $d\ge 2$, and is also open in that case. Regarding the case $d=1$, we also do not know the complete answer, but more results are available. See Section \ref{sec:1d} for a discussion.

\medskip

Concerning our example \eqref{deffepsilon}, it is also natural to ask whether the triangular lattice is a minimizer, for the same potential, also amongst general configurations. More generally, the following question is completely open:
\begin{question}
Is there any non-completely monotone $f$ for which the minimizer of $E_f[\lambda \mathcal{C}]$ is the triangular lattice for all $\lambda>0$, among periodic configurations $\mathcal{C}$ of unit density?
\end{question}
We note again, by Theorem \ref{MainTh1}, that one-well potentials $f$ are not a good candidate in the above question, and we do not know of a good underlying principle which would allow to find a candidate. It is for instance not clear if $f_\varepsilon$ defined by \eqref{deffepsilon} could satisfy this property. However, we conjecture that the local minimality of $\Lambda_1$ for $E_{f_\varepsilon}$ should hold among periodic configurations of fixed density by a small modification of \cite[Cor. 4.5]{Coulangeon:2010uq}.

\medskip

Also note that, while no proof of any lattice-like configuration being \emph{universally optimal} is available in the literature, our proof in Section \ref{not_cm} shows that the property of a triangular lattice to be a \emph{minimizer of the energy amongst lattices at any fixed scale}, usually conjectured in universal optimality frameworks, \emph{could hold beyond that setting}, i.e. for non-completely monotone functions. Therefore an answer to the above question would be important for understanding/testing the relevance of the concept of universal optimality, and we leave the following general question open:

\begin{question}[Class of functions satisfying universal optimality] What is the largest class of functions $f$ such that the unique minimizer $L_m$ of $L\mapsto \theta_L(\alpha)$ on $D_{\mathcal{L}_d^\circ}$, for any $\alpha>0$ is the unique minimizer of $L\mapsto E_f[\lambda \mathcal{C}]$, for any $\lambda>0$, among periodic configurations $\mathcal{C}$ of unit density?

\end{question}

\medskip

Furthermore, about the minimality of some lattice at all scales, the following question can be asked:
\begin{question}\label{q3d} For $d=3$ does there exist a continuous potential $f$ and a lattice $L_f$ for which $\min_{L\in D_{\mathcal L_3^\circ}}E_f[\lambda L]$ is achieved by $L_f$ at all scales $\lambda>0$?
\end{question}
The reason why the above is not clear is that in $d=3$ the analogue of the Rankin-Montgomery theorem \cite{Rankin, Mont}, is false, as noted for example in \cite[p. 117]{SarStromb}, i.e. no lattice can achieve the minimum for the theta functions at all scales. This implies that if a function $f$ as required in Question~\ref{q3d} exists, then it cannot have positive inverse Laplace transform. On the other hand, for the best packing problem, Hales' result \cite{hales2005proof} implies that the unique minimizer amongst lattices is given by the FCC lattice at all scales, and it is well known that this minimization problem is the limit $s\to+\infty$ of the minimization for $f(r)=r^{-s}$.

\subsection{Minimization for one-well potentials $f$ without density constraint}

The second problem we are setting in this paper is the global optimality of some lattices for $E_f$ on $D_{\mathcal{L}_d}$ (without a density restriction) where $f$ is a one-well potentials, i.e. $f:(0,+\infty)\to \R\cup \{+\infty\}$ that is decreasing on $(0,a)$ and increasing on $(a,+\infty)$ for some $a>0$. In \cite{BetTheta15,Crystal}, three examples of one-well potentials have been studied, where the global  optimality of a triangular lattice was proved in dimension $2$ (on $D_{\mathcal{L}_2^\circ}$ for the two first and among all configurations, in the thermodynamic limit sense, for the third one). Those are:
\begin{enumerate}
\item Lennard-Jones type potentials $f_{\vec a,\vec x}^{LJ}$ with parameters $\vec a=(a_1,a_2)\in (0,+\infty)^2$ and $\vec x=(x_1,x_2)$, $x_2>x_1>2$, such that, for any $r>0$,
\begin{equation}
f_{\vec a,\vec x}^{LJ}(r)=\frac{a_2}{r^{x_2/2}}-\frac{a_1}{r^{x_1/2}} \quad \textnormal{with}\quad \pi^{-\frac{x_2}{2}}\Gamma\left(\frac{x_2}{2}\right)\frac{x_2}{2}\leq \pi^{-\frac{x_1}{2}}\Gamma\left(\frac{x_1}{2}\right)\frac{x_1}{2}.
\end{equation}
The classical Lennard-Jones potential is $r\mapsto f_{\vec a,\vec x}^{LJ}(r^2)$ with $x_1=6, x_2=12$ in our notation. The exponent $x_1=6$ is justified equals the long-range behavior of the Van der Waals interaction (see e.g. \cite[p. 10]{Kaplan}), whereas we do not know a good physical intuition behind the choice $x_2=12$. Note that $f_{\vec a,\vec x}^{LJ}$ is admissible in dimension $d$ according to our definition, if and only if $x_1>d$.
\item Differences of (three-dimensional) Yukawa potentials $f_{\vec a,\vec x}^Y$ with parameters $\vec a=(a_1,a_2)$, $0<a_1<a_2$ and $\vec x=(x_1,x_2)$, $0<x_1<x_2$, such that, for any $r>0$,
\begin{equation}
f_{\vec a, \vec x}^Y(r)=\frac{a_2 e^{-x_2 r}-a_1 e^{-x_1 r}}{r} \quad \textnormal{with} \quad \frac{a_1\left( a_1x_2+x_1(a_2-a_1)\pi \right)}{a_2x_2\left(a_1+(a_2-a_1)\pi\right)}e^{\left(1-\frac{x_1}{x_2}  \right)\left( \frac{a_2}{a_1}-1 \right)\pi}\geq 1.
\end{equation}
This type of interacting potential arises in physics. For example, it turns out that Neumann \cite{Neumann} has shown that a linear combinations of Yukawa potentials are the most general laws ensuring the stability of electric charges.
\item Abstract potentials $V_\alpha$ described in \cite{Crystal}, such that, for $\alpha$ small enough, there is a large repulsion at $0$ of order $1/\alpha$, a well with width of order $\alpha$ and behavior at infinity controlled by $r\mapsto \alpha r^{-7}$. 
\end{enumerate}
We first focus on Lennard-Jones type potentials and we will write, in the following two results,
\begin{equation}\label{defLJ}
f(r)=f_{\vec a,\vec x}^{LJ}(r)=\frac{a_2}{r^{x_2/2}}-\frac{a_1}{r^{x_1/2}}, \quad x_2>x_1>d,\quad (a_1,a_2)\in (0,+\infty).
\end{equation}
In this case, the lattice energy of a Bravais lattice $L\in \mathcal{L}_d$ is given by
\begin{equation}
E_f[L]=a_2\zeta_L(x_2)-a_1\zeta_L(x_1),
\end{equation}
where the Epstein zeta function $\zeta_L$ is defined by \eqref{def-epsteinzeta}.

\medskip

These potentials arise naturally in physics models of matter (see e.g. \cite{mie1903kinetischen,LJonesPhasediagram,Kaplan,LJonesnoblegas}) in the case of the Born-Oppenheimer adiabatic approximation of the interaction energy where the electrons effect is neglected and the energy is reduced to the atomic interaction of the nuclei (see e.g. \cite[p. 33]{CondensMatter}). Thus the potential energy of the system is expressed in terms of many-body interactions and the simplest case -- which is also relevant in many situations (see e.g. \cite[p. 945]{CondensMatter}) -- is the one in which the energy is a sum of $2$-body interaction potentials. Lennard-Jones type potentials also appear in social aggregation model \cite{MEKBS}.

\medskip

In dimension $d=2$ (resp. $d=3$), the global minimizer of $E_f$ on $D_{\mathcal{L}_d}$ is expected to be a triangular lattice (resp. a FCC lattice) for all $x_2>x_1>d$, as conjectured in \cite{BetTheta15,Beterloc,Beterminlocal3d} from numerical evidences and local optimality results. In the following result, proved in Proposition \ref{prop-LJglobal}.(1) and Proposition \ref{prop:equiv} below, we show that the problem of minimizing $E_f$ on the space of all Bravais lattices $D_{\mathcal{L}_d}$ can be reduced to a minimization problem on the space of lattices with unit density $D_{\mathcal{L}_d^\circ}$. In particular, the shape of the global minimizer of $E_f$ does not depend on $a_1,a_2$, which we have already observed in \cite{BetTheta15}. Furthermore, inspired by the $d\in \{2,3\}$ cases where the minimizer of $E_f$ on $D_{\mathcal{L}_d}$ seems to be the same for all $x_2>x_1>d$, we give different statements that are equivalent to this optimality for all the parameters $x_1,x_2$.

\begin{theorem}[Minimality for the Lennard-Jones energy]\label{MainTh2}
Let $d<x_1<x_2$, $(a_1,a_2)\in (0,+\infty)^2$ and $L_0=\lambda \Lambda$ where $\Lambda \in \mathcal{L}_d^\circ$. Then for $f$ a Lennard-Jones type potential as in \eqref{defLJ}, $L_0$ is a global minimizer of $E_f$ on $D_{\mathcal{L}_d}$ if and only if $\Lambda $ is a minimizer on $D_{\mathcal{L}_d^\circ}$ of $\tilde{E}_f$ defined by
\begin{equation}
\tilde{E}_f[L]:=\frac{\zeta_L(x_2)^{x_1}}{\zeta_L(x_1)^{x_2}}.
\end{equation}

Furthermore, if $L\in \mathcal L_d^\circ$ then we define functions $H_L,h_L:(d,+\infty)\to \R$ by
$$
H_L(x):=\frac{1}{x}\log\left(  \frac{\zeta_L(x)}{\zeta_{\Lambda }(x)}\right), \quad
h_L(x):=-\log \zeta_L(x) +x\frac{\partial_x\zeta_L(x)}{\zeta_L(x)}.
$$
The following conditions are equivalent:
\begin{enumerate}
\item For any $x_2>x_1>d$, the lattice $L_0 $ is the unique minimizer of $E_f$ on $D_{\mathcal L_d}$.
\item  For any $x_2>x_1>d$, the lattice $\Lambda $ is the unique minimizer of $\tilde{E}_f$ on $D_{\mathcal L_2^\circ}$.
\item For any Bravais lattice $L\in \mathcal L_d^\circ\setminus\{ \Lambda \}$, the function $H_L$ is strictly increasing on $(d,+\infty)$.
\item For any $x>d$, $\Lambda $ is the unique minimizer of $h_L(x)$ on $D_{\mathcal{L}_d^\circ}$.
\end{enumerate}
\end{theorem}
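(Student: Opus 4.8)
The plan is to establish the main equivalence first, then chain the four conditions together using calculus on the one-parameter families of Epstein zeta functions. For the main ``if and only if'' statement: write $L_0 = \lambda\Lambda$ with $\Lambda\in\mathcal L_d^\circ$ and use the scaling relation $\zeta_{\lambda\Lambda}(s) = \lambda^{-s}\zeta_\Lambda(s)$, so that $E_f[\lambda\Lambda] = a_2\lambda^{-x_2}\zeta_\Lambda(x_2) - a_1\lambda^{-x_1}\zeta_\Lambda(x_1)$. For fixed $\Lambda$ I would minimize this explicitly over $\lambda>0$: setting the derivative in $\lambda$ to zero gives a unique critical point $\lambda_*(\Lambda)$, and substituting back produces a value of the form (negative constant depending only on $a_1,a_2,x_1,x_2$) times $\zeta_\Lambda(x_1)^{x_2/(x_2-x_1)}\,\zeta_\Lambda(x_2)^{-x_1/(x_2-x_1)}$. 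Hence minimizing $E_f$ over all of $D_{\mathcal L_d}$ amounts to maximizing $\zeta_\Lambda(x_1)^{x_2}\zeta_\Lambda(x_2)^{-x_1}$ over $\Lambda\in\mathcal L_d^\circ$, equivalently minimizing $\tilde E_f[L] = \zeta_L(x_2)^{x_1}/\zeta_L(x_1)^{x_2}$. This also immediately shows the optimal shape is independent of $a_1,a_2$.

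For the chain of equivalences: $(1)\Leftrightarrow(2)$ is just the statement already proved, applied for every pair $(x_1,x_2)$, noting uniqueness transfers because $\lambda_*(\Lambda)$ is uniquely determined. For $(2)\Leftrightarrow(3)$ I would take logarithms: $\tilde E_f[L] = \exp\bigl(x_1\log\zeta_L(x_2) - x_2\log\zeta_L(x_1)\bigr)$, so $\Lambda$ uniquely minimizes $\tilde E_f$ for all $x_2>x_1>d$ iff for every $L\neq\Lambda$ and every such pair, $x_1\log\zeta_L(x_2) - x_2\log\zeta_L(x_1) > x_1\log\zeta_\Lambda(x_2) - x_2\log\zeta_\Lambda(x_1)$, which rearranges exactly to $H_L(x_2) > H_L(x_1)$. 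Since this must hold for all $d<x_1<x_2$, it is equivalent to $H_L$ being strictly increasing on $(d,+\infty)$. For $(3)\Leftrightarrow(4)$ I would differentiate: $H_L'(x) = \frac{1}{x^2}\bigl(h_L(x) - h_\Lambda(x)\bigr)$ after a short computation (writing $H_L(x) = \frac1x[\log\zeta_L(x) - \log\zeta_\Lambda(x)]$ and using $\frac{d}{dx}\log\zeta_L(x) = \partial_x\zeta_L(x)/\zeta_L(x)$), so $H_L$ strictly increasing for all $L\neq\Lambda$ is equivalent to $h_L(x) > h_\Lambda(x)$ for all $x>d$ and all $L\neq\Lambda$, i.e. $\Lambda$ is the unique minimizer of $L\mapsto h_L(x)$ on $D_{\mathcal L_d^\circ}$ for every $x>d$.

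A couple of technical points need care. One must justify that $\partial_x\zeta_L(x)$ exists and that differentiation under the sum defining $\zeta_L$ is legitimate for $x>d$; this follows from local uniform convergence of $\sum_{p\in L\setminus\{0\}}|p|^{-x}\log|p|$ on compact subsets of $(d,+\infty)$, a standard estimate. One also needs $\zeta_L(x)>0$ so the logarithms are defined, which is clear. For the uniqueness bookkeeping in $(1)\Leftrightarrow(2)$, the point is that two lattices $L_0, L_0'$ in $D_{\mathcal L_d}$ give the same $E_f$-value iff their underlying unit-density shapes give the same $\tilde E_f$-value (because the optimal rescaling is unique), so strict minimality is preserved in both directions.

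The main obstacle is not any single step but getting the exponents and the sign of the constant exactly right in the explicit $\lambda$-minimization, and then threading the strictness/uniqueness statements consistently through all four reformulations — in particular making sure the quantifier ``for all $d<x_1<x_2$'' in $(3)$ genuinely collapses to monotonicity of $H_L$ on the whole interval rather than merely on pairs, which it does precisely because the inequality $H_L(x_2)>H_L(x_1)$ is required for every admissible pair.
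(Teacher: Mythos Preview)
Your proposal is correct and follows essentially the same route as the paper: the paper also minimizes $E_f[\lambda L]$ explicitly in $\lambda$ to obtain a negative constant times $\zeta_L(x_1)^{x_2/(x_2-x_1)}\zeta_L(x_2)^{-x_1/(x_2-x_1)}$, deduces the equivalence with minimizing $\tilde E_f$, and then proves $(2)\Leftrightarrow(3)$ by the same logarithmic rearrangement and $(3)\Leftrightarrow(4)$ via the identity $H_L'(x)=\tfrac{1}{x^2}(h_L(x)-h_\Lambda(x))$. Your added remarks on differentiability of $\zeta_L$ and the uniqueness bookkeeping are more explicit than what the paper writes but entirely in the same spirit.
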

\begin{rmk}
About the case $d=2$, (1) of Theorem~\ref{MainTh2} has been proved in \cite[Thm. 1.2.B.2 and Lem. 6.17]{BetTheta15} on the small interval $I=(2,2\psi^{-1}(\log \pi)-2)$ where $\psi$ is the digamma function and $2\psi^{-1}(\log \pi)-2\approx 5.256$. 
\end{rmk}
We conjecture that the equivalent statements of Theorem \ref{MainTh2} hold true in dimensions $d\in \{ 2,3,4,8,24\}$ for $\Lambda\in \{\Lambda_1,\mathsf{D}_3,\mathsf{D}_4,\mathsf{E}_8,\Lambda_{24}\}$. In dimensions $d\in \{2,3\}$, new numerical evidence for $\tilde{E}_f$ supporting our conjecture is included in Figures \ref{H2to10square}, \ref{fig:ratio}, \ref{Z3vsFCCBCC} and \ref{fig-BCCFCC}. Next, we prove the global optimality of these lattices for large values of the parameters and find their asymptotically optimal scaling as $x_1,x_2\to +\infty$. For a definition and some known results on the \emph{kissing number} (also called \emph{coordination number} in crystallography) of a Bravais lattice $L\in\mathcal L_d$, denoted $\tau(L)$, see Definition \ref{def:kissing} and Remark \ref{rmk:kissing} below. The following result is proved in Proposition \ref{prop-LJglobal}.(2) and Proposition \ref{prop:asymin} below.

\begin{theorem}[Global minimizer of the Lennard-Jones energy for large parameters]\label{MainTh3}
For $L,\Lambda\in\mathcal L_d$ and $f$ a Lennard-Jones type potential as in \eqref{defLJ} there holds 
\begin{equation}\label{limit-kissing}
\lim_{x_1\to +\infty}\lim_{x_2\to +\infty}\frac{\displaystyle \min_{\lambda>0}E_f[\lambda\Lambda]}{\displaystyle \min_{\lambda>0}E_f[\lambda L]}=\frac{\tau(\Lambda)}{\tau(L)}\ .
\end{equation}
As the above minimum values are negative, if $\tau(\Lambda)$ uniquely realizes the optimal kissing number $k(d)$ amongst lattices, then there exists $x_0=x_0(d)$ such that for any $x_2>x_1>x_0$, the unique minimizer of $E_f$ on $D_{\mathcal{L}_d}$ has shape $[\Lambda]$. In particular, this holds for the cases $(d,\Lambda)\in\{(2,\mathsf{A}_2), (3,\mathsf{D}_3), (4,\mathsf{D}_4),(8,\mathsf{E}_8),(24,\Lambda_{24})\}$.

\medskip

If the minimum of $f(r)$ is achieved at $r_f=r_f(x_1,x_2,a_1,a_2)>0$, and $\lambda^{L_2}_0=\lambda^{L_2}_0(x_1,x_2,a_1,a_2)>0$ is the factor such that $\lambda^{L_2}_0 L_2$, $L_2\in \mathcal{L}_d^1$, realizes the minimum energy among lattices of the same shape as $L_2$, i.e. $\min_{\lambda>0} E_f[\lambda L_2]=E_f[\lambda^{L_2}_0 L_2]$, then the following limit exists and satisfies
\begin{equation}\label{limit_r0intro}
\lim_{\substack{x_1,x_2\to +\infty\\x_1<x_2, r_f=r_0}}\lambda^{L_2}_0(x_1,x_2,a_1,a_2)=\sqrt{r_0}.
\end{equation}
\end{theorem}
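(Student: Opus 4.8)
All three displayed statements rest on the same explicit minimization over $\lambda$. Since $\zeta_{\lambda L}(s)=\lambda^{-s}\zeta_L(s)$, one has $E_f[\lambda L]=a_2\lambda^{-x_2}\zeta_L(x_2)-a_1\lambda^{-x_1}\zeta_L(x_1)$, which tends to $+\infty$ as $\lambda\to 0^+$ and to $0^-$ as $\lambda\to+\infty$; differentiating in $\lambda$ shows it has a unique critical point, which is therefore the global minimum:
\[
\lambda^L_0=\left(\frac{x_2a_2\,\zeta_L(x_2)}{x_1a_1\,\zeta_L(x_1)}\right)^{\frac1{x_2-x_1}},\qquad
\min_{\lambda>0}E_f[\lambda L]=-C(x_1,x_2,a_1,a_2)\,\zeta_L(x_1)^{\frac{x_2}{x_2-x_1}}\,\zeta_L(x_2)^{-\frac{x_1}{x_2-x_1}},
\]
with $C>0$ independent of $L$, the remaining factor being $\tilde E_f[L]^{-1/(x_2-x_1)}$, which depends only on the shape $[L]$ (consistently with Theorem~\ref{MainTh2}).

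To obtain \eqref{limit-kissing}, note that $C$ cancels in the ratio, which therefore equals $\big(\zeta_\Lambda(x_1)/\zeta_L(x_1)\big)^{x_2/(x_2-x_1)}\big(\zeta_\Lambda(x_2)/\zeta_L(x_2)\big)^{-x_1/(x_2-x_1)}$. Fixing $x_1$ and letting $x_2\to+\infty$, I would use that for any lattice $M$ with shortest vector length $r_M$ one has $\zeta_M(s)\,r_M^{\,s}\to\tau(M)$ (the terms from vectors of length $>r_M$ vanish), hence $\zeta_M(s)^{1/s}\to r_M^{-1}$; taking logarithms, the exponents obey $\tfrac{x_2}{x_2-x_1}\to1$, $\tfrac{x_1}{x_2-x_1}\to0$, $\tfrac{x_1x_2}{x_2-x_1}\to x_1$, which yields the inner limit $\zeta_\Lambda(x_1)r_\Lambda^{x_1}/\big(\zeta_L(x_1)r_L^{x_1}\big)$. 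Letting then $x_1\to+\infty$ and invoking the same convergence once more produces $\tau(\Lambda)/\tau(L)$. Since both minimal values are negative, a ratio $>1$ means the shape $[\Lambda]$ has the strictly smaller minimal energy; the quantitative ``$\exists\,x_0(d)$'' statement is proved next.

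By Proposition~\ref{prop-LJglobal} a global minimizer $\lambda\Lambda_0\in D_{\mathcal L_d}$ of $E_f$ exists, and by the formula above its shape maximizes $\Phi_L:=\zeta_L(x_1)^{x_2/(x_2-x_1)}\zeta_L(x_2)^{-x_1/(x_2-x_1)}$ over $D_{\mathcal L_d^\circ}$ (equivalently, minimizes $\tilde E_f$). I would show that for $x_1,x_2$ large the unique maximizer is $[\Lambda]$, using two elementary bounds (valid in the $\mathcal L_d^1$ normalization, under which $\Phi_L$ is unchanged): $\Phi_L\ge\zeta_L(x_1)\ge\tau(L)$, from monotonicity of $\zeta_L$; and $\Phi_L\le e^{-h_L(x_1)}$ for every $x_2>x_1$, from the tangent-line inequality for the convex function $s\mapsto\log\zeta_L(s)$, with $h_L$ as in Theorem~\ref{MainTh2}; together with the standard estimate that $L$ has at most $c_d(1+t)^d$ vectors of length $\le t\,r_L$. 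One then covers the moduli space of shapes by: a neighborhood of its boundary at infinity, where $L$ degenerates to a lower-rank configuration, so $\tau(L)<k(d)$, $\zeta_L$ stays bounded and $e^{-h_L(x_1)}<k(d)$ for $x_1$ large; a region bounded away from $[\Lambda]$, where upper semicontinuity of $\tau$ confines the high-kissing-number competitors near $[\Lambda]$ and the vector-count estimate forces $\zeta_L(x_1)\to\tau(L)\le k(d)-1$ uniformly, hence $\sup_L\Phi_L<k(d)$; and a punctured neighborhood of $[\Lambda]$, where, $[\Lambda]$ being a symmetric and thus critical lattice for every $L\mapsto\zeta_L(s)$, a rescaled expansion (set $[L]=[\Lambda]+v/x_1$, so that $|p(L)|^{-x_1}$ converges to an exponential in $\langle\nabla_L\log|p|,v\rangle$) shows that $[\Lambda]$ is a strict local maximum of $\Phi_L$ provided the minimal-vector length-gradients $\nabla_L\log|p(L)|$ span the tangent space of the moduli space, which holds because $\Lambda$ is perfect (hence infinitesimally rigid modulo scaling). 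As $\Phi_\Lambda\ge\zeta_\Lambda(x_1)\to k(d)$ while $\Phi_L$ stays below $k(d)$ off a neighborhood of $[\Lambda]$, combining these regions gives the required $x_0(d)$.

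Finally, for \eqref{limit_r0intro}: the minimum point of $f$ is $r_f=\big(x_2a_2/(x_1a_1)\big)^{2/(x_2-x_1)}$, so from the formula for $\lambda^L_0$ above, $\lambda^{L_2}_0=\sqrt{r_f}\,\big(\zeta_{L_2}(x_2)/\zeta_{L_2}(x_1)\big)^{1/(x_2-x_1)}$, and under $r_f=r_0$,
\[
\lambda^{L_2}_0=\sqrt{r_0}\;\exp\!\left(\frac{\log\zeta_{L_2}(x_2)-\log\zeta_{L_2}(x_1)}{x_2-x_1}\right).
\]
By the mean value theorem the exponent equals $\zeta_{L_2}'(\xi)/\zeta_{L_2}(\xi)$ for some $\xi\in(x_1,x_2)$; since $x_1\to+\infty$ forces $\xi\to+\infty$ and, as $L_2\in\mathcal L_d^1$, $\zeta_{L_2}'(\xi)/\zeta_{L_2}(\xi)=-\big(\sum_{|p|>1}|p|^{-\xi}\log|p|\big)/\zeta_{L_2}(\xi)\to0$, the exponent tends to $0$ and $\lambda^{L_2}_0\to\sqrt{r_0}$. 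The main obstacle is the last step of the previous paragraph: establishing the strict local maximality of $\Phi_L$ at $[\Lambda]$ for large exponents — i.e.\ the sign of the relevant Hessian, which is where perfectness/eutaxy of $\Lambda$ enters — and the uniform bookkeeping needed to patch the regions together over the non-compact moduli space; the rest is the explicit calculus sketched above.
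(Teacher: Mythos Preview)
Your derivation of the explicit optimal scale $\lambda_0^L$, of the formula for $\min_{\lambda}E_f[\lambda L]$, and of the relation $\lambda_0^{L_2}=\sqrt{r_f}\,(\zeta_{L_2}(x_2)/\zeta_{L_2}(x_1))^{1/(x_2-x_1)}$ matches the paper exactly (Proposition~\ref{prop-LJglobal}). For the double limit \eqref{limit-kissing} the paper simply normalizes $L,\Lambda\in\mathcal L_d^1$, so that $r_L=r_\Lambda=1$ and your extra factor $r_\Lambda^{x_1}/r_L^{x_1}$ disappears; after that, both arguments are the same termwise computation $\zeta_M(s)\to\tau(M)$. For \eqref{limit_r0intro} the paper just writes the limit as $\sqrt{r_0}\cdot 1$ using $\zeta_{L_2}(x)\to\#L_2^{(1)}$; your mean-value argument is a cleaner justification, since it also covers sequences with $x_2-x_1\to 0$, which the paper's line does not explicitly address.

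The one substantive divergence is the ``$\exists\,x_0(d)$'' clause. The paper disposes of it in a single sentence (negativity of the minimum plus Remark~\ref{rmk:kissing}), effectively reading it off from the pointwise limit \eqref{limit-kissing}. You instead observe that a pointwise ratio limit does not by itself yield a uniform threshold over the non-compact moduli space $D_{\mathcal L_d^\circ}$, and you supply a genuine compactness argument: the sandwich $\tau(L)\le\zeta_L(x_1)\le\Phi_L\le e^{-h_L(x_1)}$ (from monotonicity of $\zeta_L$ on $\mathcal L_d^1$ and convexity of $\log\zeta_L$), a boundary/cusp estimate, upper semicontinuity of $\tau$, and a local analysis near $[\Lambda]$ using perfectness of $\Lambda$. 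This is strictly more than the paper proves; the paper's argument should be read as establishing only the pointwise comparison $[\Lambda]$ vs.\ each fixed $[L]$, while your outline actually attacks the global statement. Your identification of the local step (strict maximality of $\Phi_L$ at $[\Lambda]$ for large exponents, via the span of the minimal-vector gradients) as the crux is accurate, and your sketch is sound, though it would need the details you flag to be made complete.
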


This result goes in a similar direction as Theil's work \cite{Crystal}, valid for general (not necessarily lattice-like) configurations: increasing the parameters makes at the same time the repulsion near the origin larger, the decay at infinity faster and the well of $f$ narrower. Thus, in the limit, the energy takes only into account the nearest-neighbours, that is why the lattices achieving the optimal kissing number $k(d)$ are globally optimal and the length of these lattices tend to the square root of the value of the minimizer of $f$ as in \eqref{limit_r0intro}.
\begin{rmk}
In the three-dimensional case, the phase diagram of the classical Lennard-Jones energy $f(r^2)=r^{-12}-r^{-6}$ has been numerically studied in detail in \cite{LJonesPhasediagram,StillingerLJ}. In particular, by \cite{StillingerLJ}, the optimizer of $E_f$ amongst all configurations seems to be the hexagonal close-packing (HCP, see \cite[Sect. 6.5]{ConSloanPacking} for a precise definition). The minimization of $E_f$ corresponds to the minimization of the Helmholtz free energy $H=U-TS$ in the regime of small temperature and pressure $T,P\approx 0$, where $U$ is the internal energy and $S$ is the entropy of the system. On the other hand, for high pressures, numerically, it seems that the global minimizer is the FCC lattice. The HCP is not a Bravais lattice but is also known to be a best packing in dimension $d=3$ (see \cite{hales2005proof}) as well as the solid structure of many chemical components. As the number of closest neighbors of any point is the same for HCP and FCC, \eqref{limit-kissing} also gives the optimality of the HCP among all periodic configuration for sufficiently large parameters, as it is also true for the FCC lattice, but distinguishing the difference of energies of these two best packings.
\end{rmk}

Note that if we try to fix a scale constraint while minimizing $E_f$ for one-well potentials, then in general \emph{we will find different minimizers at different scales}. The \emph{local minima} for this problem have been numerically studied in \cite{Beterloc}, proving that, as $\lambda>0$ grows, it is expected that the minimizer changes from a triangular lattice to a rhombic one, then to a square one, then to a rectangular lattice and then to a degenerate rectangular one, and parts of this result are actually rigorously proved for high and low densities in \cite{BetTheta15,Beterloc}.

\medskip

It is therefore interesting to understand numerically and then to prove rigorously what are possible behaviours of the global minimizers e.g. of simple/explicit one-well physically inspired potentials at fixed scale, and what are the principles that govern this behavior:
\begin{question}[Dimension $d=3$ phase diagram]
What is the phase diagram with respect to $\lambda>0$ of $\min_{L\in D_{\mathcal L^\circ_3}} E_f[\lambda L]$, where $f$ is the classical Lennard-Jones potential $f(r^2)=r^{-12}-r^{-6}$? I.e., how are the minimizing lattices at fixed density varying, as $\lambda$ increases?
\end{question}

It would be interesting to know whether the minimizer of $E_f[\lambda L]$ on $D_{\mathcal{L}_2^\circ}$ changes with $\lambda$, similarly to the classical Lennard-Jones case described in \cite{Beterloc}, but for the more general case in which $f$ is the difference of two completely monotone functions (see also \cite{radinclassground} for a study of such $f$ under extra conditions in dimension $d=1$). More precisely, we ask the following question:

\begin{question}
Let $g_1,g_2$ be completely monotone functions and $f$ be defined by $f(r):=g_1(r)-g_2(r)$. Is the phase diagram, with respect to $\lambda>0$, of $\min_{L\in D_{\mathcal{L}_3^\circ}}E_f[\lambda L]$ the same as the one for the classical Lennard-Jones potential $f(r)=r^{-12}-r^{-6}$ described in \cite{Beterloc}, i.e. triangular-rhombic-square-rectangular-degenerate as $\lambda$ increases?
\end{question}

\subsection{Constructions of $f$ which favor the square lattice over the triangular one}

Since the shape of the global minimizer of $E_f$ does not depend on $a_1,a_2$, it turns out that it is possible to construct potential with a well as wide or narrow as we want. Hence, the existence of a one-well potential $f$ such that the global minimizer is not triangular appears as an interesting question. Inspired by Ventevogel's counter-example \cite[Sec. 5]{VN1} given in the one-dimensional setting, we give in Section \ref{subsec-onewell} an example of discontinuous potential (see Figure \ref{CE1}) and an example of continuous potential such that
\begin{equation}\label{min-lambda}
\displaystyle \min_{\lambda>0} E_f\left[\lambda \Z^2\right]=E_f\left[\lambda_0^{\mathbb Z^2} \Z^2\right]<E_f\left[\lambda_0^{\mathsf{A}_2} \mathsf{A}_2\right]=\min_{\lambda>0} E_f[\displaystyle \lambda \mathsf{A}_2],
\end{equation}
where $f$ is defined by $f(r^2)=g(r)$ (change of notation justified by the fact that $g$ is the potential whose derivative is estimated over the lattices distances in the proof). More precisely, the proof of the following proposition is given in Appendix \ref{appendix}.

\begin{proposition}[One-well potential such that the global minimizer is not triangular - explicit example]\label{Mainprop1}
Let $g$ be the continuous potential defined by
$$
g(r):= \left\{ \begin{array}{ll}
\displaystyle \frac{\left( \frac{2}{3} \right)\left(\frac{4}{9}  \right)^{p}}{r^{p}} & \mbox{if $0<r<4/9$}\\
2-3 r & \mbox{if $4/9\leq r\leq 1$}\\
-r^{-4} & \mbox{if $r>1$.}\\
\end{array}\right.
$$
Then there exists $p_0$ such that for any $p>p_0$, \eqref{min-lambda} holds for $f$ defined by $f(r^2):=g(r)$.
\end{proposition}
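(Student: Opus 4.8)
The plan is to work with the one-parameter family $g=g_p$ and show directly that, for $p$ large, the minimum over scales $\lambda>0$ of $E_f[\lambda \Z^2]$ lies strictly below that of $E_f[\lambda \mathsf{A}_2]$, where $f(r^2)=g(r)$. The key structural feature is that $g$ is a genuine one-well potential: on $(0,4/9)$ it is a steep inverse power $\frac{(2/3)(4/9)^p}{r^p}$ tuned so that $g$ is continuous at $r=4/9$ (where the linear piece $2-3r$ equals $2/3$), the affine part $2-3r$ carries the well down to $g(1)=-1$, and then $g(r)=-r^{-4}$ brings the potential back up to $0$ from below on $(1,+\infty)$. So the attractive tail is exactly the completely monotone tail $-r^{-4}$, while the repulsive core is made as narrow and steep as we like by sending $p\to+\infty$.

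First I would split the lattice sum, for a lattice $\mathcal L\in\{\Z^2,\mathsf{A}_2\}$ scaled by $\lambda$, into the contribution of vectors with $\lambda^2|v|^2\le 1$ and those with $\lambda^2|v|^2>1$. For the optimal scale we expect $\lambda$ to be chosen so that the shortest vectors sit right at the bottom of the well, i.e. $\lambda\,(\text{min.\ length})\approx 1$ — this is the finite-range/localization heuristic already invoked in the paper around \eqref{limit_r0intro}. Because $\Z^2$ has $4$ shortest vectors and $\mathsf{A}_2$ has $6$, if we could ignore everything except nearest neighbours then $E_f[\lambda\Z^2]\approx 4\cdot(-1)$ would be \emph{larger} (less negative) than $E_f[\lambda\mathsf{A}_2]\approx 6\cdot(-1)$, which is the wrong direction. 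The whole point of Ventevogel's trick, which I would reproduce here, is that the second-nearest neighbours matter: in $\Z^2$ the next shell (the $4$ vectors of squared length $2$, i.e. the diagonals) is at distance $\sqrt2$ times the first, whereas in $\mathsf{A}_2$ the second shell (6 vectors of squared length $3$) is only at distance $\sqrt3$ times the first. By choosing the scale and exploiting that $g$ on $(4/9,1)$ is affine with a large negative slope (in the rescaled picture the slope blows up relative to the far tail as $p$ grows, since the core narrows), one arranges that for $\Z^2$ the four diagonal vectors can be placed \emph{inside} the well $(4/9,1)$ and thus contribute a substantial additional negative amount, while for $\mathsf{A}_2$ geometry forces the second shell either out past $r=1$ (into the small tail $-r^{-4}$, contributing little) or, if pulled in, forces the first shell up the steep repulsive wall. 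I would make this quantitative by: (i) for $\Z^2$, choosing $\lambda$ so that both the $4$ unit vectors and the $4$ diagonal vectors land in $[4/9,1]$, explicitly bounding $E_f[\lambda\Z^2]$ from above using the affine formula on those $8$ vectors and the convergent tail $\sum -r^{-4}$ for the rest; (ii) for $\mathsf{A}_2$, bounding $E_f[\lambda\mathsf{A}_2]$ from below for every $\lambda>0$, by separately estimating the core ($\ge 0$ always, since $g>0$ there), the well (at most $6$ vectors fit, each $\ge -1$), and the tail ($\ge -C$ for an explicit $C$ coming from $\sum_{v\in\mathsf{A}_2}|v|^{-4}$ type sums, controlled uniformly in $\lambda$ because the number of short vectors is uniformly bounded).

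Concretely, the comparison reduces to an inequality of the shape
\begin{equation*}
\inf_{\lambda>0}E_f[\lambda\Z^2]\ \le\ -4 - 4\delta_p + o_p(1),\qquad \inf_{\lambda>0}E_f[\lambda\mathsf{A}_2]\ \ge\ -6 + \eta - o_p(1),
\end{equation*}
where $\delta_p$ is the extra gain from placing the $\Z^2$-diagonals in the well (bounded below by a positive constant, in fact growing, as $p\to\infty$ because the affine slope in rescaled coordinates grows) and $\eta>0$ is a fixed loss that $\mathsf{A}_2$ cannot avoid — the latter being the technical heart. So for $p$ large the right-hand sides satisfy $-4-4\delta_p < -6+\eta$, giving \eqref{min-lambda}. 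The main obstacle is step (ii): producing a \emph{uniform-in-$\lambda$} lower bound for $E_f[\lambda\mathsf{A}_2]$ that is strictly above the $\Z^2$ value. One has to rule out the possibility that some cleverly chosen non-obvious scale $\lambda$ lets $\mathsf{A}_2$ exploit the well with more than its first shell without paying on the repulsive core; this is where the precise numerology of $\mathsf{A}_2$'s shells (lengths $1,\sqrt3,2,\sqrt7,\dots$ with multiplicities $6,6,6,12,\dots$) and the fact that the well has width only from $4/9$ to $1$ (a ratio $<\sqrt{5}$, so at most two shells can ever coexist in it, and then only for special $\lambda$) must be combined with the steepness of the core to show the net effect is bounded below by $-6+\eta$. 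I expect this case analysis, together with making all the $o_p(1)$ error terms (tail sums over the two lattices) explicit, to be the bulk of the work; the rest is bookkeeping. Since the statement only claims existence of $p_0$, I would not try to optimize constants.
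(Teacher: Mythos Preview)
Your proposal rests on a misreading of how the parameter $p$ enters the potential. The well $g(r)=2-3r$ on $[4/9,1]$ and the tail $g(r)=-r^{-4}$ on $(1,+\infty)$ are \emph{fixed}; only the repulsive core on $(0,4/9)$ depends on $p$. Hence for every $\lambda\ge 4/9$ no lattice vector of $\Z^2$ or $\mathsf{A}_2$ sees the core at all, and $E_f[\lambda\Z^2]$, $E_f[\lambda\mathsf{A}_2]$ are completely $p$-independent there. In particular there is no ``affine slope in rescaled coordinates growing as $p\to\infty$'': the quantity you call $\delta_p$ cannot grow with $p$, and your displayed inequality $-4-4\delta_p<-6+\eta$ has no mechanism to become true for large $p$.

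More seriously, your lower bound $\inf_{\lambda>0}E_f[\lambda\mathsf{A}_2]\ge -6+\eta$ is false by a wide margin. The well has ratio $9/4>\sqrt5$ (not $<\sqrt5$ as you write), so at $\lambda=4/9$ the first \emph{three} shells of $\mathsf{A}_2$ (lengths $1,\sqrt3,2$) sit in $[4/9,1]$, and together with the tail one finds $E_f[(4/9)\mathsf{A}_2]\approx -19.01$. Likewise for $\Z^2$ the minimum over $\lambda\ge4/9$ is achieved at $\lambda=1/\sqrt5$ with \emph{four} shells in the well, giving $\approx -19.11$. The gap is about $0.1$, not of order $1$, and it cannot be seen by a first-plus-second-shell heuristic.

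The paper's proof accordingly proceeds in two disjoint steps. First, for $\lambda\in[4/9,1]$ (and then $\lambda>1$), it performs an explicit, $p$-independent, piecewise computation of $E_f[\lambda\Z^2]$ and $E_f[\lambda\mathsf{A}_2]$, tracking which shells lie in the well on each subinterval, and locates the exact minima above. Second, for $\lambda<4/9$, it uses the core term (together with a Gauss-circle count for the well term) to show that the infimum over $\lambda<4/9$ converges, as $p\to\infty$, to the boundary value at $\lambda=4/9$, which is strictly above the minimum already found on $[4/9,1]$. So the only role of ``$p$ large'' is to push the overall minimum out of the region $\lambda<4/9$; the actual comparison between $\Z^2$ and $\mathsf{A}_2$ is a fixed numerical inequality. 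Your plan would need to be rebuilt around this structure.
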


\begin{figure}[!h]
\centering
\includegraphics[width=9cm]{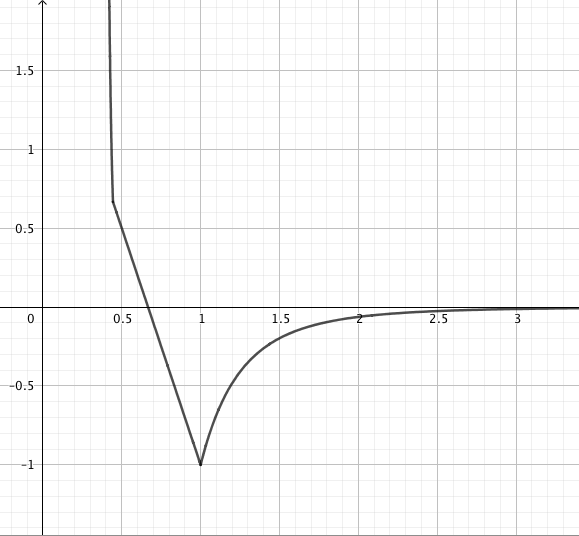} \\
\caption{Plot of function $f$ defined in Proposition \ref{Mainprop1}}
\label{CE1}
\end{figure}
It is also possible to construct a large class of $C^1$-functions $f$ having property \eqref{min-lambda}. More precisely, we have the following result (see Proposition \ref{prop:theil_ZA} for a statement giving a more detailed descriptions of the functions $g$ below).
\begin{theorem}[One-well potential such that the global minimizer is not triangular - general result]\label{MainTh4}
There exists an uncountable class of functions $g:[0,+\infty)\to\R\cup\{+\infty\}$ such that the function defined by $f(r^2)=g(r)$ is admissible, and \eqref{min-lambda} holds.
\end{theorem}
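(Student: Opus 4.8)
The plan is to promote the explicit example of Proposition \ref{Mainprop1} to a robust, open condition, thereby producing an uncountable family. First I would isolate the structural features of the potential $g$ in Proposition \ref{Mainprop1} that are actually used to prove \eqref{min-lambda}: a strong power-law repulsion $c_p r^{-p}$ on $(0,4/9)$ with $p$ large, a \emph{narrow affine well} on an interval $[r_-,r_+]\subset(0,1]$ realizing a negative minimum, and the tail $-r^{-4}$ on $(1,+\infty)$ guaranteeing admissibility (since $x_1=8>d=4$ is the relevant threshold, but here $d=2$ so $-r^{-4}$ is integrable away from the origin) and a controlled long-range contribution. The key mechanism is that, for such a sharply localized well, the energy of $\lambda L$ over competitor lattices $L$ is dominated, after optimizing in $\lambda$, by the number of lattice vectors whose squared length can be tuned into the well — that is, by the first shell — and a second-order expansion around the optimal scale shows $\mathsf{A}_2$ is penalized relative to $\Z^2$. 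Concretely one shows $\min_{\lambda>0}E_f[\lambda L] \approx \tfrac{\tau(L)}{2}\, m_f + (\text{curvature correction depending on the second shell})$, and for $\Z^2$ versus $\mathsf{A}_2$ the curvature/second-shell terms tip the balance, exactly as in Ventevogel's $1$-dimensional argument \cite{VN1}.

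Next I would set up a perturbation stability statement: if $g$ satisfies \eqref{min-lambda} with a \emph{strict} gap, $E_f[\lambda_0^{\mathbb Z^2}\Z^2] < E_f[\lambda_0^{\mathsf A_2}\mathsf A_2] - \delta$ for some $\delta>0$, then any $\tilde g$ with $\tilde g(r^2)=:\tilde f(r^2)$ admissible and with $\|\tilde g - g\|$ small in a suitable norm (say uniform on compacts plus a weighted $L^1$/pointwise decay control at infinity to handle the tail of the lattice sum) also satisfies \eqref{min-lambda}. This follows because both $L\mapsto E_{\tilde f}[\lambda L]$ and the optimal scale $\lambda_0^L$ depend continuously on $\tilde f$ in that norm — uniformly over the relevant compact range of $\lambda$ and over the finitely many lattice shapes that can compete near the minimum, the rest being excluded by the repulsive core (which forces $\lambda$ bounded below) and the decay (which forces $\lambda$ bounded above and makes far shells uniformly negligible). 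Since the open ball of such perturbations is uncountable — e.g. replace the affine piece $2-3r$ on $[4/9,1]$ by $2-3r + \eta\,\varphi(r)$ for a fixed bump $\varphi$ supported in $(4/9,1)$ and all sufficiently small $\eta>0$, or vary the exponent $p$ continuously, or perturb the tail coefficient — one obtains an uncountable class. To get the sharper ``detailed description'' promised via Proposition \ref{prop:theil_ZA}, I would phrase the hypotheses as explicit inequalities on $g$ (lower bound on the core exponent, width and depth bounds on the well, bound on the tail) rather than as an abstract neighborhood, and verify that the example of Proposition \ref{Mainprop1} lies in their interior.

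The main obstacle I anticipate is making the ``only finitely many shapes compete, uniformly'' claim fully rigorous while keeping the perturbation class genuinely large: one must control $E_f[\lambda L]$ for \emph{all} $L\in D_{\mathcal L_2^\circ}$ simultaneously, including very skew lattices, and show their optimized energies stay above the $\Z^2$/$\mathsf A_2$ value by a margin stable under perturbation. The repulsive core of order $r^{-p}$ with $p$ large is the crucial lever here — it makes $E_f[\lambda L]$ blow up uniformly as $\lambda$ shrinks relative to the shortest vector of $L$, and combined with the fact that skew lattices have shortest vectors no longer than $\Z^2$'s at unit density and more short-ish vectors, it forces their energy up; quantifying this uniformly is the technical heart. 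A secondary difficulty is the tail of the lattice sum with $f(r)\sim -r^{-4}$: the sum $\sum_{p\in\lambda L\setminus\{0\}}|p|^{-8}$ is an Epstein zeta $\zeta_{\lambda L}(8)$, which is finite and continuous in $L$ and $\lambda$, so I would simply use known bounds on $\zeta_L(8)$ over $D_{\mathcal L_2^\circ}$ to absorb this term into the error, but one should check the perturbations $\tilde g$ do not spoil the $r^{-4}$-type decay, which is why I would fix the tail shape up to a small multiplicative/additive perturbation rather than allowing arbitrary changes there.
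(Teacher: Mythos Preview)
Your perturbation approach---take the explicit $g$ from Proposition~\ref{Mainprop1}, note the strict gap, and argue that an open neighbourhood of admissible perturbations preserves it---is valid and would give the uncountable family, but it is not what the paper does. The paper's Proposition~\ref{prop:theil_ZA} instead writes down a set of explicit structural inequalities on $g$ (a sign/monotonicity condition, a hard-core lower bound of the form $g(r)\ge C_0 r^{-2}\cdot(\text{integral quantities})$ near $0$, and the key inequality $\min_{\alpha_0\le r\le 1}\bigl(g(r)+g(\sqrt 2\,r)\bigr)\le -\tfrac32 + \tfrac14 E_f[\alpha_0\mathsf A_2^{(\ge2)}]$) and proves \eqref{min-lambda} directly from them. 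The mechanism is not a ``curvature correction'' or second-order expansion: it is a direct shell-by-shell comparison using that $\mathbb Z^2$ has its second shell at $\sqrt 2$ while $\mathsf A_2$ has its second shell at $\sqrt 3$, so $4g(\lambda)+4g(\sqrt 2\,\lambda)$ can be made more negative than $6g(\lambda)$ plus the tail. Their conditions carve out an uncountable region without passing through any single reference potential, which is more constructive than a perturbation ball; your approach is conceptually quicker but less explicit.

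One correction: your ``main obstacle'' is a non-issue. Inequality \eqref{min-lambda} compares only $\mathbb Z^2$ and $\mathsf A_2$; there is no need to control $E_f[\lambda L]$ for all $L\in D_{\mathcal L_2^\circ}$ or to worry about skew lattices. The genuine technical point, which you do identify, is localizing the optimal scales $\lambda_0^{\mathbb Z^2},\lambda_0^{\mathsf A_2}$ to a compact interval uniformly under perturbation; the paper handles this with the hard-core bound \eqref{hardwall} for the lower endpoint and the monotonicity \eqref{incr} for the upper endpoint, and your plan to fix the tail shape and keep $p$ large would achieve the same.
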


Here we have designed potentials that favour a square lattice instead of a triangular lattice. These two results are in the same spirit as the work of Torquato et al. \cite{TorquatoHCb,Torquato09,Torquato1} where a (truncated) potential is designed in such a way that a targeted lattice structure is the minimizer of the interaction energy. Our potential is however not truncated and the proof of Theorem \ref{MainTh4} can be generalized to another $d$-dimensional lattices (see Remark \ref{rmk-generalprinciple}). However, there is nothing in Theorem \ref{MainTh4} that shows what is the global minimum of $E_f$ (it is not necessarily a square lattice).

\medskip

Finally, we mention the following open direction:
\begin{open}[Stability of crystallization phenomena, with respect to perturbations of $f$]
Study and classify natural distances between (or other measures of the size of perturbations of) interaction kernels $f$, with respect to which small perturbations of $f$ can be ensured to preserve the crystallization properties of the kernels, such as the existence and shape of the global minimum amongst periodic configurations.
\end{open}
In fact our counterexample in Section \ref{not_cm} of a non-completely monotone $f$ which is a good candidate for crystallization at all scales is based on a \emph{small-perturbation method}, and the same can be said more generally for the many of the available proofs of crystallization. Thus it seems extremely important to understand what is the right notion of stability of potentials $f$, as hinted at above, even if the known cases are at the moment extremely episodic and prevent us from formulating any more precise question in a compelling way. 

\medskip

We mention, as a simple possible starting point for such studies in dimension $d=1$, the fact that in \cite{hamrickradin}, a potential which is a small (in $C^1$-norm but not in $C^2$-norm) perturbation of a one-well potential was produced, for which $N$-point minimizers converge to a \emph{quasicrystal}. The principle in that case is to ``add small wells'' to $f$ an irregular pattern: this creates the possibility to find, next to the standard periodic configuration, a slightly better (in terms of energy), but non-periodic, ground state.

\subsection*{Plan of the paper.} In Section \ref{sect-notations} we give the precise definitions of the objects we are working with. A survey of one-dimensional results is presented in Section \ref{sec:1d}. Section \ref{sect-allscales} is devoted to the problem of minimizing $E_f$ at fixed scale. In particular, we prove Theorem \ref{MainTh1} in this section. The Lennard-Jones type case is investigated in Section \ref{sectLJ}, where Theorem \ref{MainTh2} and \ref{MainTh3} are proved. Finally, in Section \ref{sect-contrex}, several counter-examples are stated and proved, included Theorem \ref{MainTh4}. Proposition \ref{Mainprop1} is proved in Appendix \ref{appendix}.

\section{Notations and definitions}\label{sect-notations}
Let $\mathcal L_d:=\{A\mathbb Z^d:\ A\in GL(d)\}$ be the space of all $d$-dimensional Bravais lattices and $\mathcal{P}_d:=\{M\in \mathbb R^{d\times d}:\ M^t=M, M\mbox{ positive definite}\}$ the cone of positive definite matrices, in turn identified with the cone of positive definite quadratic forms in $d$ (real) variables. Recall that the link between the above settings is the following: to each $A\mathbb Z^d\in\mathcal L_d$ we can associate $M:=A^tA\in\mathcal{P}_d$ and the quadratic form $M[x]:=x^tMx$ in $d$ variables. Let $\mathcal L_d^\circ$ be the space of all $d$-dimensional lattices of density $1$: $\mathcal L_d^\circ:=\{A\mathbb Z^d:\ A\in SL(d)\}$, and $\mathcal{P}_d^\circ$ the cone of positive definite matrices of determinant $1$ in $d$ variables. Given a Bravais lattice $L\in \mathcal{L}_d$, its dual lattice $L^*$ is defined by $L^*:=\{x\in \R^2 : \forall p\in L, x\cdot p\in \Z \}$.


\medskip

For $L=A\mathbb Z^d\in \mathcal L_d$ or $L\in \mathcal L_d^\circ$, and $f:(0,+\infty)\to \R$, whose associated matrix and quadratic form are given by $M=A^tA\in\mathcal P_d$ or $\mathcal P_d^\circ$, we define
\begin{equation}\label{energy}
E_f[L]:=\sum_{p\in L\setminus\{0\}} f(|p|^2)=\sum_{x\in\mathbb Z^d\setminus\{0\}}f(M[x]).
\end{equation}
In order for the sum $E_f[L]$ to be equal to an absolutely convergent sum for $L\in\mathcal L^d$, we require that if $F:\mathbb R^d\setminus\{0\}\to(-\infty,+\infty]$ is given by $F(x):=f(|x|^2)$, then $F$ is integrable outside any neighborhood of the origin. For this, we consider the following class of $f$:
\begin{definition}[admissible $f$]\label{def:admissible}
Let $k\in\mathbb N$. A function $\varphi_k:\R^d\to\R$ which is constant on each one of the cubes $\frac1k[0,1)^d+\frac1k \vec a$, with $\vec a\in \mathbb Z^d$, will be called a \emph{$k$-coarse} function. 

\medskip

A function $f:(0,+\infty)\to\R$ is called \emph{admissible in dimension $d$} if for each $k\in\mathbb N$ there exists $k$-coarse functions $\varphi_k^+,\varphi_k^-\in L^1(\mathbb R^d)$ such that $\varphi_k^-(x)\le f(|x|^2)\le \varphi_k^+(x)$ for all $x\in\mathbb R^d\setminus[-1/k,1/k]^d$. 

\medskip

A function $f:(0,+\infty)\to\R$ is called \emph{weakly admissible in dimension $d$} if 
\begin{equation}\label{admiss_eq}
 r^{d-1}f(r^2)\in\bigcap_{\epsilon>0}L^1([\epsilon,+\infty)).
\end{equation}

\end{definition}
\begin{rmk}Note that admissibility implies weak admissibility. Moreover, for $f:(0,+\infty)\to\R$ which is monotone (in particular for $f$ that has positive inverse Laplace transform) or which satisfies $C^1$-bounds away from the origin, the other implication holds: if such $f$ is weakly admissible in dimension $d$ then $f$ is admissible in dimension $d$. The weak admissibility condition \eqref{admiss_eq} in general does not guarantee that $E_f[L]$ is absolutely summable for $L\in\mathcal L^d$, as such $f$ could blow up at $r^2$ corresponding to the distances in lattice $L$, and this is why the more complicated condition in terms of $k$-coarse functions seems justified. On the other hand the simpler condition \eqref{admiss_eq} will suffice for the very regular class of potentials which are treated in this paper.
\end{rmk}

The above functional $E_f$ is invariant under rotations $R\in O(d)$. Indeed, spaces $\mathcal L_d, \mathcal L_d^\circ$ (respectively, $\mathcal P_d,\mathcal P_d^\circ$) have natural actions by rotations, defined as follows. For $R\in O(d)$ and $A\mathbb Z\in \mathcal L_d$ (respectively, $M\in\mathcal P_d$), we define $R\cdot L:=RA\mathbb Z$ (respectively, $R\cdot M:=R^tMR=R^{-1}MR$, which is the induced action under the identification $M=A^tA$, because then $R\cdot M=(RA)^t(RA)$). As $E_f$ only depends on $M[x]$, which in turn is invariant under our $O(d)$-action, we find that $E_f[L]=E_f[R\cdot L]$ for all $R\in O(d)$, as claimed. We denote by $D_{\mathcal L_d},D_{\mathcal L_d^\circ},D_{\mathcal P_d}, D_{\mathcal P_d^\circ}$ fundamental domains for the above actions. Furthermore:

\begin{itemize}
\item If $L$ is a lattice, let $r_0(L):=0<r_1(L)<r_2(L)<\cdots<r_n(L)<\cdots$ be an enumeration of the set of distances $\{|v|:v\in L\}$. In this case we call $L^{(j)}:=\{v\in L:\ |v|=r_j\}$ the $j$-th shell of $L$. We also denote $L^{(j,k)}:=L^{(j)}\cup L^{(k)}$ and $L^{(\ge j)}:=\bigcup_{k\ge j}L^{(k)}$.
\item Let $\mathcal L_d^1$ be the space of all $d$-dimensional lattices whose shortest nonzero vector has lenght $1$. In terms of the above notation, $\mathcal L_d^1:=\{L\in \mathcal L_d:\ r_1=1\}$.
\end{itemize}

\begin{definition}[one-well potentials]\label{defn:onewell}We call $f:(0,+\infty)\to \mathbb R\cup\{+\infty\}$ a \emph{one-well potential} if there exists $a>0$ such that $f$ is nonincreasing on $(0,a)$ and nondecreasing on $(a,+\infty)$.
\end{definition}

\begin{definition}[shape of a lattice]\label{defn:shape}
Define an equivalence relation $\sim$ on $\mathcal L_d$ by stating that for $L,L^\prime\in\mathcal L_d$ there holds $L\sim L^\prime$ if there exists $\lambda>0$ and $R\in O(d)$ such that $L=\lambda R\cdot L^\prime$. The intersection of the equivalence class $[L]$ of $L\in\mathcal L_d$ under $\sim$, with the fundamental domain $D_{\mathcal L_d^\circ}$ of $\mathcal L_d^\circ$ under  $O(d)$ action is called the \emph{shape} of $L$.
\end{definition}

\section{Survey of known results in dimension $d=1$}\label{sec:1d}
\begin{rmk}
Note that in dimension $1$ it is common usage to define
\begin{equation}\label{energy1d}
E^\mathrm{1D}_g[L]:=\sum_{p\in L\setminus\{0\}} g(|p|) \quad \text{and}\quad E^\mathrm{1D}_g[L_N]:=\frac{1}{N}\sum_{j=1}^N\sum_{i=-\infty\atop i\neq 0}^{+\infty} g(x_{i+j}-x_i),
\end{equation}
where $L\in\mathcal L_1$ is a lattice and $L_N$ is an $N$-periodic configuration of density $\rho$, i.e. it satisfies $x_{i+N}-x_i=\rho^{-1}$ for some $\rho>0$. We usually suppose that $g(-x)=g(x)$ for all $x\neq 0$, and we note that then we just sum $g(|p|)$ (rather than $g(|p|^2)$, as done here in \eqref{energy}).
\end{rmk}
\subsection{The asymptotics in periodic fixed-density situations}
Recall that, corresponding to the setting in which the shape of $g$ does not automatically determine one scale for minimizers, as explained in Section \ref{sec:intro_gauss}, for $d\ge 2$ we resorted to writing $g(r^2)$ as a superposition of Gaussians, i.e. to looking at the inverse Laplace transform of $g$, and discussing the ensuing coefficients. This setting turns out to be too restrictive in $d=1$. For example we have the following representative result, in a periodic case, and we refer to \cite{BHS} for several further theorems:
\begin{theorem*}[{equivalent to \cite[Prop. 1(A)]{BHS} and \cite[Thm I]{VN1}}]
Let $g:\mathbb R\to \mathbb R\cup \{+\infty\}$ be an even, lower semicontinuous function, invariant under translations by $N\mathbb Z$, so that the values of $g$ on $[0,N/2]$ completely determine $g$. If $g$ is convex decreasing on $[0,N/2]$, then the energy $E^\mathrm{1D}_g[L_N]$ attains a global minimum on the periodic configuration $\mathbb Z$.
\end{theorem*}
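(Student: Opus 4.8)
The plan is to run Ventevogel's convexity argument (see \cite{VN1,BHS}): rewrite the $N$-periodic energy as a sum of one-variable terms evaluated at partial sums of consecutive gaps, observe that $g$ is convex on an entire period, and then apply Jensen's inequality term by term.

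First I would fix notation. Since $g$ has period $N$, the natural $N$-periodic configurations on which $E^\mathrm{1D}_g[L_N]$ is to be considered are those of unit density; write such a configuration as $L_N=(x_i)_{i\in\Z}$ with $x_{i+N}=x_i+N$, and set $\ell_i:=x_{i+1}-x_i\ge 0$, so that $\sum_{i=1}^{N}\ell_i=N$. Because $g$ is even and $N\Z$-periodic, the interaction of $x_i$ with $x_{i+j}$ coincides with that with $x_{i+j+mN}$ for every $m\in\Z$, so the bi-infinite sum in \eqref{energy1d} must be read as a sum over neighbours within one period; after collecting terms, $E^\mathrm{1D}_g[L_N]$ equals $\tfrac1N\,\mathcal E(\ell_1,\dots,\ell_N)$ up to an additive constant independent of $L_N$, where
\[
\mathcal E(\ell_1,\dots,\ell_N)\ :=\ \sum_{j=1}^{N-1}\ \sum_{i=1}^{N}\, g\bigl(s_i^{(j)}\bigr),
\qquad s_i^{(j)}:=\ell_i+\ell_{i+1}+\cdots+\ell_{i+j-1},
\]
with indices read modulo $N$. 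For $\Z$ all gaps equal $1$, so $s_i^{(j)}=j$ and $\mathcal E=N\sum_{j=1}^{N-1}g(j)$; thus it suffices to prove $\mathcal E(\ell_1,\dots,\ell_N)\ge N\sum_{j=1}^{N-1}g(j)$ for every admissible gap vector.

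Next I would establish that $g|_{[0,N]}$ is convex, and then apply Jensen. By hypothesis $g$ is convex and nonincreasing on $[0,N/2]$; by evenness and periodicity $g(t)=g(N-t)$ on $[N/2,N]$, so $g$ is convex and nondecreasing there; and since $N/2$ is a minimum of $g$ approached from both sides, the one-sided derivatives satisfy $g'_{-}(N/2)\le 0\le g'_{+}(N/2)$, which is precisely the condition under which the two convex pieces patch into a convex function on $[0,N]$ (values in $\R\cup\{+\infty\}$ are allowed, e.g.\ $g(0)=g(N)=+\infty$; being convex and finite somewhere, $g$ is bounded below on the compact interval $[0,N]$). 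For each fixed $j\in\{1,\dots,N-1\}$ the numbers $s_1^{(j)},\dots,s_N^{(j)}$ lie in $[0,N]$ — each is a sum of $j<N$ of the gaps, which total $N$ — and have arithmetic mean $\tfrac1N\sum_{i=1}^{N}s_i^{(j)}=j$, because every $\ell_k$ occurs in exactly $j$ of the $N$ sums. Jensen's inequality applied to $g|_{[0,N]}$ then gives $\tfrac1N\sum_{i=1}^{N}g(s_i^{(j)})\ge g(j)$ for each $j$; summing over $j$ yields $\mathcal E(\ell_1,\dots,\ell_N)\ge N\sum_{j=1}^{N-1}g(j)$, so $\Z$ is a global minimizer. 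If $g$ is moreover strictly convex on $[0,N]$, the equality case of Jensen forces $\ell_i=s_i^{(1)}=1$ for all $i$, so $\Z$ is then the unique minimizer, although the statement does not claim this.

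The hard part — such as it is — is the first step: correctly interpreting the periodized energy \eqref{energy1d} for an $N$-periodic $g$, putting it in the closed form above, and tracking the configuration-independent constant, together with the routine care needed to run Jensen when $g$ attains the value $+\infty$ (in which degenerate case the inequality typically holds trivially, both sides being infinite). The convexity-patching on $[0,N]$ and the Jensen estimate itself are elementary, and lower semicontinuity of $g$ plays no role in this argument beyond guaranteeing that the energy, and more generally a minimizer among $N$-periodic configurations, is well defined.
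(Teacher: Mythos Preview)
The paper does not prove this theorem: it appears in Section~\ref{sec:1d} as part of a survey of known one-dimensional results and is simply attributed to \cite{BHS} and \cite{VN1}. Your argument is precisely the Ventevogel/BHS convexity--Jensen argument from those references, and it is correct as written; the reduction to gap variables, the observation that evenness plus $N$-periodicity make $g$ convex on the full period $[0,N]$, and the averaging $\tfrac1N\sum_i s_i^{(j)}=j$ are exactly the ingredients used there.
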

In the same work, it is proven that if $g$ as above is \emph{concave and decreasing} on $[0,N/2]$, then the minimum is very degenerate and the points concentrate on a translated copy of $N/2\mathbb Z$, and it is a configuration with multiplicity $\lfloor N/2\rfloor$. Originally this last phenomenon was observed in \cite[Section 5]{VN1}. 

\medskip

A related result appears in \cite{georgakoul}, where an equivalent proof of precisely the same statement is given depending on the \emph{force} between points, giving a satisfactory answer to the question: ``which configurations on the real line are in equilibrium with respect to repulsive $2$-point forces which are strictly decreasing with respect to the distance'', the answer being that only a ``crystalline'' periodic configuration can be stable.

\medskip

The following questions are at the moment still open:
\begin{question}\label{q:necsuf1d}
What is a necessary and sufficient condition on $g$ under which for any $N\in\mathbb{N}^*$ all minimizing $1$-dimensional $N$-periodic configurations of density $\rho$ are up to translation equal to the lattice $\rho^{-1}\mathbb Z$? 
\end{question}
\begin{question}\label{q:necsuf_force1d}
What is a necessary and sufficient condition on a repulsive force $G$ under which for any $N\in\mathbb{N}^*$ a $1$-dimensional $N$-periodic configuration of density $\rho$ in which forces balance at each point, are up to translation equal to the lattice $\rho^{-1}\mathbb Z$? 
\end{question}

\subsection{One-well potentials $g$ without constraints on the density}

\begin{itemize}
\item In \cite{vn1d1} it was shown that there exist nonconvex potentials $g$ which have $\rho^{-1}\mathbb Z$ as unique minimizer for its average energy per particle among $N$-periodic configurations  of density $\rho$ for any $N$ and any $\rho$.
\item In \cite[p. 284]{vn1d1}, for $g(x)=(1+|x|^4)^{-1}$, it has been proved that the average energy of the regular $2$-periodic configuration $1/2\Z$ is larger than the $2$-periodic most degenerate case, in which two points share each position $n\in \Z$.
\item In \cite{vn1d2} it was shown that in the class of $f$ such that $g(x)=g(-x)$, such that $g''$ exists and is well-behaved at infinity, and such that the Fourier transform $\hat g$ exists, a \textbf{necessary condition} for $\mathbb Z$ to satisfy the optimality condition among $N$-periodic configurations for any $N$ at high enough density $\rho\ge \rho_0$, is that $\hat g\ge 0$. This results has been generalized to two-component systems with three kind of interacting potentials in \cite{BetKnupfNolte}.
\item In \cite{gardnerradin} it was proved that for the classical Lennard-Jones potential $g(r)=r^{-12}-r^{-6}$ the unique minimizer amongst all configurations in dimension $d=1$ is a periodic crystal. This was later extended in \cite{radinclassground} to more general potentials of the form $g(r)=g_1(r)-g_2(r)$ with $g_1,g_2$ convex and with good decay properties, based on Sinai's theorem on thermodynamic limits and on the previous work \cite{VN1}, and necessary conditions for crystallization were given. These conditions seem relatively cumbersome, and as far as we could check, they were not further improved in later works.
\end{itemize}
The following question seems to be still open:
\begin{question}
In dimension $d=1$, what is the largest class of potentials $g$ which have the property that the minimizers of $N$-point energies asymptotically as $N\to +\infty$ approximate (up to translation and rotation) a periodic configuration?
\end{question}

\section{Optimality and non-optimality at all scales}\label{sect-allscales}
We recall the following well-known result in dimension $d=2$, which follows from the Montgomery \cite[Thm 1]{Mont} and Bernstein-Hausdorff-Widder \cite{Bernstein} theorems (see the statements of these theorems in the Introduction):
\begin{proposition}[{see e.g. \cite[p. 169]{CohnKumar} or \cite[Prop 3.1]{BetTheta15}}]\label{prop:completemonot}
If $f$ is completely monotone, then for any $\lambda>0$, the triangular lattice $\Lambda_1$ is the unique minimizer of $L\mapsto E_f[\lambda L]$ in $D_{\mathcal L_2^\circ}$.
\end{proposition}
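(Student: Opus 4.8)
The plan is to combine the two quoted theorems through a superposition (a.k.a. linearity) argument. First, since $f$ is completely monotone — and also admissible, which is what makes $E_f$ well defined; note that by the Remark following Definition~\ref{def:admissible} weak admissibility already suffices here because $f$ is monotone — the Hausdorff--Bernstein--Widder theorem provides a representation
$$f(r)=\int_0^{+\infty}e^{-rt}\,d\rho_f(t),$$
with $\rho_f$ a nonnegative Borel measure on $(0,+\infty)$ which is not identically zero (as $f\not\equiv 0$).

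Next, I would fix $L\in D_{\mathcal L_2^\circ}$ and $\lambda>0$, substitute this representation into \eqref{energy}, and exchange the order of summation and integration. Since $f\ge 0$ and $e^{-rt}\ge 0$, Tonelli's theorem applies with no further hypotheses, and the common value is finite precisely because $f$ is admissible (so $E_f[\lambda L]$ is an absolutely convergent sum). Recalling the definition \eqref{def-thetalattice} of $\theta_L$ and separating the $p=0$ term, this yields
$$E_f[\lambda L]=\sum_{p\in L\setminus\{0\}}\int_0^{+\infty}e^{-\lambda^2|p|^2 t}\,d\rho_f(t)=\int_0^{+\infty}\bigl(\theta_L(\lambda^2 t/\pi)-1\bigr)\,d\rho_f(t).$$

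The final step is to apply Montgomery's theorem pointwise in $t$: for each fixed $t>0$ the number $\lambda^2 t/\pi$ is a positive real, so $\theta_L(\lambda^2 t/\pi)\ge\theta_{\Lambda_1}(\lambda^2 t/\pi)$ for every $L\in D_{\mathcal L_2^\circ}$, with equality if and only if $L=\Lambda_1$. Integrating this inequality against the nonnegative measure $\rho_f$ gives $E_f[\lambda L]\ge E_f[\lambda\Lambda_1]$. For uniqueness, note that if $L\ne\Lambda_1$ then $\theta_L(\lambda^2 t/\pi)-\theta_{\Lambda_1}(\lambda^2 t/\pi)>0$ for \emph{every} $t>0$, and since $\rho_f$ has strictly positive total mass this forces the integral of the difference to be strictly positive, i.e.\ $E_f[\lambda L]>E_f[\lambda\Lambda_1]$. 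The only genuinely delicate points are the two bookkeeping ones — legitimacy of the interchange of sum and integral and finiteness of both sides — which are handled simultaneously by Tonelli together with admissibility; beyond that the proposition is a purely formal consequence of the two cited theorems.
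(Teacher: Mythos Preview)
Your proof is correct and follows exactly the route the paper indicates: the paper does not give a detailed argument but simply says the result ``follows by superposition'' from Montgomery's theorem and the Hausdorff--Bernstein--Widder theorem, and you have supplied precisely that superposition argument, including the Tonelli justification for interchanging sum and integral and the uniqueness step via the nontriviality of $\rho_f$.
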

The above sufficient condition is not necessary, due to our new counterexample of Section~\ref{not_cm}. On the other hand, too simple criteria do not furnish sufficient conditions strong enough to replace complete monotonicity. Indeed, by methods related to Montgomery's approach, in \cite[Prop 3.4]{BetTheta15}, the first author also proved that the following positive, decreasing and convex function
\begin{equation}\label{Counterexample}
V(r)=\frac{14}{r^2}-\frac{40}{r^3}+\frac{35}{r^4}
\end{equation}
is such that the triangular lattice is not a minimizer of $E_V$ in $\lambda \mathcal L_2^\circ$ where $\lambda_1<\lambda<\lambda_2$ for values $\lambda_1\approx 1.522$ and $\lambda_2\approx 1.939$.

\medskip

In the two next subsections, we are prove Theorems \ref{MainTh1} and \ref{MainTh2} about the minimality of our special lattices at fixed scale.
\subsection{Non-optimality at high or low density for a subclass of functions}
In this part, we consider $f$ admissible in $\mathbb R^d$ and such that 
\begin{equation}\label{def-flaplace}
f(r)=\int_0^{+\infty} e^{-rt} d\mu_f(t)
:=\mathcal L[\mu_f](r),
\end{equation}
where $\mu_f:=\mathcal{L}^{-1}[f]$ is the inverse Laplace transform of $f$, which we assume to be a well-defined Radon measure.

\medskip

We start by recalling Jacobi's transformation formula for the lattice theta function defined by
\begin{equation}\label{def-theta}
\theta_L(\alpha)=\sum_{p\in L} e^{-\pi \alpha |p|^2}.
\end{equation}
which is in fact a simple application of Poisson Summation Formula. A proof of this identity can be found for instance in \cite{Bochnertheta}.
\begin{lemma}[Jacobi's transformation formula]\label{lem:mont}
For any $d\geq 1$, any $\alpha>0$ and any Bravais lattice $L\in  \mathcal{L}_d^\circ$,
\begin{equation}\label{identity-thetaA}
\theta_L(\alpha)=\alpha^{-\frac{d}{2}}\theta_{L^*}\left(\frac{1}{\alpha}\right).
\end{equation}
\end{lemma}
From \eqref{def-flaplace} and \eqref{identity-thetaA}, it is possible to write $E_f[L]$ in terms of $\theta_L$ as follows:
\begin{proposition}\label{prop:tri2}
For any $\lambda>0$, any admissible function $f$ having the representation \eqref{def-flaplace} with absolutely continuous $d\mu_f(t)=\rho_f(t)dt$ and any Bravais lattice $L\in \mathcal{L}_d^\circ$, we have
\begin{align}
E_f[\lambda L]&=\frac{\pi}{\lambda^2}\int_0^{+\infty} \left(\theta_{L}(u)-1  \right)\rho_f\left( \frac{\pi u}{\lambda^2} \right)du \label{identity-Ef1}\\
&=\frac{\pi}{\lambda^2}\int_0^{+\infty} \left(u^{\frac{d}{2}}\theta_{L^*}(u)-1  \right)\rho_f\left( \frac{\pi}{\lambda^2u} \right)u^{-2}du.\label{identity-Ef2}
\end{align}
\end{proposition}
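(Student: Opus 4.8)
The plan is to start from the Laplace representation \eqref{def-flaplace} of $f$ and the Gaussian summation formula it implies, and then convert sums over the rescaled lattice $\lambda L$ into theta-function integrals. First I would write, for $p\in\lambda L\setminus\{0\}$,
\[
f(|p|^2)=\int_0^{+\infty}e^{-|p|^2 t}\,\rho_f(t)\,dt,
\]
sum over $p\in\lambda L\setminus\{0\}$, and exchange sum and integral (justified by absolute convergence, which follows from admissibility of $f$ exactly as in the discussion after Definition \ref{def:admissible}; this is the point where one wants $d\mu_f=\rho_f(t)\,dt$ so that Fubini applies cleanly). This gives
\[
E_f[\lambda L]=\int_0^{+\infty}\bigl(\Theta_{\lambda L}(t)-1\bigr)\rho_f(t)\,dt,
\qquad \Theta_{\lambda L}(t):=\sum_{p\in\lambda L}e^{-|p|^2 t}.
\]

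Next I would relate $\Theta_{\lambda L}$ to the normalized lattice theta function $\theta_L$ of \eqref{def-theta}. Since $|p|^2=\lambda^2|q|^2$ for $p=\lambda q$, $q\in L$, we have $\Theta_{\lambda L}(t)=\theta_L\!\left(\tfrac{\lambda^2 t}{\pi}\right)$. Substituting $u=\tfrac{\lambda^2 t}{\pi}$, i.e.\ $t=\tfrac{\pi u}{\lambda^2}$, $dt=\tfrac{\pi}{\lambda^2}\,du$, turns the integral into
\[
E_f[\lambda L]=\frac{\pi}{\lambda^2}\int_0^{+\infty}\bigl(\theta_L(u)-1\bigr)\rho_f\!\left(\frac{\pi u}{\lambda^2}\right)du,
\]
which is \eqref{identity-Ef1}. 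For \eqref{identity-Ef2} I would apply Jacobi's transformation formula (Lemma \ref{lem:mont}), $\theta_L(u)=u^{-d/2}\theta_{L^*}(1/u)$, so that $\theta_L(u)=u^{-d/2}\theta_{L^*}(1/u)$; then perform the change of variables $u\mapsto 1/u$ in \eqref{identity-Ef1}. Under $v=1/u$, $du=-v^{-2}\,dv$, and $\theta_L(1/v)-1=v^{d/2}\theta_{L^*}(v)-1$, so
\[
E_f[\lambda L]=\frac{\pi}{\lambda^2}\int_0^{+\infty}\bigl(v^{d/2}\theta_{L^*}(v)-1\bigr)\rho_f\!\left(\frac{\pi}{\lambda^2 v}\right)v^{-2}\,dv,
\]
which is \eqref{identity-Ef2} after renaming $v$ to $u$.

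The only genuinely delicate point is the interchange of summation and integration in the first step: one must check that $\sum_{p}\int_0^{+\infty}e^{-|p|^2 t}|\rho_f(t)|\,dt<\infty$, or at least handle the contribution near $t=0$ where individual terms are not summable before integrating against $\rho_f$. I expect this to be controlled by writing the tail sum $\sum_{|p|\ge R}e^{-|p|^2 t}$ against a Gaussian integral comparison (a $k$-coarse majorant as in Definition \ref{def:admissible}) and using that $E_f[\lambda L]$ is, by admissibility, an absolutely convergent series, so that the exchange is legitimate by dominated convergence after truncating the lattice. Everything else is a pair of routine substitutions plus Lemma \ref{lem:mont}, so the proof is short once the Fubini step is justified; I would phrase it so that the regularity assumption $d\mu_f(t)=\rho_f(t)\,dt$ is used precisely to make that justification transparent.
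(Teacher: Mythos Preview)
Your proposal is correct and follows essentially the same route as the paper: write $E_f[\lambda L]$ as an integral of $(\theta_L(\lambda^2 t/\pi)-1)\rho_f(t)$, substitute $u=\lambda^2 t/\pi$ for \eqref{identity-Ef1}, and then use Jacobi's identity plus a reciprocal change of variable for \eqref{identity-Ef2}. The only cosmetic difference is that the paper applies Lemma~\ref{lem:mont} in the $t$-variable and then substitutes $u=\pi/(\lambda^2 t)$, whereas you apply it inside \eqref{identity-Ef1} and substitute $u\mapsto 1/u$; these are the same computation. Your discussion of the Fubini step is in fact more careful than the paper's, which simply asserts the first equality ``by definition of $f$''.
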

\begin{proof}
The first equality is clear by definition of $f$, by the change of variables $u=\frac{\lambda^2 t}{\pi}$,
\begin{align*}
E_f[\lambda L]=\int_0^{+\infty}\left( \theta_{\lambda L}\left( \frac{t}{\pi} \right)-1 \right)\rho_f(t)dt &=\int_0^{+\infty}\left( \theta_{L}\left( \frac{ \lambda^2 t}{\pi} \right)-1 \right)\rho_f(t)dt \\
&=\frac{\pi}{\lambda^2}\int_0^{+\infty} \left(\theta_{L}(u)-1  \right)\rho_f\left( \frac{\pi u}{\lambda^2} \right)du.
\end{align*}
The second equality is proved using \eqref{identity-thetaA} for $\alpha=\frac{\lambda^2 t}{\pi}$ and by change of variables $u=\frac{\pi}{t \lambda^2}$:
\begin{align*}
E_f[\lambda L] &=\int_0^{+\infty}\left( \theta_{L}\left( \frac{ \lambda^2 t}{\pi} \right)-1 \right)\rho_f(t)dt \\
&=\int_0^{+\infty} \left( \left(  \frac{\pi}{t \lambda^2}\right)^{\frac{d}{2}} \theta_{L^*}\left(\frac{\pi}{t \lambda^2}  \right)-1 \right)\rho_f(t)dt\\
&=\frac{\pi}{\lambda^2}\int_0^{+\infty} \left(u^{\frac{d}{2}}\theta_{L^*}(u)-1  \right)\rho_f\left( \frac{\pi}{\lambda^2 u} \right)u^{-2}du.
\end{align*}
\end{proof}
We next assume that the dimension $d$ is such that there exists only one lattice $L_m\in \mathcal L_d^\circ$ which is the unique minimizer of $L\mapsto \theta_L(\alpha)$ on $D_{\mathcal L_d^\circ}$ for any fixed $\alpha>0$. This is known to be the case for $d=2$, where $L_m=\Lambda_1$ by Montgomery Theorem \cite[Thm 1]{Mont}. We now prove our first result about the non-optimality of $L_m$ for some admissible potentials $f$ that have the representation \eqref{def-flaplace}, when its inverse Laplace transform is negative in the neighbourhood of $0$ or $+\infty$.

\begin{proposition}\label{prop-asymptnonopt}
Assume that there exists a lattice $L_m\in \mathcal{L}_d^\circ$ which is the unique minimizer of $L\mapsto \theta_L(\alpha)$ on $D_{\mathcal L_d^\circ}$ for any fixed $\alpha>0$. Let $f$ be an admissible potential having the representation \eqref{def-flaplace}, then the following holds for any $r_0>0$:
\begin{enumerate}
\item If $\mu_f<0$ on $(0,r_0)$, then for any Bravais lattice $L\in \mathcal{L}_d^\circ\backslash\{L_m\}$, there exists $\lambda_0$ such that for any $\lambda>\lambda_0$, $E_f[\lambda L]<E_f[\lambda L_m]$;
\item  If $\mu_f<0$ on $(r_0,+\infty)$, then for any Bravais lattice $L\in \mathcal{L}_d^\circ\backslash\{L_m\}$, there exists $\lambda_1$ such that for any $0<\lambda<\lambda_1$, $E_f[\lambda L]<E_f[\lambda L_m]$.
\end{enumerate}
In particular, these results hold in dimension $d=2$ for the triangular lattice $L_m=\Lambda_1$.
\end{proposition}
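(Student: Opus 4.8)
The plan is to use the integral representation of $E_f[\lambda L]$ from Proposition \ref{prop:tri2} and analyze the two asymptotic regimes $\lambda\to+\infty$ and $\lambda\to 0^+$ separately, using that in each regime the integral localizes near a part of the $t$-axis where $\mu_f$ has a fixed sign. For concreteness, write the difference $E_f[\lambda L]-E_f[\lambda L_m]$ using \eqref{identity-Ef1}; after the change of variable already performed there,
\[
E_f[\lambda L]-E_f[\lambda L_m]=\frac{\pi}{\lambda^2}\int_0^{+\infty}\bigl(\theta_L(u)-\theta_{L_m}(u)\bigr)\,\rho_f\!\left(\frac{\pi u}{\lambda^2}\right)du ,
\]
(with the analogous formula holding in the general Radon-measure case by replacing $\rho_f(t)\,dt$ by $d\mu_f(t)$ and $u=\lambda^2 t/\pi$). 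By Montgomery's theorem (or the hypothesis on $L_m$), the bracket $g_L(u):=\theta_L(u)-\theta_{L_m}(u)$ is strictly positive for every $u>0$ and every $L\neq L_m$. So the whole question reduces to controlling the sign of $\rho_f$ against which $g_L$ is integrated, in the appropriate range of the rescaled variable.

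For part (1), assume $\mu_f<0$ on $(0,r_0)$. As $\lambda\to+\infty$, the argument $\pi u/\lambda^2$ of $\rho_f$ stays in $(0,r_0)$ for all $u$ in any fixed bounded interval, so on that part of the integral the integrand $g_L(u)\rho_f(\pi u/\lambda^2)$ is strictly negative. The contribution of the tail $u$ large must be shown to be of lower order: here I would use the admissibility of $f$ together with the exponential decay $\theta_L(u)-1=O(e^{-cu})$ as $u\to+\infty$ (and the polynomial-type control of $\theta_{L^*}$ via \eqref{identity-Ef2} for $u$ small) to dominate the tail by a negligible quantity compared to the negative main term, which one can bound below in absolute value by a positive constant times $\lambda^{-2}$ (picking one point $u_*$ where $g_L(u_*)>0$ and $\rho_f<0$ in a neighborhood, and using continuity). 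This yields $E_f[\lambda L]-E_f[\lambda L_m]<0$ for $\lambda>\lambda_0$. Part (2) is completely symmetric: if $\mu_f<0$ on $(r_0,+\infty)$, then as $\lambda\to 0^+$ the argument $\pi u/\lambda^2\to+\infty$ for $u$ bounded away from $0$, so the integrand is negative there, and one controls the region $u$ near $0$ using representation \eqref{identity-Ef2}, where the factor $u^{d/2}\theta_{L^*}(u)-1$ again has a definite positivity coming from $L_m^*$ being the minimizer of the dual theta function (Montgomery's theorem is self-dual in $d=2$, and in the hypothetical general $d$ one uses Lemma \ref{lem:mont} to transfer optimality of $L_m$ to $L_m^*$).

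The main obstacle I expect is the interchange of limit and integral / tail estimate: $\mu_f$ is only assumed to be a Radon measure with a sign on a half-line, so near the "other end" of the $t$-axis it is uncontrolled in sign and possibly large, and one has to make sure the genuinely good (negative) part of the integral dominates. The key technical device is the uniform exponential smallness of $\theta_L(u)-\theta_{L_m}(u)$ as $u\to\infty$ (from the positivity of the lattice quadratic form, so the difference is governed by a finite number of shells) paired with the integrability forced by admissibility at the other end, so that the "bad" range contributes $o(\lambda^{-2})$ while the "good" range contributes a negative term of exact order $\lambda^{-2}$. Once this localization is set up cleanly — most transparently by splitting $\int_0^\infty=\int_0^{u_0}+\int_{u_0}^\infty$ with $u_0$ chosen so that the rescaled argument of $\mu_f$ lies in the sign-definite interval — the conclusion follows. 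The final sentence (the case $d=2$, $L_m=\Lambda_1$) is then immediate from Montgomery's theorem.
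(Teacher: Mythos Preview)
Your proposal is correct and follows essentially the same approach as the paper's proof: both use the integral representation \eqref{identity-Ef1}, exploit the strict positivity of $\theta_L(u)-\theta_{L_m}(u)$ for $L\neq L_m$, and let the exponential decay of this difference at $u\to+\infty$ absorb the contribution from the region where $\rho_f$ has uncontrolled sign, so that the sign-definite negative part dominates for $\lambda$ large. The only minor difference is in part (2): the paper switches entirely to the dual representation \eqref{identity-Ef2} and observes that $L_m^*=L_m$ (from Jacobi's identity and uniqueness of $L_m$), so that the argument becomes literally symmetric to part (1) with the roles of ``large $\lambda$'' and ``small $\lambda$'' swapped via the substitution $u\mapsto 1/u$. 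Your description instead keeps \eqref{identity-Ef1} as the primary representation and invokes \eqref{identity-Ef2} only to control the small-$u$ region; this works but is slightly less clean than the paper's fully dual argument.
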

\begin{rmk}\label{rmk:824}
This result would hold in dimensions $8$ and $24$ for $\mathsf{E}_8$ or the Leech lattice $\Lambda_{24}$, once the universality of these lattice, i.e. their minimality for the theta function among periodic configurations of fixed unit density and for any $\alpha>0$, conjectured in \cite[Conjecture 9.4]{CohnKumar}, will be proved. The same result in dimension $d=4$ could also be proved for $\mathsf{D}_4$, according to its local minimality for the lattice theta function proved in \cite{Coulangeon:2010uq}.
\end{rmk}
\begin{proof}
Assume first that $\mu_f$ is absolutely continuous with respect to the Lebesgue measure and $\mu_f(t)=\rho_f(t)dt$. If the hypothesis of point (1) holds, for any Bravais lattice $L\in \mathcal{L}_d^\circ$, we write, using \eqref{identity-Ef1},
\begin{equation}
E_f[\lambda L]-E_f[\lambda L_m]=\frac{\pi}{\lambda^2}\int_0^{+\infty} \left(\theta_{L}(u)-\theta_{L_m}(u)  \right)\rho_f\left( \frac{\pi u}{\lambda^2} \right)du.
\end{equation}
By assumption, $\theta_{L}(u)-\theta_{L_m}(u)>0$ for all $u>0$ and $\rho_f\left( \frac{\pi u}{\lambda^2} \right)<0$ if $u<\frac{r_0 \lambda^2}{\pi}$. By the exponential decay of $u\mapsto\theta_{L}(u)$ for any fixed $L\in \mathcal{L}_d^\circ$, we obtain that for any Bravais lattice $L\in \mathcal{L}_d^\circ$, there exists $\lambda_0$ such that for any $\lambda>\lambda_0$, 
$$
\int_0^{+\infty} \left(\theta_{L}(u)-\theta_{L_m}(u)  \right)\rho_f\left( \frac{\pi u}{\lambda^2} \right)du<0,
$$
 and the first part of the proposition is proved.

\medskip

For the second case, by \eqref{identity-Ef2}, we get
\begin{equation}
E_f[\lambda L]-E_f[\lambda L_m]=\frac{\pi}{\lambda^2}\int_0^{+\infty} \left(\theta_{L^*}(u)-\theta_{L_m^*}(u)  \right)\rho_f\left( \frac{\pi}{\lambda^2 u} \right)u^{\frac{d}{2}-2}du.
\end{equation}
It is clear from \eqref{identity-thetaA} and the fact that $L_m$ is the unique minimizer of the lattice theta function for all $\alpha>0$ that we necessarily have $L_m^*=L_m$. By assumption, $\theta_{L^*}(u)-\theta_{L_m^*}(u)>0$ for all $u>0$ and $\rho_f\left( \frac{\pi }{\lambda^2 u} \right)<0$ if $u<\frac{\pi}{\lambda^2 r_0}$. As in the previous case, by the exponential decay of the lattice theta function, there exists $\lambda_1$ such that for any $0<\lambda<\lambda_1$, 
 $$
\int_0^{+\infty} \left(\theta_{L^*}(u)-\theta_{L_m^*}(u)  \right)\rho_f\left( \frac{\pi}{\lambda^2 u} \right)u^{\frac{d}{2}-2}du<0
 $$
 and the second case of the proposition is proved.
 
 \medskip
 
 If $\mu_f$ is a Radon measure but not absolutely continuous with respect to the Lebesgue measure, then we can repeat the proof of Proposition \ref{prop:tri2} and prove 
\[  
E_f[\lambda L]=\int_0^{+\infty} \left(\theta_{L}(u)-1  \right)\bar\mu_f(u)=\int_0^{+\infty} \left(u^{\frac{d}{2}}\theta_{L^*}(u)-1  \right)\tilde \mu_f(u),
\]
where $\bar \mu_f=(g_\lambda)_\#\mu_f$ with $g_\lambda(t)=\frac{\lambda^2t}{\pi}$ and $\tilde \mu_f=(h_\lambda)_\#\mu_f$ with $h_\lambda(t)=\frac{\pi}{t\lambda^2}$, and the proof continues as above.
\end{proof}

\subsection{About lattices that are optimal for any density}
The next result shows that, if the inverse Laplace transform of $f$ has a sign in a neighbourhood of the origin or $+\infty$ (which is the case for all the classical example or functions constructed with inverse power laws, exponentials, Yukawa potentials, Gaussians...) and if $L_m$ is the unique minimizer of $L\mapsto \theta_L(\alpha)$ in $D_{\mathcal{L}_d^\circ}$ for all $\alpha>0$, then the only lattice that could be a minimizer for all the densities is $L_m$. As recalled in the previous section (see Proposition \ref{prop-asymptnonopt} and Remark \ref{rmk:824}), this is the case in dimension $d=2$ for $L_m=\Lambda_1$ is the triangular lattice and is conjectured to extend in dimensions $d\in\{4,8,24\}$.
\begin{proposition}\label{prop-anyscale2}
Let $d\geq 1$ and assume that $L_m$ is the unique minimizer of $L\mapsto \theta_L(\alpha)$ in $D_{\mathcal{L}_d^\circ}$ for any $\alpha>0$. Let $f$ be an admissible potential with representation \eqref{def-flaplace} such that $\mu_f>0$ on the interval $(0,r_0)$ or on on the interval $(r_0,+\infty)$. If $\Lambda$ is a minimizer of $L\mapsto E_f[\lambda L]$ for any $\lambda >0$ on $D_{\mathcal{L}_d^\circ}$, then $\Lambda=L_m$.

\medskip

Furthermore, if $\mu(r)<0$ on $(0,r_0)$ or on $(r_0,+\infty)$, then the minimizer of $L\mapsto E_f[\lambda L]$ cannot be the same for all $\lambda>0$.

\medskip

In particular, this result holds in dimension $d=2$ for the triangular lattice $L_m=\Lambda_1$.
\end{proposition}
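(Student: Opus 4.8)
\textbf{Proof proposal for Proposition \ref{prop-anyscale2}.}

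The plan is to use the integral representation from Proposition \ref{prop:tri2}, in the forms \eqref{identity-Ef1} and \eqref{identity-Ef2}, and to exploit the sign of $\mu_f$ near $0$ (respectively near $+\infty$) together with the fact that $\theta_L(u)-\theta_{L_m}(u)>0$ for all $u>0$ whenever $L\neq L_m$. Concretely, suppose $\Lambda\in D_{\mathcal{L}_d^\circ}$ minimizes $L\mapsto E_f[\lambda L]$ for \emph{every} $\lambda>0$, and suppose for contradiction that $\Lambda\neq L_m$. If $\mu_f>0$ on $(0,r_0)$, I would write, for $L=\Lambda$,
\[
E_f[\lambda\Lambda]-E_f[\lambda L_m]=\frac{\pi}{\lambda^2}\int_0^{+\infty}\bigl(\theta_\Lambda(u)-\theta_{L_m}(u)\bigr)\rho_f\Bigl(\frac{\pi u}{\lambda^2}\Bigr)\,du,
\]
and observe that on the range $u<r_0\lambda^2/\pi$ the integrand is strictly positive, so only the tail $u\ge r_0\lambda^2/\pi$ can contribute negatively. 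The key point is that this positive ``core'' piece can be made to dominate by sending $\lambda\to+\infty$: by the exponential decay of $\theta_\Lambda(u)-\theta_{L_m}(u)$ as $u\to+\infty$ (uniformly away from $0$), the tail integral is exponentially small in $\lambda$ (after a suitable normalization of $\rho_f$), while the core integral stays bounded below by a positive constant. Hence for $\lambda$ large enough $E_f[\lambda\Lambda]>E_f[\lambda L_m]$, contradicting optimality of $\Lambda$ at that scale. This gives $\Lambda=L_m$. The case where $\mu_f>0$ on $(r_0,+\infty)$ is symmetric: one uses \eqref{identity-Ef2} instead, replaces $L$ by $L^*$ (and $L_m^*=L_m$ as in the proof of Proposition \ref{prop-asymptnonopt}), notes the positivity of the integrand for $u>\pi/(\lambda^2 r_0)$, and sends $\lambda\to 0$ so that the positive part dominates.

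For the second assertion, if $\mu_f<0$ on $(0,r_0)$ or on $(r_0,+\infty)$, then Proposition \ref{prop-asymptnonopt} applies verbatim: it produces, for every $L\in\mathcal{L}_d^\circ\setminus\{L_m\}$, a range of scales ($\lambda$ large, resp. $\lambda$ small) on which $E_f[\lambda L]<E_f[\lambda L_m]$. In particular $L_m$ is \emph{not} a minimizer at those scales. On the other hand, by the first part of the present statement — applied with $\mu_f$ replaced by $-\mu_f$, or simply by repeating the domination argument with reversed inequalities — any candidate that is a minimizer at \emph{all} scales would have to coincide with $L_m$ in a complementary range of scales (e.g. $\lambda$ small when $\mu_f<0$ near $0$). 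Since $L_m$ fails to be a minimizer somewhere, there is no lattice that minimizes at every $\lambda>0$; more carefully, one argues that a scale-independent minimizer $\Lambda$ would have to equal $L_m$ by the argument above on the ``good'' range, yet by Proposition \ref{prop-asymptnonopt} it is beaten on the ``bad'' range — the desired contradiction.

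The main obstacle I anticipate is making the domination estimate fully rigorous when $\mu_f$ is only a Radon measure rather than absolutely continuous, and in controlling the tail/core split without assuming integrability of $\rho_f$ at $0$: one must check that the core integral $\int_{(0,\delta\lambda^2)}(\theta_\Lambda(u)-\theta_{L_m}(u))\,d\bar\mu_f(u)$ is bounded below by a positive quantity of the right order relative to the tail. As in the proof of Proposition \ref{prop-asymptnonopt}, the non-absolutely-continuous case should be handled by pushing forward $\mu_f$ under the maps $g_\lambda(t)=\lambda^2 t/\pi$ and $h_\lambda(t)=\pi/(t\lambda^2)$ and repeating the argument with $\bar\mu_f$ and $\tilde\mu_f$ in place of $\rho_f(u)\,du$; the sign conditions on $\mu_f$ transfer to sign conditions on these push-forwards on the corresponding half-lines, so the same core/tail dichotomy goes through. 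The statement then specializes to $d=2$ with $L_m=\Lambda_1$ by Montgomery's Theorem.
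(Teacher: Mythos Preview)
Your treatment of the first assertion is essentially the paper's argument: you compare $\Lambda$ directly with $L_m$, write the difference via \eqref{identity-Ef1}, and use that the integrand is positive on the growing core $(0,r_0\lambda^2/\pi)$ while the tail is exponentially negligible. The paper phrases this by introducing an auxiliary $L_0$ with $\theta_{L_0}(u)<\theta_\Lambda(u)$ for all $u$, but taking $L_0=L_m$ recovers exactly what you wrote. The dual case and the Radon-measure reduction are also handled the same way.

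The second assertion, however, has a genuine gap. When $\mu_f<0$ on $(0,r_0)$ (say), you correctly rule out $L_m$ via Proposition~\ref{prop-asymptnonopt}, but your way of ruling out $\Lambda\neq L_m$ does not work. Replacing $\mu_f$ by $-\mu_f$ turns minimization of $E_f$ into maximization, so the first part does not transfer; and ``repeating the domination argument with reversed inequalities'' gives, for large $\lambda$,
\[
E_f[\lambda\Lambda]-E_f[\lambda L_m]=\frac{\pi}{\lambda^2}\int_0^{+\infty}\bigl(\theta_\Lambda(u)-\theta_{L_m}(u)\bigr)\rho_f\Bigl(\frac{\pi u}{\lambda^2}\Bigr)\,du<0,
\]
which is \emph{consistent} with $\Lambda$ being a minimizer rather than contradicting it. Comparing with $L_m$ goes the wrong way here precisely because $\theta_\Lambda-\theta_{L_m}>0$, and the ``complementary range'' argument (small $\lambda$) gives no information since you have no sign hypothesis on $\mu_f$ near $+\infty$. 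The paper's fix is to compare $\Lambda$ not with $L_m$ but with some $L_0$ satisfying $\theta_{L_0}(u)>\theta_\Lambda(u)$ for all $u>0$ (such $L_0$ can be found by moving towards the boundary of $D_{\mathcal L_d^\circ}$, where $\theta_L(u)\to+\infty$). Then the same core/tail domination yields $E_f[\lambda L_0]<E_f[\lambda\Lambda]$ for large $\lambda$, and \emph{this} contradicts the assumed minimality of $\Lambda$. That choice of comparison lattice on the ``upper'' side of $\Lambda$ is the missing idea in your sketch.
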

\begin{proof}
 We perform the proof under the assumption that $\mu_f$ is absolutely continuous with respect to the Lebesgue measure, i.e. $d\mu_f(t)=\rho_f(t)dt$, and the general case follows by the same adaptation as in the proof of Proposition \ref{prop-asymptnonopt}.

\medskip
 
We prove the first point. Let $\Lambda$ be a minimizer of $L\mapsto E_f[\lambda L]$ on $D_{\mathcal{L}_d^\circ}$ for any $\lambda>0$. We assume that $\Lambda\neq L_m$. Therefore, by strict minimality of $L_m$ and continuity of $L\mapsto \theta_L(u)$ for any fixed $u>0$, there exists $L_0\in D_{\mathcal{L}_d^\circ}$ such that $\theta_{L_0}(u)-\theta_{\Lambda}(u)<0$ for all $u>0$. We now use exactly the same approach as in Proposition \ref{prop-asymptnonopt}, using \eqref{identity-Ef1} and \eqref{identity-Ef2}. In the first case, we write by assumption
\begin{equation}\label{assumpt}
\forall V>0, \quad 0<E_f[\lambda L_0]-E_f[\lambda\Lambda]=\frac{\pi}{\lambda^2}\int_0^{+\infty} \left(\theta_{L_0}(u)-\theta_{\Lambda}(u)  \right)\rho_f\left( \frac{\pi u}{\lambda^2} \right)du.
\end{equation}
Since $\rho_f\left( \frac{\pi u}{\lambda^2} \right)<0$ if $u<\frac{r_0 \lambda^2}{\pi}$, by the exponential decay of $u\mapsto \theta_L(u)$ for any Bravais lattice $L\in \mathcal{L}_d^\circ$, there exists $\lambda_0$ such that for any $\lambda>\lambda_0$,
\begin{equation}
\frac{\lambda^2}{\pi}\left(E_f[\lambda L_0]-E_f[\lambda \Lambda]\right)=\int_0^{+\infty} \left(\theta_{L_0}(u)-\theta_{\Lambda}(u)  \right)\rho_f\left( \frac{\pi u}{\lambda^2} \right)du<0,
\end{equation}
that contradicts \eqref{assumpt}. The second case is proved similarly using the second equality in \eqref{identity-Ef2}, as in the proof of Proposition \ref{prop-asymptnonopt}.

\medskip

For the second point, i.e. $\mu$ is negative in the neighbourhood of $0$ or $+\infty$, we use the same arguments. By Proposition \ref{prop-asymptnonopt}, $L_m$ cannot be a minimizer of $L\mapsto E_f[\lambda L]$ for any $\lambda>0$. Let us assume that there exists another Bravais lattice $\Lambda\in D_{\mathcal{L}_d^\circ}$ that is a minimizer of the energy for all $\lambda>0$. Therefore, there exists $L_0$ such that, for any $u>0$, $\theta_{L_0}(u)-\theta_\Lambda(u)>0$. If $\mu(r)<0$ on $(0,r_0)$ the same argument as before shows that there exists $\lambda_0$ such that for any $\lambda>\lambda_0$, $E_f[\lambda L_0]-E_f[\lambda \Lambda]<0$ and that contradicts our assumption. The second case is proved similarly as explained before and shows the existence of $\lambda_1$ such that for any $0<\lambda<\lambda_1$, $E_f[\lambda L_0]-E_f[\lambda \Lambda]<0$, that again contradicts our assumption.
\end{proof}
\section{One-well potentials and optimality of lattices}\label{sectLJ}
\subsection{Potentials with one well}
This Section is devoted to the proof of Theorem \ref{MainTh3} and \ref{MainTh4}. We also give several numerical evidence for the minimality of certain lattices for the Lennard-Jones type energy.
\subsection{Lennard-Jones type potentials}
In this section, we are discussing the minimization problem for the Lennard-Jones type potentials defined by
\begin{equation}\label{lj2}
f(r)=\frac{a_2}{r^{x_2/2}}-\frac{a_1}{r^{x_1/2}}, \quad x_2>x_1>d\quad (a_1,a_2)\in (0,+\infty)^2,
\end{equation}
and, for any Bravais lattice $L\in \mathcal L_d$,
\begin{equation}
E_f[L]=a_2 \zeta_L(x_2)-a_1 \zeta_L(x_1).
\end{equation}
We first prove that the shape of a global minimizer $L_0$  (which is defined as a canonical choice of a lattice equivalent to $L_0$ under rotation and dilation, see Definition~\ref{defn:shape}) of $E_f$ for fixed parameters $(a_1,a_2,x_1,x_2)$ does not depend on $(a_1,a_2)$.
\begin{proposition}\label{prop-LJglobal}
Let $d<x_1<x_2$ and $(a_1,a_2)\in (0,+\infty)^2$, and $L_0=\lambda_1 L_1=\lambda_2 L_2$, where the normalizations of $L_1, L_2$ are chosen so that 
\begin{equation}\label{normaliz_L1}
L_1\in\mathcal L_d^\circ,\quad L_2\in\mathcal L_d^1.
\end{equation}
Then for $f$ a Lennard-Jones type potential as in \eqref{lj2}, the following hold:
\begin{enumerate}\item $L_0$ is a global minimizer of $E_f$ on $D_{\mathcal L_d}$ if and only if $L_1$ is a minimizer on $D_{\mathcal L_d^\circ}$ of the energy
\begin{equation}\label{def-Etilde}
\tilde{E}_f[L]:=\frac{\zeta_L(x_2)^{x_1}}{\zeta_L(x_1)^{x_2}}.
\end{equation}
\item If the minimum of $f(r)$ is achieved at $r_f=r_f(x_1,x_2,a_1,a_2)>0$, and $\lambda^{L_2}_0=\lambda^{L_2}_0(x_1,x_2,a_1,a_2)>0$ is the factor such that $\lambda^{L_2}_0L_2$ realizes the minimum energy among lattices of the same shape as $L_2$, i.e. $\min_{\lambda>0} E_f[\lambda L_2]=E_f[\lambda^{L_2}_0 L_2]$, then the following limit exists and satisfies
\begin{equation}\label{limit_r0}
\lim_{\substack{x_1,x_2\to+\infty\\x_1<x_2, r_f=r_0}}\lambda^{L_2}_0 (x_1,x_2,a_1,a_2)=\sqrt{r_0}.
\end{equation}
\end{enumerate}
\end{proposition}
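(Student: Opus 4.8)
The engine of both parts is the scaling homogeneity $\zeta_{\lambda L}(s)=\lambda^{-s}\zeta_L(s)$, which allows one to carry out the minimization in the scale $\lambda$ explicitly and thereby decouple it from the shape. So the plan is: first optimize $\lambda\mapsto E_f[\lambda L]$ for a fixed shape $L$, read off that the optimal value is a negative constant times a power of $\tilde E_f[L]$, and then push both statements through this identity.

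\emph{Step 1 (optimizing the scale).} Fix $L\in\mathcal L_d^\circ$ and set $\phi(\lambda):=E_f[\lambda L]=a_2\lambda^{-x_2}\zeta_L(x_2)-a_1\lambda^{-x_1}\zeta_L(x_1)$; since $d<x_1<x_2$ both zeta values are finite and positive. One has $\phi(\lambda)\to+\infty$ as $\lambda\to0^+$ and $\phi(\lambda)\to0^-$ as $\lambda\to+\infty$, and $\phi'$ vanishes at the single point $\lambda_0^L=\bigl(\tfrac{x_2a_2\zeta_L(x_2)}{x_1a_1\zeta_L(x_1)}\bigr)^{1/(x_2-x_1)}$, which is therefore the unique global minimizer on $(0,+\infty)$. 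Substituting back and simplifying, I expect to get
\[
\min_{\lambda>0}E_f[\lambda L]=c\,\bigl(\tilde E_f[L]\bigr)^{\frac1{x_2-x_1}},\qquad c:=\frac{x_1-x_2}{x_1}\,a_2\Bigl(\frac{x_2a_2}{x_1a_1}\Bigr)^{-\frac{x_2}{x_2-x_1}}<0,
\]
with $\tilde E_f[L]=\zeta_L(x_2)^{x_1}/\zeta_L(x_1)^{x_2}$ and $c=c(x_1,x_2,a_1,a_2)$ independent of $L$. Note the same computation applies verbatim to any lattice, not just unit-density ones.

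\emph{Step 2 (part (1)).} Since $t\mapsto t^{1/(x_2-x_1)}$ is increasing on $(0,+\infty)$ and $c<0$, minimizing $L\mapsto\min_{\lambda>0}E_f[\lambda L]$ over $D_{\mathcal L_d^\circ}$ is the same as minimizing $\tilde E_f$ over $D_{\mathcal L_d^\circ}$ (and $\tilde E_f$ is scale-invariant, hence genuinely a function of the shape). As every element of $\mathcal L_d$ is $\lambda L_1$ for a unique $L_1\in D_{\mathcal L_d^\circ}$ and $\lambda>0$, and as a global minimizer of $E_f$ is in particular optimally scaled within its shape (forcing $\lambda_1=\lambda_0^{L_1}$), it follows that $L_0=\lambda_1L_1$ minimizes $E_f$ on $D_{\mathcal L_d}$ iff $L_1$ minimizes $\tilde E_f$ on $D_{\mathcal L_d^\circ}$.

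\emph{Step 3 (part (2)), and the main obstacle.} A direct differentiation gives $r_f=\bigl(\tfrac{x_2a_2}{x_1a_1}\bigr)^{2/(x_2-x_1)}$, so the constraint $r_f=r_0$ reads $\tfrac{x_2a_2}{x_1a_1}=r_0^{(x_2-x_1)/2}$. Applying Step 1 to $L_2\in\mathcal L_d^1$ then yields $\lambda_0^{L_2}=\sqrt{r_0}\,\bigl(\zeta_{L_2}(x_2)/\zeta_{L_2}(x_1)\bigr)^{1/(x_2-x_1)}$, so everything reduces to showing the correction factor tends to $1$. Writing $\zeta_{L_2}(s)=\tau(L_2)+\sum_{|p|>1}|p|^{-s}$ (the shortest vectors of $L_2$ having length $1$, the next distance being some $r_2>1$) and using $1-e^{-t}\le t$ for $t\ge0$, I would estimate
\[
0\le \zeta_{L_2}(x_1)-\zeta_{L_2}(x_2)=\sum_{|p|>1}|p|^{-x_1}\bigl(1-|p|^{-(x_2-x_1)}\bigr)\le (x_2-x_1)\sum_{|p|>1}|p|^{-x_1}\log|p|,
\]
so that with $\varepsilon:=1-\zeta_{L_2}(x_2)/\zeta_{L_2}(x_1)$ one has $0\le\varepsilon\le (x_2-x_1)S(x_1)$, where $S(x_1):=\zeta_{L_2}(x_1)^{-1}\sum_{|p|>1}|p|^{-x_1}\log|p|\to0$ as $x_1\to+\infty$ (the sum is dominated by the $r_2$-shell term, comparable to $r_2^{-x_1}\log r_2$, while $\zeta_{L_2}(x_1)\ge1$). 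Then $\bigl|\log\bigl(\zeta_{L_2}(x_2)/\zeta_{L_2}(x_1)\bigr)^{1/(x_2-x_1)}\bigr|=\tfrac1{x_2-x_1}|\log(1-\varepsilon)|\le \tfrac{2\varepsilon}{x_2-x_1}\le 2S(x_1)\to0$ (using $|\log(1-\varepsilon)|\le 2\varepsilon$ for $\varepsilon$ small), hence $\lambda_0^{L_2}\to\sqrt{r_0}$. The one genuine subtlety is that the exponent $\tfrac1{x_2-x_1}$ could a priori blow up when $x_2-x_1\to0$; this is precisely cancelled by the factor $x_2-x_1$ extracted via $1-e^{-t}\le t$, which is what makes the limit hold with no hypothesis separating $x_1$ from $x_2$.
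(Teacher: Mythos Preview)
Your argument follows the same overall route as the paper: write $E_f[\lambda L]$ via the scaling $\zeta_{\lambda L}(s)=\lambda^{-s}\zeta_L(s)$, optimize explicitly in $\lambda$ to find the same critical scale $\lambda_0^L=\bigl(\tfrac{a_2x_2\zeta_L(x_2)}{a_1x_1\zeta_L(x_1)}\bigr)^{1/(x_2-x_1)}$, observe that the minimal value is a negative shape-independent constant times $\tilde E_f[L]^{1/(x_2-x_1)}$, and for part~(2) derive the identity $\lambda_0^{L_2}=\sqrt{r_f}\,\bigl(\zeta_{L_2}(x_2)/\zeta_{L_2}(x_1)\bigr)^{1/(x_2-x_1)}$.

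The one place where you go beyond the paper is the final limit in part~(2). The paper simply records that $\zeta_{L_2}(x)$ is bounded above and below (by $C_\epsilon$ and $1$) and converges to $\#L_2^{(1)}$, and then asserts the limit equals $\sqrt{r_0}$. This leaves the $1^\infty$ indeterminate form unaddressed when $x_2-x_1\to0$. Your estimate $\zeta_{L_2}(x_1)-\zeta_{L_2}(x_2)\le(x_2-x_1)\sum_{|p|>1}|p|^{-x_1}\log|p|$, obtained via $1-e^{-t}\le t$, extracts exactly the factor $x_2-x_1$ needed to cancel the exponent, and the remaining factor $S(x_1)\to0$ by dominated convergence since every $|p|$ in the sum exceeds $r_2>1$. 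This is a genuine improvement in rigor over the paper's presentation, and your identification of this as ``the main obstacle'' is apt.
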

\begin{proof}
For any Bravais lattice $L\in\mathcal L_d^\circ$ and any $\lambda>0$, we have
\begin{equation}
E_f[\lambda L]=\frac{a_2}{\lambda^{x_2}}\zeta_L(x_2)-\frac{a_1}{\lambda^{x_1}}\zeta_L(x_1)\ .
\end{equation}
Therefore, with the notation for $\lambda^L_0$ as in above point (2), we get
\begin{equation}\label{value_AL}
\partial_\lambda E_f[\lambda L]\geq 0 \iff \lambda\geq \lambda^L_0,\quad \mbox{and}\quad \lambda^L_0=\left( \frac{a_2x_2\zeta_L(x_2)}{a_1x_1\zeta_L(x_1)} \right)^{\frac{1}{(x_2-x_1)}}\ .
\end{equation}
It follows that, for any $L \in\mathcal L_d^\circ$, 
\begin{equation}\label{min_A}
\min_{\lambda>0} E_f[\lambda L]=E_f[\lambda^L_0 L]=\frac{a_1^{\frac{x_2}{x_2-x_1}}\zeta_L(x_1)^{\frac{x_2}{x_2-x_1}}}{a_2^{\frac{x_1}{x_2-x_1}}\zeta_L(x_2)^{\frac{x_1}{x_2-x_1}}}\left( \left(\frac{x_1}{x_2}\right)^{\frac{x_2}{x_2-x_1}} -\left(\frac{x_1}{x_2}\right)^{\frac{x_1}{x_2-x_1}} \right)<0.
\end{equation}
Therefore, we have, for any Bravais lattices $L,L^\prime\in\mathcal L_d^\circ$,
\begin{equation}
\min_{\lambda>0} E_f[\lambda L]\leq \min_{\lambda>0} E_f[\lambda L^\prime]\iff  \tilde{E}_f[L]\leq \tilde{E}_f[L^\prime]\ ,
\end{equation}
which proves point (1) of the Proposition. Then by computing the minimum of $f$ we may find
\begin{equation}\label{rel_rf_AL}
r_f=\left(\frac{a_2x_2}{a_1x_1}\right)^{\frac2{(x_2-x_1)}}\quad\mbox{and}\quad \lambda^L_0\stackrel{\eqref{value_AL}}{=}\sqrt{r_f}\left(\frac{\zeta_L(x_2)}{\zeta_L(x_1)}\right)^{\frac1{(x_2-x_1)}}\ .
\end{equation}
Now in order to prove point (2) we note that due to \eqref{normaliz_L1} we have $\zeta_{L_2}(x)>1$ for all $x>0$ and for each $\epsilon>0$ there exists a finite bound $C_\epsilon>0$ such that $\zeta_{L_2}(x)\le C_\epsilon$ for all $x>d+\epsilon$. Moreover, we have 
\begin{equation}\label{limit_L1}
\lim_{x\to+\infty}\zeta_{L_2}(x)=\#L_2^{(1)}\ .
\end{equation}
From \eqref{limit_L1} and \eqref{rel_rf_AL} we find 
\begin{equation*}
\lim_{\substack{x_1,x_2\to+\infty\\x_1<x_2, r_f=r_0}}\lambda^{L_2}_0(x_1,x_2,a_1,a_2)= \sqrt{r_0}\lim_{\substack{x_1,x_2\to+\infty\\x_1<x_2, r_f=r_0}}\left(\frac{\zeta_{L_2}(x_2)}{\zeta_{L_2}(x_1)}\right)^{\frac1{x_2-x_1}}= \sqrt{r_0}\ ,
\end{equation*}
which proves \eqref{limit_r0} and concludes the proof.
\end{proof}
Next, extending point (1) of Proposition~\ref{prop-LJglobal}, we formulate some simple equivalent condition for $\Lambda$ to minimize $E_f[L]$ among lattices of unit density:
\begin{proposition}\label{prop:equiv}
Let $L_0=\lambda \Lambda$ for $\Lambda\in \mathcal{L}_d^\circ$. If $L\in \mathcal L_d^\circ$ then we define functions $H_L,h_L:(d,+\infty)\to \R$ by
$$
H_L(x):=\frac{1}{x}\log\left(  \frac{\zeta_L(x)}{\zeta_{\Lambda }(x)}\right), \quad
h_L(x):=-\log \zeta_L(x) +x\frac{\partial_x\zeta_L(x)}{\zeta_L(x)}.
$$
The following conditions are equivalent:
\begin{enumerate}
\item For any $x_2>x_1>d$, the lattice $L_0 $ is the unique minimizer of $E_f$ on $D_{\mathcal L_d}$.
\item  For any $x_2>x_1>d$, the lattice $\Lambda $ is the unique minimizer of $\tilde{E}_f$, defined by \eqref{def-Etilde}, on $D_{\mathcal L_2^\circ}$.
\item For any Bravais lattice $L\in \mathcal L_d^\circ\setminus\{ \Lambda \}$, the function $H_L$ is strictly increasing on $(d,+\infty)$.
\item For any $x>d$, $\Lambda $ is the unique minimizer of $h_L(x)$ on $D_{\mathcal{L}_d^\circ}$.
\end{enumerate}
\end{proposition}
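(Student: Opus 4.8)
The plan is to prove the chain of equivalences $(1)\iff(2)\iff(3)\iff(4)$ by a sequence of direct computations, most of which are elementary once the right reformulation is found. The equivalence $(1)\iff(2)$ is already contained in point (1) of Proposition~\ref{prop-LJglobal}: minimizing $E_f$ over all of $D_{\mathcal{L}_d}$ reduces to minimizing $\tilde E_f$ over $D_{\mathcal{L}_d^\circ}$, and since the reduction is via the strictly monotone relation $\min_\lambda E_f[\lambda L]\le \min_\lambda E_f[\lambda L']\iff \tilde E_f[L]\le\tilde E_f[L']$, uniqueness is preserved in both directions. So the first step is simply to invoke \eqref{min_A} and the displayed biconditional in the proof of Proposition~\ref{prop-LJglobal}.

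For $(2)\iff(3)$, the key observation is to rewrite $\log\tilde E_f[L]$ in terms of $H_L$. Writing $\log\tilde E_f[L] = x_1\log\zeta_L(x_2) - x_2\log\zeta_L(x_1)$ and comparing with the same quantity for $\Lambda$, one gets
\begin{equation*}
\log\frac{\tilde E_f[L]}{\tilde E_f[\Lambda]} = x_1x_2\Bigl(H_L(x_2)-H_L(x_1)\Bigr),
\end{equation*}
after dividing and multiplying appropriately: indeed $x_1\log(\zeta_L(x_2)/\zeta_\Lambda(x_2)) - x_2\log(\zeta_L(x_1)/\zeta_\Lambda(x_1)) = x_1x_2\bigl(\tfrac1{x_2}\log\tfrac{\zeta_L(x_2)}{\zeta_\Lambda(x_2)} - \tfrac1{x_1}\log\tfrac{\zeta_L(x_1)}{\zeta_\Lambda(x_1)}\bigr)$. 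Since $x_1x_2>0$, the statement ``$\tilde E_f[\Lambda]<\tilde E_f[L]$ for all $x_2>x_1>d$ and all $L\ne\Lambda$'' is exactly ``$H_L(x_2)>H_L(x_1)$ for all $x_2>x_1>d$ and all $L\ne\Lambda$'', i.e.\ $H_L$ strictly increasing for every $L\ne\Lambda$. I would also note that $H_\Lambda\equiv 0$, so the normalization is consistent.

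For $(3)\iff(4)$, differentiate $H_L$. We have $H_L(x) = \tfrac1x\bigl(\log\zeta_L(x)-\log\zeta_\Lambda(x)\bigr)$, so
\begin{equation*}
H_L'(x) = -\frac{1}{x^2}\bigl(\log\zeta_L(x)-\log\zeta_\Lambda(x)\bigr) + \frac1x\Bigl(\frac{\partial_x\zeta_L(x)}{\zeta_L(x)}-\frac{\partial_x\zeta_\Lambda(x)}{\zeta_\Lambda(x)}\Bigr) = \frac{1}{x^2}\bigl(h_L(x)-h_\Lambda(x)\bigr),
\end{equation*}
where $h_L(x) = -\log\zeta_L(x) + x\,\partial_x\zeta_L(x)/\zeta_L(x)$ is precisely the function in the statement. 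Hence $H_L$ is strictly increasing on $(d,+\infty)$ for every $L\ne\Lambda$ if and only if $h_L(x)>h_\Lambda(x)$ for all $x>d$ and all $L\ne\Lambda$, which is exactly the statement that $\Lambda$ is the unique minimizer of $L\mapsto h_L(x)$ on $D_{\mathcal{L}_d^\circ}$ for every $x>d$. There is one subtlety to address: $H_L$ strictly increasing is a priori equivalent to $H_L'\ge0$ everywhere with $H_L'>0$ on a dense set, not $H_L'>0$ everywhere; but since $\zeta_L$ is real-analytic in $x$ on $(d,+\infty)$, $h_L-h_\Lambda$ is real-analytic and non-negative, so either it vanishes identically (forcing, by integration and the $H_L(\Lambda)=0$ normalization together with $\zeta_L,\zeta_\Lambda\to\#L^{(1)},\#\Lambda^{(1)}$ reasoning, $L=\Lambda$) or it is strictly positive off a discrete set — and a short argument upgrades ``strictly increasing'' to ``$h_L>h_\Lambda$ pointwise'' and back. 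The main obstacle is thus not any deep inequality but handling this analyticity/strictness bookkeeping cleanly so that the word ``unique'' is correctly propagated through all four statements; I would dispatch it using real-analyticity of $x\mapsto\zeta_L(x)$ and the fact that a non-negative real-analytic function that is not identically zero has only isolated zeros.
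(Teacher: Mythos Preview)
Your proof follows essentially the same route as the paper's: cite Proposition~\ref{prop-LJglobal}(1) for $(1)\iff(2)$, rewrite the comparison $\tilde E_f[\Lambda]<\tilde E_f[L]$ as $H_L(x_1)<H_L(x_2)$ for $(2)\iff(3)$, and compute $H_L'(x)=\tfrac{1}{x^2}\bigl(h_L(x)-h_\Lambda(x)\bigr)$ for $(3)\iff(4)$. The paper's proof is in fact even terser than yours --- it records exactly these three observations and nothing more.

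You go beyond the paper by flagging the strictness issue in $(3)\Rightarrow(4)$: ``$H_L$ strictly increasing'' only gives $H_L'\ge0$, not $H_L'>0$ everywhere, so it does not immediately yield that $\Lambda$ is the \emph{unique} minimizer of $h_L(x)$ for every $x$. Your proposed fix via real-analyticity does not actually close this gap: a non-negative real-analytic function can have isolated zeros (think $(x-x_0)^2$), and $H_L$ would still be strictly increasing with $H_L'(x_0)=0$, so $h_L(x_0)=h_\Lambda(x_0)$ and $(4)$ fails at $x_0$ while $(3)$ holds. The paper does not address this point at all, so your write-up is at least as complete as the original; but be aware that the ``short argument'' you allude to is not as short as you suggest, and the equivalence $(3)\iff(4)$ as literally stated needs either a sharper argument or a slight weakening of the uniqueness clause in $(4)$.
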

\begin{proof}
The equivalence between (1) and (2) is treated in point (1) of Proposition~\ref{prop-LJglobal}. To prove the equivalence between (2) and (3), note that 
\begin{equation}
\tilde{E}_f[\Lambda]\leq \tilde{E}_f[L]\iff \left( \frac{\zeta_L(x_1)}{\zeta_\Lambda(x_1)} \right)^{\frac{1}{x_1}}\leq \left( \frac{\zeta_L(x_2)}{\zeta_\Lambda(x_2)}  \right)^{\frac{1}{x_2}}.
\end{equation}
For proving the equivalence between (3) and (4), note that $H'(x)=\frac{1}{x^2}\left( h_L(x)-h_{\Lambda }(x) \right)$.
\end{proof}
\begin{rmk}
We conjecture that the equivalent statements of Proposition~\ref{prop:equiv} hold true for the triangular lattice $\Lambda=\Lambda_1=\frac{2}{\sqrt 3}\mathsf{A}_2$ in dimension $d=2$ and for $\Lambda\sim \mathsf{D}_3, \sim\mathsf{D}_4, \sim\mathsf{E}_8,\sim\Lambda_{24}$ respectively in dimensions $3,4,8$ and $24$, where the equivalence relation is as in Definition~\ref{defn:shape} and indicates that $\Lambda$ has the shape of the corresponding lattices.

\medskip

About the case $d=2$, (3) of Lemma~\ref{prop:equiv} has been proved in \cite[Thm. 1.2.B.2 and Lem. 6.17]{BetTheta15} on the small interval $I=(2,2\psi^{-1}(\log \pi)-2)$ where $\psi$ is the digamma function and $2\psi^{-1}(\log \pi)-2\approx 5.256$. Figure \ref{H2to10square}, where we have plotted $H_{\Z^2}$, shows the monotonicity of the function and is then a numerical evidence of the optimality of the triangular lattice against the square lattice. The same conclusion follows from Figure \ref{fig:ratio} where we find that for all $x_1<x_2<2x_1$, that we tested (in particular for all half-integer values of $x_1$, $x_2$ in the range $1<x_1<x_2\le 25$), the triangular lattice has an energy $E_f$ lower than the square lattice.  Moreover, numerically we find that the value $ \frac{\min_{A>0}E_f[\sqrt{A}\Lambda_1]}{\min_{A>0}E_f[\sqrt{A}\mathbb Z^2]}$ is increasing in $x_1$, $x_2$.

\medskip

In dimension $d=3$ we also have plotted in Figure \ref{Z3vsFCCBCC} the function $H_{\Z^3}$ where $\Lambda\in \{\mathsf{D}_3,\mathsf{D}_3^*\}$ are the FCC lattice and the BCC lattice of unit density. Both functions seem to be increasing, showing the optimality of the FCC lattice (resp. BCC lattice) against the cubic lattice. The same occurs if $L$ is the BCC lattice and $\Lambda$ is the FCC lattice as we can see in Figure \ref{fig-BCCFCC}. These numerics are consistent with \cite[Conjecture 1.7]{Beterminlocal3d} which states that optimality of $\mathsf{D}_3$ holds in dimension $3$ for any exponents $x_2>x_1>3$.
\end{rmk}
\begin{figure}[!h]
\centering
\includegraphics[width=8cm]{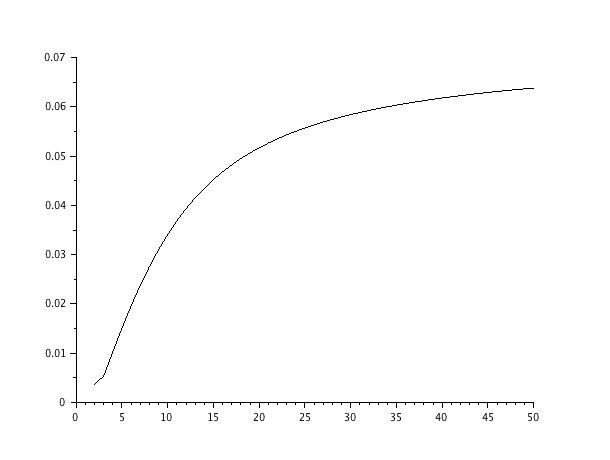}  
\caption{Plot of function $H_{\Z^2}$ on $[2,50]$, for the triangular lattice $\Lambda=\Lambda_1$. The triangular lattice seems to have lower Lennard-Jones type energy than the square lattice for any values of the parameters.}\label{H2to10square}
\end{figure}
\begin{figure}[!h]
\centering
\includegraphics[width=9cm]{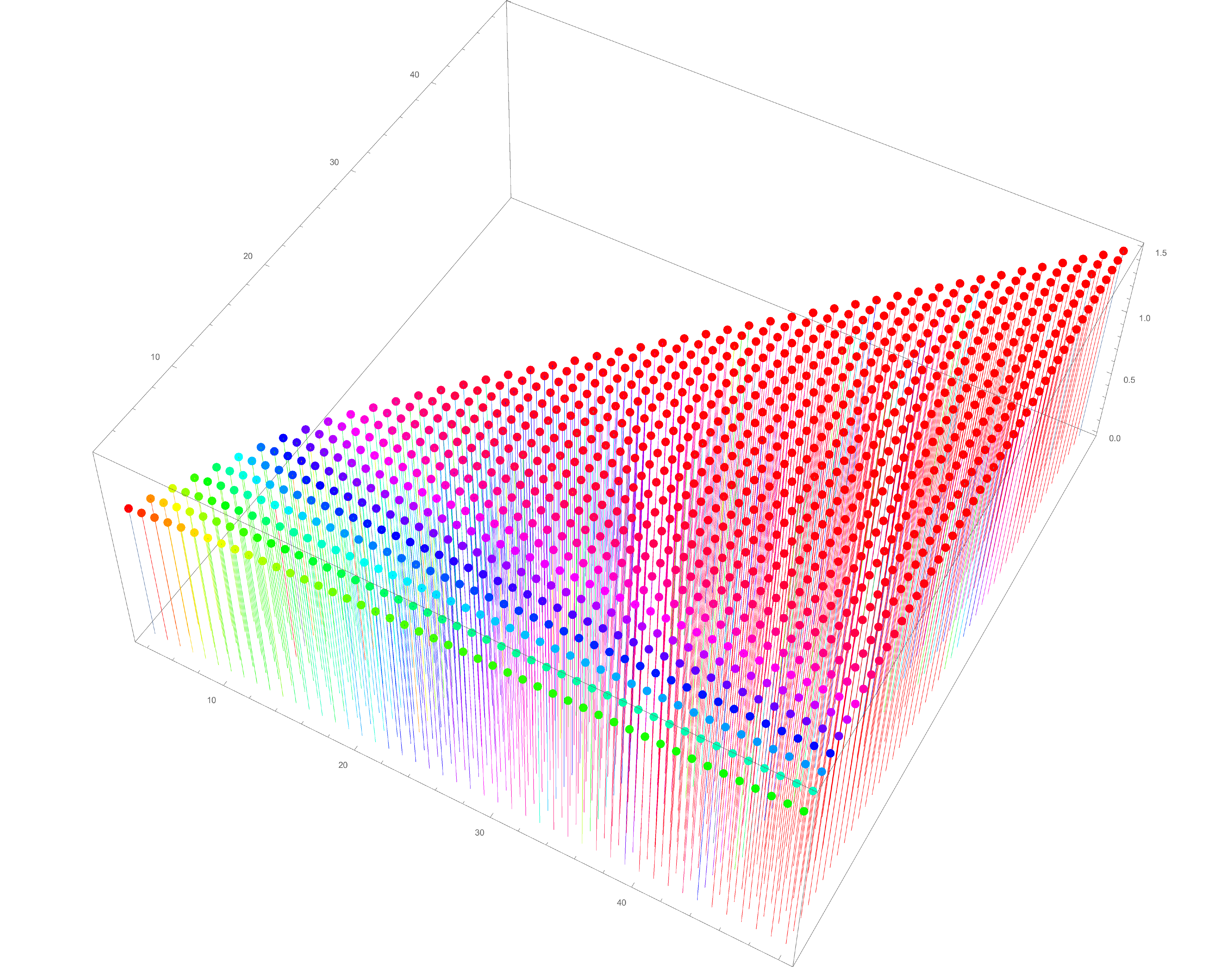} \\
\caption{Plot in the $(x_1,x_2)$-plane of the quantity  $\frac{\min_{\lambda>0}E_f[\lambda\Lambda_1]}{\min_{\lambda>0}E_f[\lambda\mathbb Z^2]}$, for $3\le x_1<x_2\le 50$ and $x_1,x_2\in\mathbb Z$. Note that the smallest value is taken at $x_1=3, x_2=4$ and equals $\approx 1.061$, i.e. is already larger than $1$, and then the value increases in both coordinate directions, so that for $x_1=49$, $x_2=50$ it equals $\approx 1.499$, very close to the asymptotic value $3/2$.}
\label{fig:ratio}
\end{figure}
\begin{figure}[!h]
\centering
\includegraphics[width=6cm]{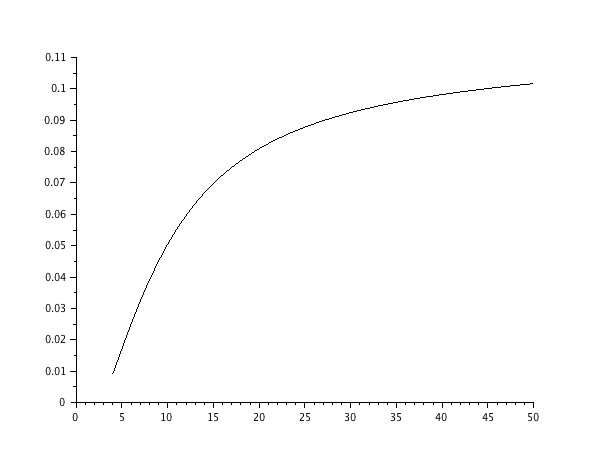}  \quad \includegraphics[width=6cm]{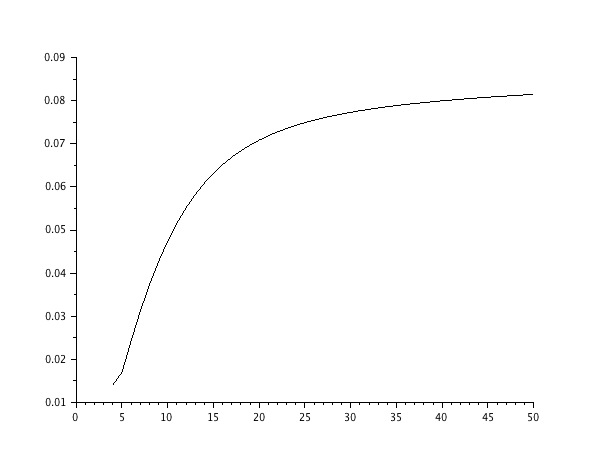}
\caption{Plot of function $H_{\Z^3}$ on $[4,50]$ where $\Lambda=\mathsf{D}_3$ is the FCC lattice (on the left) and $\Lambda=\mathsf{D}_3^*$ the BCC lattice (on the right). The FCC and BCC lattices seem to have lower Lennard-Jones type energy than the cubic lattice for any values of the parameters.}\label{Z3vsFCCBCC}
\end{figure}
\begin{figure}[!h]
\centering
\includegraphics[width=8cm]{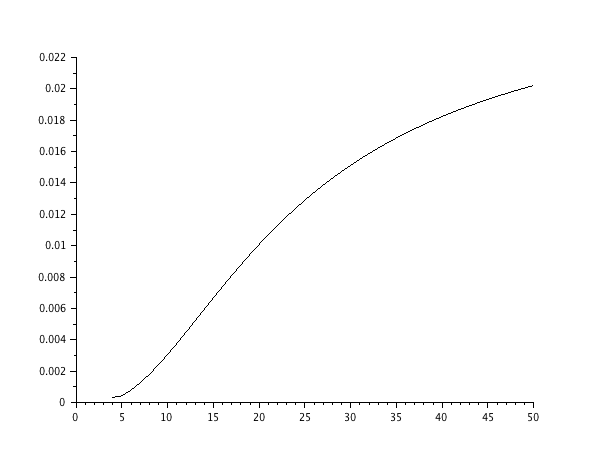}  
\caption{Plot of function $H_{\mathsf{D}_3^*}$ on $[4,50]$ where $\Lambda=\mathsf{D}_3$ is the FCC lattice. The FCC lattice seems to have lower Lennard-Jones type energy than the BCC lattice for any values of the parameters.}\label{fig-BCCFCC}
\end{figure}
\begin{definition}\label{def:kissing} The \emph{kissing number} (also called \emph{coordination number} in crystallography) of a Bravais lattice $L\in\mathcal L_d$, denoted $\tau(L)$, is defined as the number of nearest-neighbors in $L$ of the origin, $\tau(L):=\#L^{(1)}$ in our notation. We also define the \emph{optimal kissing number in dimension $d$} by $k(d):=\max_{L\in\mathcal L_d}\tau(L)$.
\end{definition}
\begin{rmk}\label{rmk:kissing}
It is known (see \cite[Table 1.1]{ConSloanPacking}, \cite{musinkiss} and the references therein) that $k(2)=\tau(\mathsf{A}_2)=6$, $k(3)=\tau(\mathsf{D}_3)=12$, $k(4)=\tau(\mathsf{D}_4)=24$, $k(8)=\tau(\mathsf{E}_8)=240$, $k(24)=\tau(\Lambda_{24})=196560$, while all other cases are not known. The above lattices are known to be unique optimizers among lattices, and in $2,8,24$ dimensions also among general packing configurations, while in dimension $3$ this is not the case, see \cite{kusnerkusner}, and in dimension $4$ it is not known, see \cite{musinkiss}.
\end{rmk}
\begin{proposition}[Asymptotic minimality and kissing number]\label{prop:asymin}
For $L,\Lambda\in\mathcal L_d$ and $f$ a Lennard-Jones type potential as in \eqref{lj2} there holds 
\begin{equation}\label{limit_ratioenergy}
\lim_{x_1\to +\infty}\lim_{x_2\to +\infty}\frac{\displaystyle \min_{\lambda>0}E_f[\lambda\Lambda]}{\displaystyle \min_{\lambda>0}E_f[\lambda L]}=\frac{\tau(\Lambda)}{\tau(L)}\ .
\end{equation}
As the above minimum values are negative, if $\tau(\Lambda)$ uniquely realizes the optimal kissing number $k(d)$ amongst lattices, then there exists $x_0=x_0(d)$ such that for any $x_2>x_1>x_0$, the unique minimizer of $E_f$ in $D_{\mathcal{L}_d}$ has shape $[\Lambda]$. In particular, this holds for the cases $(d,\Lambda)\in\{(2,\mathsf{A}_2), (3,\mathsf{D}_3), (4,\mathsf{D}_4),(8,\mathsf{E}_8),(24,\Lambda_{24})$.
\end{proposition}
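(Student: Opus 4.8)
The statement has two parts, which I would treat in order.

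\medskip

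\emph{The limit \eqref{limit_ratioenergy}.} Since $\min_{\lambda>0}E_f[\lambda L]$ depends only on the shape $[L]$, I may normalize $L,\Lambda\in\mathcal L_d^1$. Repeating the one-variable minimization of Proposition~\ref{prop-LJglobal} in this normalization gives $\min_{\lambda>0}E_f[\lambda L]=-C(x_1,x_2,a_1,a_2)\,g_L$, where
\[
g_L:=\zeta_L(x_1)^{\frac{x_2}{x_2-x_1}}\,\zeta_L(x_2)^{-\frac{x_1}{x_2-x_1}}=\zeta_L(x_1)\left(\frac{\zeta_L(x_1)}{\zeta_L(x_2)}\right)^{\frac{x_1}{x_2-x_1}},
\]
and $C>0$ is the product of the powers of $a_1,a_2$ and (the negative of) the bracket in \eqref{min_A}, hence independent of $L$. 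Therefore the ratio in \eqref{limit_ratioenergy} equals $g_\Lambda/g_L$, and the $a_i$ disappear. For $L\in\mathcal L_d^1$ the map $x\mapsto\zeta_L(x)$ is non-increasing on $(d,+\infty)$, finite for each such $x$, and $\zeta_L(x)\to\#L^{(1)}=\tau(L)\in(2,\infty)$ as $x\to+\infty$ by \eqref{limit_L1}. Letting $x_2\to+\infty$ with $x_1$ fixed, $\frac{x_2}{x_2-x_1}\to1$, $\frac{x_1}{x_2-x_1}\to0$, while $\zeta_L(x_2)$ stays in $[\tau(L),\zeta_L(x_1)]$, so $\zeta_L(x_2)^{-x_1/(x_2-x_1)}\to1$ and $g_L\to\zeta_L(x_1)$; likewise for $\Lambda$. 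Letting then $x_1\to+\infty$ gives $g_\Lambda/g_L\to\tau(\Lambda)/\tau(L)$.

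\medskip

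\emph{Reduction of the consequence.} Both minima being negative, $\min_{\lambda>0}E_f[\lambda\Lambda]<\min_{\lambda>0}E_f[\lambda L]\Longleftrightarrow g_\Lambda>g_L$, so I must show that once $x_1>x_0(d)$ the lattice $\Lambda$ is the \emph{strict} maximizer of $L\mapsto g_L$ on $\mathcal L_d^1$, uniformly in $x_2>x_1$. Two estimates are uniform and elementary. First, $\zeta_L$ being non-increasing, $g_L\ge\zeta_L(x_1)\ge\tau(L)$; in particular $g_\Lambda>\zeta_\Lambda(x_1)>\tau(\Lambda)=k(d)$ for all $x_2>x_1>d$. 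Second, $x\mapsto\zeta_L(x)$ is log-convex (a sum of the log-convex functions $e^{-x\log|v|}$), so $\phi_L:=-(\log\zeta_L)'$ is non-increasing and $\log\frac{\zeta_L(x_1)}{\zeta_L(x_2)}=\int_{x_1}^{x_2}\phi_L\le(x_2-x_1)\phi_L(x_1)$, whence
\[
\log g_L\le\log\zeta_L(x_1)+x_1\phi_L(x_1),\qquad \phi_L(x_1)=\frac{\sum_{v\in L,\ |v|>1}|v|^{-x_1}\log|v|}{\zeta_L(x_1)}.
\]
It therefore suffices to make the right-hand side $<\log k(d)$ for $x_1$ large, uniformly over $L\in\mathcal L_d^1$ with $L\not\sim\Lambda$ (so $\tau(L)\le k(d)-1$). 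Fixing $x_0>d$, a packing count gives a finite $Z_d(x_0)=\sup_{L\in\mathcal L_d^1}\zeta_L(x_0)$ and, similarly, a finite bound for the $\log|v|$-weighted sum, so both $\sum_{|v|>1+\delta}|v|^{-x}$ and the weighted tail are $\le(1+\delta)^{-(x-x_0)}\cdot\mathrm{const}$, tending to $0$ uniformly as $x\to\infty$. Choose $\delta_0=\delta_0(d)$ so small that, by a compactness argument, any set of lattice vectors with $1\le|v|\le1+\delta_0$ is (after rescaling) an arbitrarily small perturbation of a lattice kissing configuration, hence of cardinality $\le k(d)$. If $r_2(L)\ge1+\delta_0$, then $\zeta_L(x_1)=\tau(L)+o(1)$ and $x_1\phi_L(x_1)=o(1)$ uniformly, so $\log g_L\le\log(k(d)-1)+o(1)<\log k(d)$ for $x_1$ large. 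If $r_2(L)<1+\delta_0$, write $\zeta_L(x)=\tau(L)+P_L(x)+o(1)$ with $P_L(x)=\sum_{j\ge2:\ r_j\le1+\delta_0}\#L^{(j)}r_j^{-x}$ and $M=k(d)-\tau(L)\ge\sum_{j\ge2:\ r_j\le1+\delta_0}\#L^{(j)}$; using the identity $x\,\rho^{x}\log\frac1\rho\le\frac1e$ together with the log-sum inequality, one obtains $x_1\phi_L(x_1)\le\frac{P_L(x_1)\log(M/P_L(x_1))}{\tau(L)+P_L(x_1)}+o(1)$, hence $\log g_L\le\psi(P_L(x_1))+o(1)$ with $\psi(P):=\log(\tau(L)+P)+\frac{P\log(M/P)}{\tau(L)+P}$; since $\psi'(P)=\tau(L)\log(M/P)/(\tau(L)+P)^2\ge0$ and $P_L(x_1)<M$, this gives $\psi(P_L(x_1))<\psi(M)=\log(\tau(L)+M)=\log k(d)$.

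\medskip

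\emph{The main obstacle.} The bound just obtained degenerates precisely when $\tau(L)+\sum_{j\ge2:\ r_j\le1+\delta_0}\#L^{(j)}=k(d)$, i.e. when $L$ has $k(d)$ vectors of length $\le1+\delta_0$: then one only gets $\log g_L\le\log k(d)$, without a uniform strict gap. In that regime those $k(d)$ vectors are a small perturbation of a full lattice kissing configuration, so by uniqueness of $\Lambda$ the shape $[L]$ lies in a fixed neighborhood $\mathcal N$ of $[\Lambda]$ in $D_{\mathcal L_d^\circ}$; outside $\mathcal N$ the estimates above already yield $g_L<k(d)<g_\Lambda$ with room, so it remains only to compare $g_L$ and $g_\Lambda$ on $\mathcal N$, and this I expect to be the hard part. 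I would close it either (a) by a Hessian argument, deducing from the known strict local minimality of $\Lambda$ for $\tilde E_f$ — equivalently local maximality of $g$ at $[\Lambda]$ — made uniform for $x_1,x_2$ large after the correct normalization, that $g_\Lambda>g_L$ on $\mathcal N\setminus\{[\Lambda]\}$; or (b) by replacing the whole uniformity discussion with a compactness argument: $E_f$ has a minimizer on $D_{\mathcal L_d}$, its shape stays in a compact part of $D_{\mathcal L_d^\circ}$ (degenerate shapes being excluded by a coercivity estimate that compares $\zeta_L$ with lower-dimensional lattice zeta values), and any limit shape of the minimizers as $x_1,x_2\to+\infty$ must, by the first part of the proof, be a lattice realizing $k(d)$, hence $[\Lambda]$. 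The listed cases $(d,\Lambda)\in\{(2,\mathsf A_2),(3,\mathsf D_3),(4,\mathsf D_4),(8,\mathsf E_8),(24,\Lambda_{24})\}$ satisfy the uniqueness hypothesis by Remark~\ref{rmk:kissing}.
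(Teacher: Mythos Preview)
Your treatment of the limit \eqref{limit_ratioenergy} follows the paper's exactly: normalize $L,\Lambda\in\mathcal L_d^1$, use \eqref{min_A} to reduce the ratio of minima to $g_\Lambda/g_L$, send $x_2\to\infty$ via $\zeta_L(x_2)^{x_1/(x_2-x_1)}\to1$, then $x_1\to\infty$ via $\zeta_L(x_1)\to\tau(L)$.

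For the second assertion you go well beyond the paper, whose entire argument is one sentence (the minima are negative, and Remark~\ref{rmk:kissing} identifies the lattices realizing $k(d)$). You correctly observe that the pointwise limit does not by itself produce a single $x_0(d)$ valid for \emph{all} competing shapes simultaneously, and your machinery --- the log-convexity bound $\log g_L\le\log\zeta_L(x_1)+x_1\phi_L(x_1)$, the uniform packing bound $\sup_{L\in\mathcal L_d^1}\zeta_L(x_0)<\infty$, and the log-sum step giving $\psi(P)<\psi(M)=\log k(d)$ --- is correct. The residual difficulty you isolate, namely shapes whose short-vector set is a small perturbation of that of $\Lambda$, is genuine, and the paper does not address it either; of your two proposed closures, the compactness route (b) is the natural one and is presumably what the paper's one-liner is tacitly invoking. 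One technical caveat: your choice of $\delta_0$ ensuring at most $k(d)$ lattice vectors of length in $[1,1+\delta_0]$ appeals to a limiting kissing argument that, as phrased, yields only the \emph{general} kissing number as upper bound; to get the lattice kissing number $k(d)$ one should pass to a limit through a compact set of lattice shapes so that the limiting short-vector configuration is itself a lattice kissing configuration.
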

\begin{proof}
Due to \eqref{min_A} we have, for any Bravais lattices $L,\Lambda\in\mathcal L_d^1$,
\begin{equation}
\frac{\displaystyle \min_{\lambda>0}E_f[\lambda \Lambda]}{\displaystyle \min_{\lambda>0}E_f[\lambda L]}=\frac{\zeta_{\Lambda}(x_1)^{\frac{x_2}{x_2-x_1}} \zeta_{L}(x_2)^{\frac{x_1}{x_2-x_1}}  }{\zeta_{\Lambda}(x_2)^{\frac{x_1}{x_2-x_1}} \zeta_{L}(x_1)^{\frac{x_2}{x_2-x_1}} }.
\end{equation}
We have, for any $L\in \mathcal{L}_d^1$,
\begin{align}
\displaystyle \lim_{x_2\to +\infty} \zeta_L(x_2)^{\frac{x_1}{x_2-x_1}}&=\lim_{x_2\to +\infty}\left( \tau(L) +\sum_{p\in L^{(>1)}}|p|^{-x_2}  \right)^{\frac{x_1}{x_2-x_1}}\\
&=\lim_{x_2\to +\infty} \tau(L)^{\frac{x_1}{x_2-x_1}}\left( 1+\tau(L)^{-1}\sum_{p\in L^{(>1)}}|p|^{-x_2} \right)^{\frac{x_1}{x_2-x_1}}=1\ .
\end{align}
It follows that, using the fact that we have the normalization $L,\Lambda\in \mathcal L_d^1$,
\begin{align*}
\lim_{x_1\to +\infty}\lim_{x_2\to +\infty}\frac{\displaystyle \min_{\lambda>0}E_f[\lambda\Lambda]}{\displaystyle \min_{\lambda>0}E_f[\lambda L]} &=\lim_{x_1\to +\infty} \frac{\zeta_{\Lambda}(x_1)}{\zeta_L(x_1)}\\
&=\lim_{x_1\to +\infty}  \frac{\tau(\Lambda) + \sum_{p\in\Lambda^{(> 1)}} |p|^{-x_1}}{\tau(L) + \sum_{p\in L^{(>1)}} |p|^{-x_1}}=\frac{\tau(\Lambda)}{\tau(L)}.
\end{align*}
The final statement in the proposition holds by noting that $\min_{\lambda>0}E_f[\lambda\Lambda]<0$ as follows by taking $\lambda$ large enough, and by then using Remark~\ref{rmk:kissing}.
\end{proof}
\begin{rmk}
The optimality of the FCC lattice for large exponent proved above is consistent with \cite[Conjecture 1.7]{Beterminlocal3d} which in the absolutely summable case states that optimality of the $FCC$ lattice should hold in dimension $3$ for any exponents $x_2>x_1>3$.
\end{rmk}
\begin{rmk}[Testing the Theil crystallization criterion on Lennard-Jones-type potentials]
We note that in \cite[Th. 1.1]{Crystal} there were given conditions for a potential $V$ with one well, and needed to produce crystallization to a triangular lattice. Although these conditions are robust enough to accomodate a potential which is of power-law type near $r=0$ and near $r=+\infty$, these conditions on the other hand are not weak enough to include the case of Lennard-Jones type potentials of the form considered here. Indeed, if we normalize our potentials as indicated in \cite{Crystal}, so that $f(1)=-1$ is the minimum of $f$, namely we look at 
\[
f_{x_1,x_2}(r):=\frac{x_1}{x_2-x_1}\frac{1}{r^{x_2}} - \frac{x_2}{x_2-x_1}\frac{1}{r^{x_1}}\ ,
\]
then we find numerically the conditions $f_{x_1,x_2}''(1+\alpha)\ge 1$ and $f_{x_1,x_2}(1+\alpha)\ge -\alpha$, required as sufficient conditions in \cite[(3), (4)]{Crystal} for suitable $\alpha\in(0,\alpha_0)$ (where the upper bound $\alpha_0<1/3$ is implicitly found during the proof of \cite[Thm. 1.1]{Crystal}, and needs in fact to be fixed as a positive number much smaller than $1/3$). What we found is that the above conditions on $f_{x_1,x_2}$ are never simultaneously achieved for any choice of $\alpha<1/3$ for powers $2<x_1< x_2$.
\end{rmk}

\section{Some counterexamples}\label{sect-contrex}
\subsection{The triangular lattice can be minimizer at all scales for non-completely monotone $f$}\label{not_cm}
In the following example, we have numerically checked that there exist functions which are not completely monotone and such that the triangular lattice is the minimizer of $E_f$ at all scales.

\medskip

For any $\varepsilon>0$, define $f_\varepsilon:(0,+\infty)\to \R$ by
\begin{equation}
f_\varepsilon(r):=\frac{6}{r^4}-\frac{2(2+\varepsilon)}{r^3}+\frac{1+\varepsilon}{r^2}.
\end{equation}
The inverse Laplace transform of $f_\varepsilon$ is $\mathcal{L}^{-1}[f_\varepsilon](x)=x(x-1)(x-1-\varepsilon)$, and in particular $\mathcal{L}^{-1}[f_\varepsilon](x)\leq 0$ on $[1,1+\varepsilon]$. In Figure \ref{fig-lapinvepsilon} and Figure \ref{fig-graphepsilon}, we have respectively plotted $\mathcal{L}^{-1}[f_\varepsilon]$ and $f_\varepsilon$ for different values of $\varepsilon$.

\begin{figure}[!h]
\centering
\includegraphics[width=11cm]{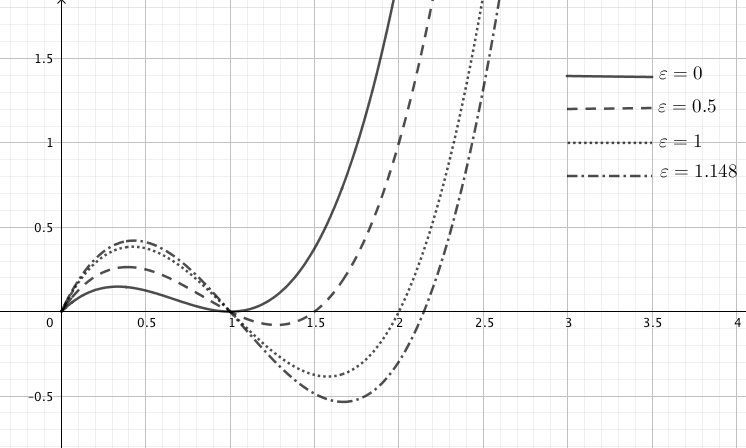} \\
\caption{Plot of function $\mathcal{L}^{-1}[f_\varepsilon]$ for $\varepsilon\in \{0,0.5,1,1.148\}$}
\label{fig-lapinvepsilon}
\end{figure}
\begin{figure}[!h]
\centering
\includegraphics[width=11cm]{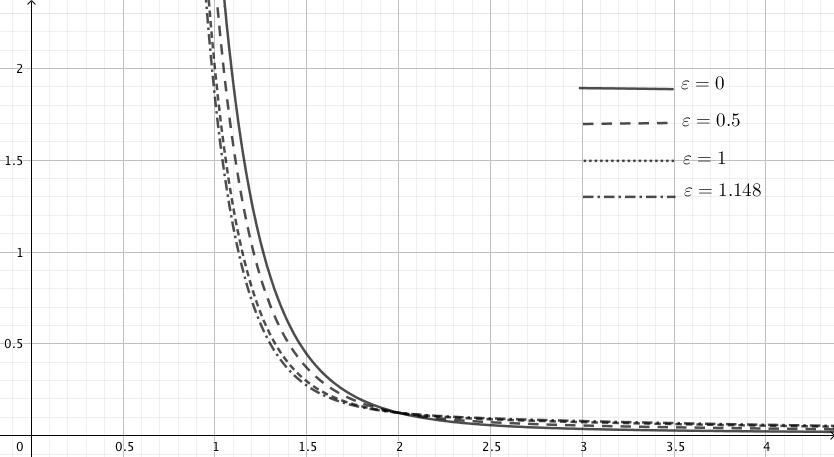} \\
\caption{Plot of function $f_\varepsilon$ for $\varepsilon\in \{0,0.5,1,1.148\}$}
\label{fig-graphepsilon}
\end{figure}
We now justify the following claim, based on numerical evidence:
\begin{claim}
There exists $\varepsilon_0\approx 1.148>0$ such that for any $0\leq \varepsilon<\varepsilon_0$ and any $\lambda>0$, $\Lambda_1$ is the unique minimizer of $L\mapsto E_{f_\varepsilon}[\lambda L]$ on $D_{\mathcal{L}_2^\circ}$.
\end{claim}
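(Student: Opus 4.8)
The plan is to use that $f_\varepsilon$ is a polynomial in $1/r$, so that $E_{f_\varepsilon}[\lambda L]$ is an explicit polynomial in $1/\lambda^{2}$, and to reduce the Claim to a discriminant inequality between differences of Epstein zeta values; the threshold $\varepsilon_0$ is then read off from one numerically computed constant.

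\emph{Step 1 (reduction to a quadratic inequality).} Since $f_\varepsilon(r)=6r^{-4}-2(2+\varepsilon)r^{-3}+(1+\varepsilon)r^{-2}$, for every $\lambda>0$ and $L\in\mathcal L_2^\circ$ I can write, with $\zeta_L$ as in \eqref{def-epsteinzeta},
\[
E_{f_\varepsilon}[\lambda L]=\frac{6}{\lambda^{8}}\zeta_L(8)-\frac{2(2+\varepsilon)}{\lambda^{6}}\zeta_L(6)+\frac{1+\varepsilon}{\lambda^{4}}\zeta_L(4).
\]
Setting $t:=\lambda^{-2}>0$ and $D_{2s}(L):=\zeta_L(2s)-\zeta_{\Lambda_1}(2s)$ gives
\[
E_{f_\varepsilon}[\lambda L]-E_{f_\varepsilon}[\lambda\Lambda_1]=t^{2}\bigl(6D_8(L)\,t^{2}-2(2+\varepsilon)D_6(L)\,t+(1+\varepsilon)D_4(L)\bigr)=:t^{2}P_{L,\varepsilon}(t).
\]
By the classical minimality of the triangular lattice for the Epstein zeta function (Rankin \cite{Rankin}, Cassels \cite{Cassels}, Ennola \cite{Eno2}, Diananda \cite{Diananda}; equivalently Proposition \ref{prop:completemonot} applied to $r\mapsto r^{-s}$), $D_4(L),D_6(L),D_8(L)>0$ for every $L\in D_{\mathcal L_2^\circ}\setminus\{\Lambda_1\}$. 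Hence $P_{L,\varepsilon}$ is a quadratic in $t$ with positive leading coefficient, positive constant term, and positive sum of roots $\frac{(2+\varepsilon)D_6(L)}{3D_8(L)}$, so it is positive on $(0,+\infty)$ if and only if its discriminant is negative, i.e.
\[
\frac{(2+\varepsilon)^{2}}{6(1+\varepsilon)}<Q(L):=\frac{D_4(L)\,D_8(L)}{D_6(L)^{2}}.
\]
(Equality for some $L\ne\Lambda_1$ would make $P_{L,\varepsilon}$ vanish at a positive $t$, so $\Lambda_1$ would fail to be the \emph{unique} minimizer at that scale.) Thus $\Lambda_1$ is the unique minimizer of $L\mapsto E_{f_\varepsilon}[\lambda L]$ on $D_{\mathcal L_2^\circ}$ for all $\lambda>0$ if and only if $\frac{(2+\varepsilon)^{2}}{6(1+\varepsilon)}<m$, where $m:=\inf_{L\in D_{\mathcal L_2^\circ}\setminus\{\Lambda_1\}}Q(L)$.

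\emph{Step 2 (the threshold $\varepsilon_0$).} The map $\varepsilon\mapsto\frac{(2+\varepsilon)^{2}}{6(1+\varepsilon)}$ is continuous on $[0,+\infty)$, equals $\frac23$ at $\varepsilon=0$, and has derivative $\frac{\varepsilon(2+\varepsilon)}{6(1+\varepsilon)^{2}}\ge0$, hence is strictly increasing for $\varepsilon>0$. Therefore, once we know $m>\frac23$, Step 1 shows the Claim holds for $0\le\varepsilon<\varepsilon_0$, where $\varepsilon_0$ is the unique positive root of $\frac{(2+\varepsilon_0)^{2}}{6(1+\varepsilon_0)}=m$ (and it fails at $\varepsilon_0$, since $m$ is attained by Step 3). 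The inequality $m>\frac23$ is itself the endpoint case $\varepsilon=0$: indeed $\mathcal L^{-1}[f_0](t)=t(t-1)^{2}\ge0$, so $f_0$ is completely monotone, and Proposition \ref{prop:completemonot} gives $P_{L,0}>0$ on $(0,+\infty)$, that is $Q(L)>\frac23$ for all $L\ne\Lambda_1$.

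\emph{Step 3 (evaluating $m$).} It remains to compute $m$ over the reduced moduli space of planar unit-density lattices. I would first rule out the two boundary regimes. As $L\to\Lambda_1$, a second-order Taylor expansion of $\zeta_L(4),\zeta_L(6),\zeta_L(8)$ at the critical point $\Lambda_1$ along a tangent deformation $H$, combined with the Cauchy--Schwarz inequality, gives $Q(L)\to\frac56\,c(H)$ with $c(H)\ge1$, hence a limit $\ge\frac56>\frac23$. As $L$ degenerates towards the cusp, all three $D_{2s}(L)$ are dominated by the contribution $\sim 2\zeta(2s)r_1(L)^{-2s}$ of the collapsing shortest lattice line (with $r_1(L)$ the length of the shortest nonzero vector and $\zeta$ the Riemann zeta function), so $Q(L)\to\frac{\zeta(4)\zeta(8)}{\zeta(6)^{2}}>1>m$. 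Hence $Q$, being continuous and positive, attains its infimum at an interior lattice, which one locates numerically by evaluating the three Epstein zeta functions: this yields $m\approx0.769$ and therefore $\varepsilon_0\approx1.148$. The genuine obstacle is precisely this last step: making the value of $m$ rigorous would require either certified (interval-arithmetic) enclosures of the Epstein zeta values over the whole fundamental domain, or an a priori argument reducing the infimum of $Q$ to a one-parameter subfamily (such as rhombic or rectangular lattices) that can be analyzed directly — which is why the statement is presented as a numerically supported Claim.
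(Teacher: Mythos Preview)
Your approach is the same as the paper's: reduce to the discriminant inequality $h(\varepsilon):=\frac{(2+\varepsilon)^2}{6(1+\varepsilon)}<Q(L):=\frac{D_4(L)D_8(L)}{D_6(L)^2}$, analyse $Q$ at the two boundaries of the fundamental domain (near $\Lambda_1$ and at the cusp), and finish with a numerical search for $m=\inf_L Q(L)$. Two points deserve correction. First, your Step~2 observation that $f_0$ is completely monotone only yields the \emph{pointwise} bound $Q(L)>\tfrac23$; it does not by itself give $m>\tfrac23$, since the infimum over the non-compact domain could still equal $\tfrac23$ --- you correctly defer this to Step~3, so just don't claim it in Step~2. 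Second, your limit at $\Lambda_1$ is miscomputed: using the paper's Taylor expansion (which relies on every shell of $\Lambda_1$ being a spherical $4$-design, so that $D_{2s}(L)\sim \frac{2s(2s-2)}{32}\mathrm{Tr}(A^{-1}H)^2\zeta_{\Lambda_1}(2s)$ with a \emph{direction-independent} leading term), one gets
\[
\lim_{L\to\Lambda_1}Q(L)=\frac{8\cdot 48}{24^2}\cdot\frac{\zeta_{\Lambda_1}(4)\zeta_{\Lambda_1}(8)}{\zeta_{\Lambda_1}(6)^2}=\frac{2}{3}\cdot\frac{\zeta_{\Lambda_1}(4)\zeta_{\Lambda_1}(8)}{\zeta_{\Lambda_1}(6)^2}\approx 0.772,
\]
not $\frac{5}{6}c(H)$. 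There is no residual dependence on $H$. Your Cauchy--Schwarz is then applied to the zeta values themselves, giving $\zeta_{\Lambda_1}(6)^2<\zeta_{\Lambda_1}(4)\zeta_{\Lambda_1}(8)$ and hence limit strictly above $\tfrac23$; so the conclusion survives, but the coefficient $\tfrac56$ and the phrase ``$c(H)\ge 1$'' should be replaced by the computation above.
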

We have, using the scaling property of the Epstein zeta function, for any Bravais lattice $L\in D_{\mathcal{L}_2^\circ}$
\begin{equation}\label{difference}
 E_{f_\varepsilon}[\lambda L]-E_{f_\varepsilon}[\lambda \Lambda_1]=\frac{P_{\varepsilon,L}(\lambda^2)}{\lambda^8}
\end{equation}
where, using the notation $d_s(L):=\zeta_{L}(s)-\zeta_{\Lambda_1}(s)$, we have

\[P_{\varepsilon,L}(X):=(1+\varepsilon)d_4(L)X^2 -2(2+\varepsilon)d_6(L)X+6d_8(L).
\]

The discriminant of polynomial $P_{\varepsilon, L}$ is

\begin{equation}\label{delta_eps_theta}
\Delta(\varepsilon,L)=4(2+\varepsilon)^2d_6(L)^2-24(1+\varepsilon)d_4(L)d_8(L).
\end{equation}

Note that, for all any $s>2$, $d_s(L)\ge 0$ with equality only for $L=\Lambda_1$ on $D_{\mathcal{L}_2^\circ}$. Furthermore, we have that $\Delta(\varepsilon,L)< 0$ for $\varepsilon\geq 0$ and $L\neq \Lambda_1$ (on $D_{\mathcal{L}_2^\circ}$) if and only if
\begin{equation}
h(\varepsilon):=\frac{(2+\varepsilon)^2}{6(1+\varepsilon)}< \frac{d_4(L)d_8(L)}{d_6(L)^2}=:c(L).
\end{equation}
Note that $h$ is an increasing function on $[0,+\infty)$ and $h(0)=2/3$. Furthermore, we can prove  that $\lim_{L\mapsto \Lambda_1} c(L)=c\in \R$. Thus, if we find $\varepsilon_0$ such that
\begin{equation}
h(\varepsilon_0)< \inf_{L, |L|=1} c(L),
\end{equation}
then $\Delta(\varepsilon,L)< 0$ for any $0\leq \varepsilon \leq \varepsilon_0$ and any $L\in D_{\mathcal{L}_2^\circ}$.

\medskip

As in \cite[Sect. 3]{Coulangeon:kx}, we recall that for $L\in\mathcal L^\circ_2$, we can write
\[
\zeta_L(s)=\sum_{p\in L\setminus\{0\}}\frac{1}{|p|^s}=\sum_{x\in\Z^2\setminus\{0\}}\frac{1}{(x^tAx)^{s/2}}\ ,
\]
for suitable $A\in\mathcal{P}_2^\circ$. We can view $\mathcal P_2^\circ$ as a differential submanifold of the vector space of $2\times2$ symmetric matrices with real entries $\mathcal{S}_2(\R)$. The tangent space to $\mathcal P_2^\circ$ at any point $A\in\mathcal S_2(\R)$ then identifies with the set $\{H\in \mathcal{S}_2(\R):\ Tr(A^{-1}H)=0\}$. Moreover, the exponential map $H\mapsto e_A(H):=A \exp(A^{-1}H)$ induces a local diffeomorphism from the tangent space to $\mathcal{P}_2^\circ$. Noting that for $\Lambda_1$ all layers hold a $4$-design (see \cite[Def. 6.3, Thm. 6.12]{venkov}), we therefore obtain, by \cite[Eq. (3.5), with dimension $n=2$ and exponent $s/2$ in our case]{Coulangeon:kx}, that for $A$ corresponding to the lattice $\Lambda_1$ and for $L$ another lattice corresponding to the quadratic form with matrix $e_A(H)$ there holds
\[\label{TaylorEpstein}
\zeta_L(s)-\zeta_{\Lambda_1}(s)= \frac{s(s-2)}{32}\op{Tr}(A^{-1}H)^2\zeta_{\Lambda_1}(s) + o\left(\|H^2\|\right)\ .
\]
Thus we have
\[
c(L)=\frac{(8 Tr(A^{-1}H)^2\zeta_{\Lambda_1}(4)+o(\|H^2\|))(48\ Tr(A^{-1}H)^2\zeta_{\Lambda_1}(8)+o(\|H^2\|))}{(24\ Tr(A^{-1}H)^2\zeta_{\Lambda_1}(6)+o(\|H^2\|))^2}.
\]
And we then get, taking $H\to 0$, $H\in\mathcal S_2(\R)\setminus\{0\}$, 
\begin{align}
\lim_{L\to \Lambda_1}c(L)=\frac{2}{3}\frac{\zeta_{\Lambda_1}(4)\zeta_{\Lambda_1}(8)}{\zeta_{\Lambda_1}(6)^2}\approx 0.7719234>\frac{2}{3}.
\end{align}
Numerically, we find that
\begin{equation}
\inf_{L, |L|=1} c(L) > 0.769.
\end{equation}

To obtain the above, we have proceeded as follows. We parametrized $L\in D_{\mathcal L_2^\circ}$ as $L(x,y)$ with $(x,y)$ a point of the half elliptic fundamental domain
$$
\tilde D=\{(x,y)\in \R^2 : 0\leq x\leq 1/2, x^2+y^2\geq 1  \},
$$
and denote by slight abuse of notation $c(L(x,y))=c(x,y)$. We have first computed the limit of $c(x,y)$ for fixed $x\in [0,1/2]$ as $y\to +\infty$. Using the fact that $\zeta_L(s)=2y^{\frac{s}{2}}\zeta(s)+h(s,y)$ where $h(s,y)=o(y^{\frac{s}{2}})$ (see \cite[Thm 1]{katsurada}) and the well-known exact values of $\zeta(2k)$ for $k\in \{2,3,4\}$ of the Riemann zeta function $\zeta(s)=\sum_{m>0}m^{-s}$, which are
$$
\zeta(4)=\frac{\pi^4}{90},\quad \zeta(6)=\frac{\pi^6}{945},\quad \zeta(8)=\frac{\pi^8}{9450},
$$
we obtain, for any fixed $x\in [0,1/2]$,
\begin{equation}
\lim_{y\to +\infty} c(x,y)=\lim_{y\to +\infty} \frac{\zeta_L(4)\zeta_L(8)}{\zeta_L(6)^2}=\frac{\zeta(4)\zeta(8)}{\zeta(6)^2}=\frac{21}{20}=1.05.
\end{equation}
It is therefore possible to reduce the research of the minimizer of $c$ to a compact subset of $\tilde{D}$. A numerical study of $c$ shows that, for $y>3$, $c(x,y)>0.9$. Therefore, we have computed the values of $c(x,y)$ for $(x,y)$ on a grid and on the set $\{(x,y)\in \tilde{D} : x^2+y^2=1\}$, i.e. among rhombic lattices. The results can be seen in Figure \ref{fig-c3}.

\begin{figure}[!h]
\centering
\includegraphics[width=6cm]{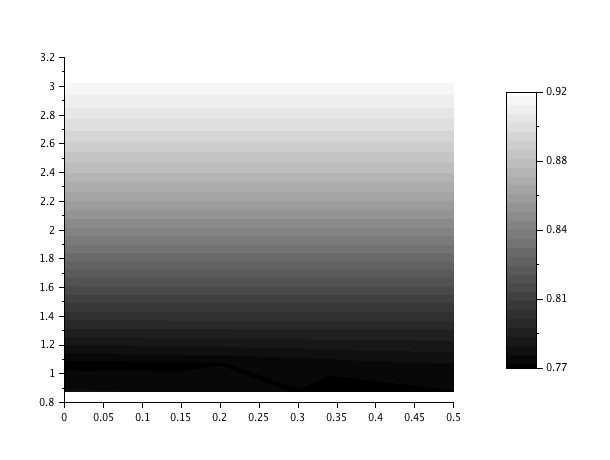}  \quad \includegraphics[width=6cm]{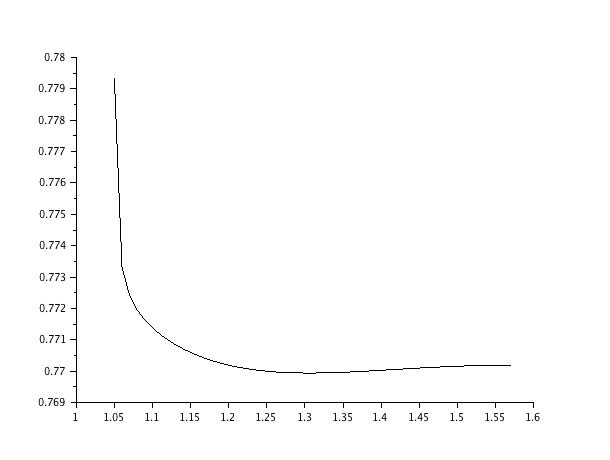}
\caption{Plot of $c(x,y)$ for $\sqrt{3}/2<y<3$ (on the left) and $c(\cos(t),\sin(t))$, $t\in \left[\frac{\pi}{3},\frac{\pi}{2} \right]$ (on the right).}\label{fig-c3}
\end{figure}

We finally get, from an enough refined mesh of the values $(x,y)\in \tilde{D}$ such that $y\leq 3$, that $\min_{(x,y)\in \tilde{D}} c(x,y)>0.769$ and this minimum is approximatively equal to $0.7699393$, achieved very close to the point $(x,y)=(0.2607474,0.9654071)$.

\medskip

Hence, solving $h(\varepsilon)\leq 0.769$, we get $\varepsilon_0\approx 1.1485753<1.148$. Therefore, we have numerically checked that, for any $0\leq \varepsilon \leq 1.148$ and any Bravais lattice $L\in \mathcal{L}_2^\circ$, $\Delta(\varepsilon,L)< 0$, i.e., for any $\lambda>0$ and any $\varepsilon,L$ as above,
\begin{equation}
 E_{f_\varepsilon}[\lambda L]-E_{f_\varepsilon}[\lambda \Lambda_1]\geq 0,
\end{equation}
with equality if and only if $L=\Lambda_1$.

\begin{rmk}
We have checked that, for $\varepsilon=\varepsilon_0\approx 1.148$, $f_\varepsilon$ satisfies
$$
\forall r>0,\quad \forall k\leq 4,\quad (-1)^k f_\varepsilon^{(k)}(r)\geq 0,
$$
and $f_\varepsilon^{(5)}$ is not non-positive on $(0,+\infty)$.
\end{rmk}

\subsection{Potentials with one (strict) well such that the global minimum is not triangular}\label{subsec-onewell}
In this section we provide an abstract result (in Proposition~\ref{prop:theil_ZA}) in which the existence of a whole class of such potentials (in particular, as is easy to verify by Remark \ref{rmk:existza}, we can construct such potentials with arbitrary smoothness). We present first an explicit example of a Lipschitz potential for which the square lattice is favored over the triangular one.

\medskip

The ease to create examples, even under the constraint of having precisely one well, and even at a high level of smoothness, suggests that even beyond the case of the square lattice, the possibilities to construct potentials that favor a given lattice over the triangular one seem to be relatively vast. At the end of this section, we present a remark on how to do this for more general lattices (see Remark \ref{rmk-generalprinciple}).
Inspired by a counter-example by Ventevogel \cite[Sec. 5]{VN1}, we define the following potential, which gives another more explicit example:
$$
g(r):= \left\{ \begin{array}{ll}
\displaystyle 30000 & \mbox{if $0<r<4/9$}\\
2-3r & \mbox{if $4/9\leq r\leq 1$}\\
-r^{-4} & \mbox{if $r>1$,}\\
\end{array}\right.
$$
This function is clearly strictly decreasing on $(0,1)$ and strictly increasing on $(1,+\infty)$. Defining $f$ by $f(r^2)=g(r)$, we have numerically computed the following quantities:
$$
\min_{\lambda>0} E_f[\lambda \Z^2]\approx -8.5915114 , \quad \min_{\lambda >0} E_f[\lambda\mathsf{A}_2]\approx -7.7107743.
$$
We therefore can see that $\min_{\lambda >0} E_f[\lambda \Z^2]< \min_{\lambda>0} E_f[\lambda \mathsf{A}_2]$, and that (numerically) shows that a global minimizer of $E_f$ cannot be a triangular lattice $\lambda \mathsf{A}_2$.

\medskip

We then can construct the following continuous potential $g$, equal to the above one for $r\geq 4/9$ and with strong repulsion near the origin, which satisfies the same property of non-minimality for triangular lattices. The proof of this proposition is given in Appendix \ref{appendix}.

%
%

\begin{proposition}\label{prop:example_vn}
Let $g$ be the continuous potential defined by
$$
g(r):= \left\{ \begin{array}{ll}
\displaystyle \frac{\left( \frac{2}{3} \right)\left(\frac{4}{9}  \right)^{p}}{r^{p}} & \mbox{if $0<r<4/9$}\\
2-3 r & \mbox{if $4/9\leq r\leq 1$}\\
-r^{-4} & \mbox{if $r>1$,}\\
\end{array}\right.
$$
Then there exists $p_0$ such that for any $p>p_0$ we have for $f(r^2)=g(r)$,
\begin{equation}\label{min_za}
\displaystyle \min_{\lambda>0} E_f\left[\lambda \Z^2\right]<\min_{\lambda>0} E_f[\displaystyle \lambda \mathsf{A}_2].
\end{equation}
\end{proposition}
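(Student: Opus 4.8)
The plan is to compare the minimal energies of $\Z^2$ and $\mathsf{A}_2$ directly, exploiting the fact that for $p$ large the repulsive inner part of $g$ contributes essentially nothing to the energy of either lattice at its optimal scale, so that the problem reduces to understanding the piecewise potential $g_0$ obtained by replacing the inner branch by $+\infty$ (or by the already-studied discontinuous example with the big constant $30000$). First I would show that for both lattices $L\in\{\Z^2,\mathsf{A}_2\}$, and for $p$ large, the optimal scaling factor $\lambda_0^L(p)$ is bounded away from $0$ and $\infty$, and in fact converges to the optimal scaling for $g_0$; the key point is that the minimizing lattice must keep its nearest-neighbour distance near the well bottom $r=1$ (the minimum of the $-r^{-4}$ tail sits at $r=1$ by continuity of $g$), so the rescaled lattice $\lambda_0^L L$ has all nonzero points at distance $\ge$ something close to $1$, and the inner branch $\tfrac{(2/3)(4/9)^p}{r^p}$ evaluated at any such distance $r\ge c>4/9$ decays to $0$ uniformly as $p\to\infty$. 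Hence $\min_\lambda E_f[\lambda L]\to \min_\lambda E_{f_0}[\lambda L]$ as $p\to\infty$, for $f_0(r^2)=g_0(r)$.

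Next I would invoke the numerical computation already displayed in the text for the discontinuous potential with constant $30000$: there it is recorded that $\min_\lambda E_f[\lambda\Z^2]\approx -8.59 < -7.71\approx \min_\lambda E_f[\lambda\mathsf{A}_2]$. Since $30000$ is large enough that this discontinuous potential already agrees with $g_0$ at the relevant scales (the minimizing configurations never place two points at distance $<4/9$), we get $\min_\lambda E_{f_0}[\lambda\Z^2] < \min_\lambda E_{f_0}[\lambda\mathsf{A}_2]$, with a strictly negative gap, say $\min_\lambda E_{f_0}[\lambda\Z^2] - \min_\lambda E_{f_0}[\lambda\mathsf{A}_2] =: -\delta < 0$. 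Combining with the convergence from the previous paragraph, there exists $p_0$ such that for all $p>p_0$ both $\min_\lambda E_f[\lambda\Z^2]$ and $\min_\lambda E_f[\lambda\mathsf{A}_2]$ are within $\delta/3$ of their $p=\infty$ limits, whence $\min_\lambda E_f[\lambda\Z^2] < \min_\lambda E_f[\lambda\mathsf{A}_2]$, which is \eqref{min_za}.

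The main obstacle, and the step that needs genuine care rather than soft arguments, is making the reduction to $g_0$ fully rigorous: one must control the optimal scale $\lambda_0^L(p)$ uniformly in $p$ (from both sides) and verify that no point of $\lambda_0^L(p)L$ can be pushed into the region $r<4/9$ where the repulsion is active, since a priori a competitor at a very small scale could try to exploit the tail — but the tail $-r^{-4}$ is bounded below only away from $0$, and in fact $E_{f}[\lambda L]\to 0^-$ as $\lambda\to\infty$ and $E_f[\lambda L]\to+\infty$ as $\lambda\to 0$ (by admissibility, $x_1=8>d=2$, of the $-r^{-4}$ branch and positivity of the inner branch), so the minimizer sits at an intermediate scale; one then estimates, using $E_f[\lambda L] \ge E_{-r^{-4}\text{-tail}}[\lambda L]$ ignoring the nonnegative inner part, that the minimizing $\lambda$ cannot be so large that points fall below $4/9$, nor so small that the $2-3r$ and tail parts alone already exceed the known value $-7$. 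This quantitative squeeze is exactly the content of the Appendix computation; at the level of this proof sketch it suffices to note that it is a routine but slightly delicate estimate combining the explicit shape of $g$ on $[4/9,\infty)$ with the lattice sums $\zeta_{\Z^2}$, $\zeta_{\mathsf{A}_2}$.
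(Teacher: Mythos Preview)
Your overall strategy---show that as $p\to\infty$ the energies converge to those of the hard-core potential $g_0$, then invoke the strict inequality for $g_0$---is the same in spirit as the paper's appendix proof. However, your heuristic for localizing the optimal scale is factually wrong, and this matters because the situation is much tighter than you suggest.

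You write that ``the minimizing lattice must keep its nearest-neighbour distance near the well bottom $r=1$,'' hence all points sit at distance $\ge c>4/9$ and the inner branch is harmless. The appendix computation shows the opposite: the optimum for $\Z^2$ over $\lambda\ge 4/9$ occurs at $\lambda=1/\sqrt 5\approx 0.447$, barely above $4/9$, and the optimum for $\mathsf{A}_2$ occurs \emph{exactly at the boundary} $\lambda=4/9$. So for $\mathsf{A}_2$ the hard-core minimizer has its first shell precisely at $r=4/9$, and for finite $p$ the true minimizer may well sit at some $\lambda<4/9$, placing points strictly inside the repulsive region; at such points the inner branch does not decay to $0$ as $p\to\infty$. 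Your convergence argument as written (``evaluated at any such distance $r\ge c>4/9$ decays to $0$'') therefore does not apply, and the circularity---you need to know where $\lambda_0^L(p)$ lies to control the inner branch, but you need to control the inner branch to know where $\lambda_0^L(p)$ lies---is not broken by the heuristic you offer.

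The paper avoids this by \emph{not} trying to locate $\lambda_0^L(p)$ directly. Instead it (i) computes $\min_{\lambda\ge 4/9}E_f[\lambda L]$ explicitly---this is independent of $p$ since $S_1=0$ there---via a piecewise monotonicity analysis using the first few shells, obtaining $\approx -19.109$ for $\Z^2$ and $\approx -19.013$ for $\mathsf{A}_2$; and (ii) for $\lambda<4/9$ bounds $E_f[\lambda L]$ from below by the dominant repulsive term minus explicit polynomial bounds on $S_2,S_3$ (using the Gauss circle problem for $S_2$), showing this lower bound exceeds the value at $\lambda=4/9$ once $p$ is large. The conclusion for $\mathsf{A}_2$ is then the one-sided estimate $\min_{\lambda>0}E_f[\lambda\mathsf{A}_2]>E_f[(4/9)\mathsf{A}_2]-\varepsilon$, which together with (i) gives the result. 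Note also that the actual gap is only about $0.1$ (not the $\approx 0.9$ suggested by the discontinuous-potential numerics you cite), so the $\varepsilon$-room is genuinely small. You correctly flag that this ``quantitative squeeze'' is the crux; the point is that your proposed route to it, via the nearest-neighbour-near-$1$ picture, would not succeed.
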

Furthermore, we finally prove the existence of a non-countable family of $C^1$-functions $f(r^2)=g(r)$ for which the global minimizer of $E_f$ cannot be triangular.

\begin{proposition}\label{prop:theil_ZA}
For any choice of parameters $0<\alpha_0<\alpha_1<1<\sqrt 3 \alpha_0$, there exist $C_0,C_1>0$ such that if a $C^1$-function $g:[0,+\infty)\to\R\cup\{+\infty\}$ is such that $g(|x|), g^\prime(|x|)$ are integrable on $\mathbb R^2\setminus B_\epsilon$ for $0<\epsilon\le\min\{\alpha_1/2,\sqrt 3\alpha_0/2\}$, and satisfies the conditions 
\begin{subequations}\label{imposed}
\begin{equation}
 g(r)\ge 0\quad\mathrm{for}\quad r\in(0,\alpha_1],\quad\mathrm{and}\quad g(r)\le 0\quad\mathrm{for}\quad r\in[1,+\infty),\label{pos}
\end{equation}
\begin{equation}
g'(r)\ge 0\quad\mathrm{for}\quad r\in(\sqrt 3\alpha_0,+\infty),\label{incr}
\end{equation}
\begin{equation}
\min_{r>0}g(r)=g(1)=-1,\label{normalize}
\end{equation}
\begin{equation}
\forall r\in(0,\alpha_0],\quad g(r)\ge -\frac{C_0}{r^2}\int_{\alpha_1}^{+\infty} g(\rho)\rho\mathrm{d}\rho+ \frac{C_0}{r}\int_{\alpha_1/2}^{+\infty}|g^\prime(\rho)|\rho \mathrm{d}\rho,\label{hardwall}
\end{equation}
\begin{equation}
\min_{\alpha_0\le r\le1}\left(g(r)+g\left(\sqrt{2} r\right)\right)\le -\frac32 +\frac14 E_f\left[\alpha_0 \mathsf{A}_2^{(\ge 2)}\right] ,\label{zbettera}
\end{equation}

\end{subequations}
then, for $f(r^2)=g(r)$, and $\lambda_0^{\mathbb Z^2},\lambda_0^{\mathsf{A}_2}\in[\alpha_0,1]$, we have
$$
\displaystyle \min_{\lambda>0} E_f\left[\lambda \Z^2\right]= E_f\left[\lambda_0^{\Z^2}\Z^2\right]<E_f\left[\lambda_0^{\mathsf{A}_2}\mathsf{A}_2\right]=\min_{\lambda>0} E_f[\displaystyle \lambda \mathsf{A}_2].
$$
\end{proposition}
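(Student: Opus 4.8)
Set $\beta:=-6+E_f\!\left[\alpha_0\mathsf{A}_2^{(\ge2)}\right]$. The strategy is to show
\[
\min_{\lambda>0}E_f[\lambda\Z^2]\ \le\ \beta\ \le\ \min_{\lambda>0}E_f[\lambda\mathsf{A}_2],
\]
with at least one inequality strict, after establishing that the minimizing scales lie in $[\alpha_0,1]$. Throughout I use $E_f[\lambda L]=\sum_{p\in L\setminus\{0\}}g(\lambda|p|)$ for $f(r^2)=g(r)$, together with the shell data of the two lattices normalized to shortest vector $1$: $\Z^2$ has $4$ points at distance $1$, $4$ at distance $\sqrt2$, and the points of $\Z^2$ outside these two shells all lie at distance $\ge2$; while $\mathsf{A}_2$ has $6$ points at distance $1$ and the points of $\mathsf{A}_2^{(\ge2)}$ all lie at distance $\ge\sqrt3$. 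Note $1<\sqrt3\alpha_0$ forces $\alpha_0>1/\sqrt3$, hence $2\alpha_0>1$ and $\sqrt3\,r>1$ for every $r\ge\alpha_0$.

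\emph{Confinement of the minimizing scales.} For $\lambda\le\alpha_0$ and $L\in\{\Z^2,\mathsf{A}_2\}$ I claim $E_f[\lambda L]\ge0$; since $E_f[\lambda L]<0$ for a suitable $\lambda$ (e.g. $E_f[\lambda\Z^2]\le4g(\lambda)+4g(\sqrt2\lambda)<0$ near $\lambda=1$, the further shells contributing $\le0$ by \eqref{pos}), this rules out a minimizer with $\lambda\le\alpha_0$. To prove the claim, split the sum at the radius $\alpha_1$: by \eqref{pos} the terms with $\lambda|p|\le\alpha_1$ are $\ge0$ and contain the nearest-neighbour term, so they sum to at least $g(\lambda)$; for the remaining terms $\sum_{q\in\lambda L,\,|q|>\alpha_1}g(|q|)$ I compare the lattice sum with $\tfrac{1}{\lambda^2\mathrm{vol}(L)}\int_{|x|>\alpha_1-c\lambda}g(|x|)\,dx$, the error being $O\!\big(\tfrac1\lambda\int_{|x|>\alpha_1/2}|g'(|x|)|\,dx\big)$ because the mesh of $\lambda L$ is $\asymp\lambda$ and $g'(|\cdot|)$ is integrable there; since $\mathrm{vol}(\Z^2),\mathrm{vol}(\mathsf{A}_2)$ and the lattice geometries are fixed, this gives constants $c_1,c_2$ depending only on $\alpha_0,\alpha_1$ with
\[
\sum_{\lambda|p|>\alpha_1}g(\lambda|p|)\ \ge\ \frac{c_1}{\lambda^2}\int_{\alpha_1}^{+\infty}g(\rho)\rho\,d\rho-\frac{c_2}{\lambda}\int_{\alpha_1/2}^{+\infty}|g'(\rho)|\rho\,d\rho .
\]
Choosing $C_0,C_1$ in \eqref{hardwall} at least as large as $c_1,c_2$, hypothesis \eqref{hardwall} yields $E_f[\lambda L]\ge g(\lambda)+(\text{displayed lower bound})\ge0$. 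For the upper confinement, by \eqref{incr} the map $\lambda\mapsto E_f[\lambda L]$ has nonnegative derivative on $(\sqrt3\alpha_0,+\infty)$, hence is nondecreasing there; and for $1<\lambda\le\sqrt3\alpha_0$ one compares with $\lambda=1$, using $g(\lambda)\ge g(1)=-1$ from \eqref{normalize} for the nearest shell, and $\lambda|p|\ge|p|>\sqrt3\alpha_0$ together with \eqref{incr} for every term of $L^{(\ge2)}$ — immediate for $\mathsf{A}_2$ since $|p|\ge\sqrt3$, and for $\Z^2$ after a minor case distinction according to whether $\sqrt2$ exceeds $\sqrt3\alpha_0$. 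Thus $\lambda_0^{\Z^2},\lambda_0^{\mathsf{A}_2}\in[\alpha_0,1]$.

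\emph{The two bounds and strictness.} For $\mathsf{A}_2$: for every $\lambda\ge\alpha_0$ each distance in $\mathsf{A}_2^{(\ge2)}$ satisfies $\lambda|p|\ge\alpha_0|p|\ge\sqrt3\alpha_0$, a range on which $g$ is nondecreasing by \eqref{incr}, so $g(\lambda|p|)\ge g(\alpha_0|p|)$ and $E_f[\lambda\mathsf{A}_2]\ge6g(\lambda)+E_f[\alpha_0\mathsf{A}_2^{(\ge2)}]\ge-6+E_f[\alpha_0\mathsf{A}_2^{(\ge2)}]=\beta$ by \eqref{normalize}; since the infimum over all $\lambda>0$ is attained at some $\lambda\ge\alpha_0$, we get $\min_{\lambda>0}E_f[\lambda\mathsf{A}_2]\ge\beta$. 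For $\Z^2$: pick $\lambda^\ast\in[\alpha_0,1]$ realizing $\min_{\alpha_0\le r\le1}\big(g(r)+g(\sqrt2 r)\big)$; the points of $\Z^2$ outside the first two shells lie at distance $\ge2\lambda^\ast\ge2\alpha_0>1$, where $g\le0$ by \eqref{pos}, so
\[
\min_{\lambda>0}E_f[\lambda\Z^2]\le E_f[\lambda^\ast\Z^2]\le4\big(g(\lambda^\ast)+g(\sqrt2\lambda^\ast)\big)=4\min_{\alpha_0\le r\le1}\big(g(r)+g(\sqrt2 r)\big)\le\beta
\]
by \eqref{zbettera}. This gives the chain $\min_{\lambda>0}E_f[\lambda\Z^2]\le\beta\le\min_{\lambda>0}E_f[\lambda\mathsf{A}_2]$; the left inequality is in fact strict because the estimate $E_f[\lambda^\ast\Z^2]\le4(g(\lambda^\ast)+g(\sqrt2\lambda^\ast))$ discards the contribution $4g(2\lambda^\ast)$ of the shell at distance $2\lambda^\ast>1$, which is strictly negative ($g$ is $\le0$, nondecreasing and $C^1$ on $(\sqrt3\alpha_0,+\infty)$, so it cannot vanish identically past $2\lambda^\ast$). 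Combined with $\lambda_0^{\Z^2},\lambda_0^{\mathsf{A}_2}\in[\alpha_0,1]$, this proves the proposition.

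\emph{Expected main obstacle.} The delicate point is the hard-wall step: one must convert the pointwise inequality \eqref{hardwall} into the lattice-sum estimate $E_f[\lambda L]\ge0$ with \emph{quantitative} control of both the leading term (an integral of $g$ over the complement of a disc of radius slightly below $\alpha_1$, to be matched with the $\int_{\alpha_1}^{+\infty}$ term of \eqref{hardwall}) and the $C^1$ comparison error (to be matched with the $\int_{\alpha_1/2}^{+\infty}|g'|$ term), uniformly over the two fixed lattice shapes — and this is exactly what pins down the admissible constants $C_0,C_1$. A secondary nuisance is the interval $(1,\sqrt3\alpha_0)$, not covered by \eqref{incr}, which must be handled by hand using that $g(1)=-1$ is the global minimum of $g$.
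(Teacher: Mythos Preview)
Your proof follows essentially the same three-step architecture as the paper's (lower confinement via the hard-wall condition and a quadrature estimate, upper confinement via \eqref{incr} and \eqref{normalize}, then a shell-by-shell comparison using \eqref{zbettera}), and the identification of the hard-wall step as the crux is accurate. The one notable organizational difference is in the comparison step: you introduce the intermediate value $\beta=-6+E_f[\alpha_0\mathsf{A}_2^{(\ge2)}]$ and bound each minimum against it separately, choosing $\lambda^\ast$ as the minimizer of $g(r)+g(\sqrt2 r)$ rather than of $E_f[\lambda\Z^2]$; this is in fact slightly cleaner than the paper's direct computation of $E_f[\lambda_0^{\mathsf A_2}\mathsf A_2]-E_f[\lambda_0^{\Z^2}\Z^2]$, which tacitly applies \eqref{zbettera} at $\lambda_0^{\Z^2}$ where it is not a priori attained. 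Your strictness justification (``$g$ cannot vanish identically past $2\lambda^\ast$'') is not fully argued---nothing in the hypotheses forbids $g\equiv0$ on $[2\lambda^\ast,\infty)$---but the paper's assertion $E_f[(\lambda_2\Z^2)^{(\ge3)}]<0$ is equally informal on this point.
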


\begin{rmk}[discussion of the conditions \eqref{imposed}]\label{rmk:existza}
The sign conditions \eqref{pos} and \eqref{incr} could possibly be relaxed, at the cost of complicating the proof, and in the present form they are already true for a large class of one-well potentials, while \eqref{normalize} is just a normalization condition. 

\medskip

Condition \eqref{zbettera} seems to be the most restrictive one, but we can construct a $g$ that satisfies it by the following rough procedure. First, we ensure that there exists $\alpha \in(\alpha_0,1]$ such that $g(\alpha)+g(\sqrt 2\alpha)\le -3/2-\epsilon$. Then, we make $|g(r)|$ very small compared to $\epsilon$ in the range $r>\sqrt 3\alpha_0$.

\medskip

If $g(|x|), g^\prime(|x|)$ are integrable on $\mathbb R^2$ away from the origin, then condition \eqref{hardwall} is true if $g(r)\ge \tilde C/r^2$ in a neighborhood of the origin. The role of this condition is to form an ``effective hard core'' for our interactions: as we will see in the proof of the proposition, it implies that the optimal value from \eqref{min_za} is larger than $\alpha_0$.

\medskip

Then $\alpha_1$ can be defined to be the only value where $g(\alpha_1)=0$, therefore does not represent a constraint on the possible choices of $g$, and we merely introduced the notation for it in the statement for the sake of explicitly representing the behavior of $g$.
\end{rmk}

\begin{proof}[Proof of Proposition~\ref{prop:theil_ZA}:] Note that both lattices $\Lambda=\mathbb Z^2, \mathsf{A}_2$ have as first shell a set of points ($4$ points for $\Lambda=\mathbb Z^2$ and $6$ points for $\Lambda=\mathsf{A}_2$) at distance $1$ from the origin. The proof consists in comparing the contributions from this first shell to $E_f[\lambda \Lambda]$ to the contributions from all the remaining shells. While $\mathbb Z^2$ has fewer points in the first shell, the second shell of $\mathbb Z^2$ is closer to the origin, at distance $\sqrt 2$ from the origin for $\Lambda=\mathbb Z^2$ and it is farther from the origin, at distance $\sqrt 3$ from the origin, for $\Lambda=\mathsf{A}_2$. 

\medskip 

The principle of the proof is the following: the above properties of $g$ are such that the effect of the disparity of the contributions from the second shell (which is advantageous for $\mathbb Z^2$) ``wins'' over the effect of the disparity of the first shell (which itself would be more advantageous to $\mathsf{A}_2$). We first show that $\lambda_0^{\mathbb Z^2}, \lambda_0^{\mathsf{A}_2}\in[\alpha_0,1]$, in steps 1 and 2, after which in step 3 we use the above specific discussion on shells to conclude.

\medskip

\textbf{Step 1.} We show that for $\Lambda\in\{\mathbb Z^2,\mathsf{A}_2\}$ and $\lambda\in ]0,\alpha_0]$ there holds $E_f[\lambda\Lambda]\ge 0$. This will follow by \eqref{pos} and from the bound
\begin{equation}\label{energypositive}
E_f[\lambda\Lambda\cap B_{\alpha_0}]\ge - E_f[\lambda\Lambda\setminus B_{\alpha_1}],
\end{equation}
which we are proving now.

\medskip

We use a rough quadrature estimate in order to bound the left hand side of \eqref{energypositive} as follows. By Poincar\'e inequality (which holds for bounded convex domains such as $\mathcal V_{\lambda\Lambda}$)
\begin{equation}\label{taylor}
\frac{1}{|\mathcal V_{\lambda \Lambda}|}\int_{\mathcal V_{\lambda \Lambda}(p)}\left|g(|p|)-g(|x|)\right|\mathrm{d}x\le C\frac{\mathrm{diam}\mathcal V_{\lambda\Lambda}}{|\mathcal V_{\lambda \Lambda}|} \int_{\mathcal V_{\lambda\Lambda}(p)}|g'(|x|)|\mathrm{d}x,
\end{equation}
where for the Voronoi cells of $\lambda\Lambda$ we use the notation $\mathcal V_{\lambda\Lambda}(p)=\{x\in\mathbb R^2: |x-p| \leq\mathrm{dist}(x, \lambda\Lambda)\}$ and $\mathcal V_{\lambda \Lambda}=\mathcal V_{\lambda \Lambda}(0)$. By summing formula \eqref{taylor} over $p\in\lambda\Lambda\setminus B_{\alpha_1}$, we find the bound
\begin{multline}\label{quadbound}
\left|\frac{1}{|\mathcal V_{\lambda\Lambda}|}\int_{\bigcup_{p\in\lambda\Lambda\setminus B_{\alpha_1}}\mathcal V_{\lambda\Lambda}(p)}g(|x|)\mathrm{d}x - \sum_{p\in\lambda\Lambda\setminus B_{\alpha_1}}g(|p|)\right| \\
\le C\frac{\mathrm{diam}\mathcal V_{\lambda\Lambda}}{|\mathcal V_{\lambda\Lambda}|}\int_{\mathbb R^2\setminus B_{\alpha_1/2}}|g^\prime(r)|r \mathrm{d}r,
\end{multline}
where in the last integral we used the fact that for $\lambda<\alpha_1$, Voronoi cells of $\lambda\Lambda$ corresponding to points outside $B_{\alpha_1}$ do not intersect $B_{\alpha_1/2}$. The bound \eqref{quadbound} gives the following control on the right-hand side of \eqref{energypositive}, in which the constants $C_\Lambda>0$ only depend on the dimension and on the shape of $\mathcal V_{\lambda\Lambda}$:
\begin{equation}\label{controlright}
-E_f[\lambda\Lambda\setminus B_{\alpha_1}]\le -\frac{C_\Lambda}{\lambda^2}\int_{\alpha_1}^{+\infty} g(r)r\mathrm{d} r +\frac{C_\Lambda}{\lambda}\int_{\alpha_1/2}^{+\infty}|g^\prime(r)|r\mathrm{d}r.
\end{equation}
Now in order to pass from \eqref{controlright} to \eqref{energypositive} it suffices to note that for $\lambda<\alpha_0$ and $\Lambda\in\{\Z^2,\mathsf{A}_2\}$ the first shell of $\Lambda$ is at distance $\lambda<\alpha_0$ from the origin. Thus we may use the version of the bound \eqref{hardwall}, together with \eqref{controlright}: we obtain that \eqref{energypositive} is true, provided that the constant $C_0>C_\Lambda/\#(\Lambda^{(1)})$, where $\Lambda^{(1)}\subset\Lambda$ is the first shell in $\Lambda$ and $C_\Lambda$ is the constant from \eqref{controlright}.

\medskip

Now due to \eqref{energypositive} we have that $ E_f[\lambda\Lambda]\ge 0$ whenever $\lambda<\alpha_0$, whereas due to \eqref{pos} we have that $\min_{\lambda>0} E_f[\lambda\Lambda]\le E_f[\Lambda]<0$. This shows that the value $\lambda_0^\Lambda$ at which $\min_{\lambda>0} E_f[\lambda\Lambda]$ is achieved, does not lie in the interval $]0,\alpha_0[$.

\medskip

\textbf{Step 2.} Now we use condition \eqref{incr} on $g^\prime$, which implies that for $\lambda\in[1,+\infty)$, the sum over the shells different than the first is increasing in $\lambda$, whereas \eqref{normalize} shows that the sum over first shell is minimized at $\lambda = 1$, showing that the case $\lambda_0^\Lambda>1$ is also not possible. This shows that the minimum of $\lambda\mapsto  E_f[\lambda\Lambda]$ occurs at $\lambda_0^\Lambda\in[\alpha_0,1]$ for our lattices $\Lambda\in \{\mathbb Z^2,\mathsf{A}_2\}$.

\medskip

\textbf{Step 3.} Now that we know that the minimum of $\lambda\mapsto E_f[\lambda\Lambda]$ is achieved for $\lambda_0^\Lambda$ belonging to the interval $[\alpha_0,1]$ for both choices $\Lambda\in\{\Z^2,\mathsf{A}_2\}$, we need only to check the validity of the inequality 
\begin{equation}\label{boundza_rough}
E_f\left[\lambda_0^{\mathsf{A}_2} \mathsf{A}_2\right]> E_f\left[\lambda_0^{\mathbb Z^2} \mathbb Z^2\right].
\end{equation}
For simplicity of notation, we simply denote $\lambda_1=\lambda_0^{\mathsf A_2}, \lambda_2:=\lambda_0^{\mathbb Z^2}$ for the rest of the proof.

\medskip

We note that due to the constraints $\lambda_1,\lambda_2\in[\alpha_0,1]$ we have that 
\begin{itemize}
\item $\left(\lambda_1 \mathsf{A}_2\right)^{(\ge 2)}\subset \mathbb R^2\setminus B_{\sqrt 3 \alpha_0}$,
\item $\left(\lambda_2 \mathbb Z^2\right)^{(\ge 3)}\subset \mathbb R^2\setminus B_{2 \alpha_0}$.
\end{itemize}
This has the following two consequences: Firstly, due to the fact that $1<\sqrt 3\alpha_0<2\alpha_0$ and the second sign condition in \eqref{pos}, we have 
\begin{equation}\label{negative_tail}
E_f\left[\left(\lambda_1 \mathsf{A}_2\right)^{(\ge 2)}\right]<0, \quad E_f\left[\left(\lambda_2 \mathbb Z^2\right)^{(\ge 3)}\right]<0.
\end{equation}
Secondly, due to the condition \eqref{incr}, we have that $E_f\left[\left(\lambda \mathsf{A}_2\right)^{(\ge 2)}\right]$ is increasing in $\lambda$ for $\lambda>\sqrt 3\alpha_0$, therefore in particular
\begin{equation}\label{min_tail_A}
\min_{\lambda\in[\alpha_0,1]}E_f\left[\left(\lambda \mathsf{A}_2\right)^{(\ge 2)}\right]=E_f\left[\left(\alpha_0 \mathsf{A}_2\right)^{(\ge 2)}\right].
\end{equation}
Thus, e can use \eqref{normalize}, \eqref{zbettera}, \eqref{negative_tail} and \eqref{min_tail_A} to prove the following inequalities:
\begin{eqnarray}
\lefteqn{\sum_{p\in (\lambda_1\mathsf{A}_2)^{(1)}}g(|p|) - \sum_{p\in(\lambda_2\mathbb Z^2)^{(1,2)}}g(|p|)=6g(\lambda_1)-4g(\lambda_2)-4g\left(\sqrt 2 \lambda_2\right)}\nonumber\\
&\stackrel{\eqref{normalize}}{\ge}&-4g(\lambda_2)-4g\left(\sqrt 2 \lambda_2\right)-6\nonumber\\
&\stackrel{\eqref{zbettera}}{>}& - E_f\left[\left(\alpha_0\mathsf{A}_2\right)^{(\ge 2)}\right]\nonumber\\
&\stackrel{\eqref{negative_tail}, \eqref{min_tail_A}}{>}& -E_f\left[\left(\lambda_1\mathsf{A}_2\right)^{(\ge 2)}\right] + E_f\left[\left(\lambda_2\mathbb Z^2\right)^{(\ge 3)}\right],\label{boundza}
\end{eqnarray}
Now the last line in \eqref{boundza} by reordering terms gives \eqref{boundza_rough}, and completes the proof of \eqref{min_za}.
\end{proof}
\begin{rmk}
In \cite[Fig. 1]{Crystal}, Theil numerically noticed that $\min_{\lambda>0} E_f[\lambda \Z^2]<\min_{\lambda>0} E_f[\lambda \mathsf{A}_2]$ for $g(r)=\frac{1}{r^{12}}+\tanh\left(4r-\frac{13}{2}  \right)-1$, $f(r^2)=g(r)$, based on a similar principle as in our above theorem. This potential is not a one-well potentinal, as it is decreasing at infinity. Potentials with several wells, and possibly with the property of being decreasing at infinity (property which could be interpreted as having ``a well at infinity''), are not discussed in the present paper.
\end{rmk}
\begin{rmk}[A general principle]\label{rmk-generalprinciple}
Note that the proof of Proposition \ref{prop:theil_ZA} is relatively robust, and allows to introduce also perturbations of the square lattice, and to favor them over the triangular lattice. In a setting of arbitrary dimension and for arbitrary lattices, we formulate the following principle: If for two lattices $\Lambda_1,\Lambda_2$ there holds
\begin{equation}
|\Lambda_1|=|\Lambda_2|\quad\mbox{and}\quad\exists\ r>0,\quad\#\left(\Lambda_1\cap B(0,r)\right)>\#\left(\Lambda_2\cap B(0,r)\right),
\end{equation}
then it is possible to construct a potential $f$ with one single well, such that \eqref{min_za} holds with the lattices $\mathsf{A}_2,\mathbb Z^2$ replaced by $\Lambda_1,\Lambda_2$, respectively. This can be done, amongst other methods, by imitating the proof of Proposition \ref{prop:theil_ZA}.
\end{rmk}

\paragraph{\textbf{Acknowledgements.}} The authors wish to thank the anonymous referee for helping clarify the paper.

LB is grateful for the support of the Mathematics
Center Heidelberg (MATCH) during his stay in Heidelberg. He also acknowledges support
from ERC advanced grant Mathematics of the Structure of Matter (project
No. 321029) and from VILLUM FONDEN via the QMATH Centre of Excellence (grant
No. 10059). MP is grateful for the stimulating work environment provided by ICERM (Brown University), during the Semester Program on ``Point Configurations in Geometry, Physics and Computer Science'' in spring 2018 supported by the National Science Foundation under Grant No. DMS-1439786, and acknowledges support from the FONDECYT \emph{Iniciacion en Investigacion 2017} grant N. 11170264. 

\appendix

\section{Proof of Proposition \ref{prop:example_vn}}\label{appendix}
\begin{proof}[Proof of Proposition \ref{prop:example_vn}:]
In the following, we will write $c=\left( \frac{2}{3} \right)\left(\frac{4}{9}  \right)^{p}$ and $I_\lambda=\left[\frac{4}{9\lambda},\frac{1}{\lambda}  \right]$.
Let $L\in \mathcal{L}_d$ be a Bravais lattice, then we have, for any $\lambda>0$,
\begin{equation}\label{formula_egl}
E_f[\lambda L]=\frac{c}{\lambda^{p}}\sum_{x\in L \atop |x|<\frac{4}{9\lambda}}\frac{1}{|x|^{p}}+\sum_{x\in L \atop |x|\in I_\lambda}(2-3\lambda |x|)-\frac{1}{\lambda^4}\sum_{x\in L \atop |x|>\frac{1}{\lambda}} \frac{1}{|x|^4}:=S_1+S_2+S_3.
\end{equation}
\textbf{Step 1.} We remark that, for any $L$ with minimal distance $1$ (like $\Z^2$ and $\mathsf{A}_2$), therefore if $\lambda>1$, then $S_1=S_2=0$ and we get $E_f[\lambda L]=S_3=-\lambda^{-4} \zeta_{L}(4)$. Since $\lambda\mapsto \lambda^{-4}$ is decreasing, it follows that $E_f[\lambda L]$ is increasing in $\lambda$ for $\lambda\in(1,+\infty)$.

\medskip

\textbf{Step 2.} We now treat the case $L=\Z^2$, for $\lambda\in[4/9,1]$, a range in which $S_1=0$ but $S_2,S_3\neq 0$. Note that the distances to the origin for the square lattice are $1$ (achieved $4$ times), $\sqrt{2}$ (achieved $4$ times), $2$ (achieved $4$ times) and $\sqrt{5}$ (achieved $8$ times). We base our case subdivision on these values.
\begin{enumerate}\item \textit{Values $4/9\leq \lambda \leq 1/\sqrt{5}$ for $L=\Z^2$.} Then
\begin{equation}
E_f[\lambda \Z^2]=40-3(12+4\sqrt{2}+8\sqrt{5})\lambda -\frac{(\zeta_{\Z^2}(4)-5-1/4-8/25)}{\lambda^4}.
\end{equation}
Therefore, $\frac{d}{d\lambda}E_f[\lambda \Z^2]\geq 0$ if and only if $\lambda\leq \left(\frac{4(\zeta_{\Z^2}(4)-5-1/4-8/25)}{3(12+4\sqrt{2}+8\sqrt{5})}  \right)^{1/5}=:\lambda_1\approx 0.4433$. Thus, $\lambda\mapsto E_f[\lambda \Z^2]$ is decreasing on $[4/9,1/\sqrt{5}]$.

\item \textit{Values $1/\sqrt{5}< \lambda\leq 1/2$ for $L=\Z^2$.} In this case
\begin{equation}\label{case2}
E_f[\lambda \Z^2]=24-3(12+4\sqrt{2})\lambda -\frac{(\zeta_{\Z^2}(4)-5-1/4)}{\lambda^4}.
\end{equation}
The $\tfrac{d}{d\lambda}$-derivative of \eqref{case2} is positive for $ \lambda<\left(\frac{4(\zeta_{\Z^2}(4)-5-1/4)}{3(12+4\sqrt{2})}  \right)^{1/5}\approx 0.56$ and thus $\lambda\mapsto E_f[\lambda \Z^2]$ is increasing for $\lambda\in(1/\sqrt{5},1/2]$.

\item \textit{Values $1/2<\lambda\leq 1/\sqrt{2}$ for $L=\Z^2$.} In this case
\begin{equation}
E_f[\lambda \Z^2]=16-3(4+4\sqrt{2})\lambda -\frac{(\zeta_{\Z^2}(4)-5)}{\lambda^4}.
\end{equation}
Now for the critical value $\left(\frac{4(\zeta_{\Z^2}(4)-5)}{3(4+4\sqrt{2})}  \right)^{1/5}=:\lambda_2\approx 0.6765$ we find that $\lambda\mapsto E_f[\lambda \Z^2]$ is increasing on $(1/2,\lambda_2]$ and decreasing on $[\lambda_2,1/\sqrt{2}]$.

\item \textit{Values $1/\sqrt{2}<\lambda\leq 1$ for $L=\Z^2$.} In this case
\begin{equation}
E_f[\lambda \Z^2]=8-12\lambda -\frac{(\zeta_{\Z^2}(4)-4)}{\lambda^4}.
\end{equation}
Therefore, defining $\left( \frac{4(\zeta_{\Z^2}(4)-4)}{12} \right)^{1/5}=:\lambda_3\approx 0.9245$ the map $\lambda\mapsto E_f[\lambda \Z^2]$ is increasing on $[1/\sqrt{2},\lambda_3]$ and decreasing on $[\lambda_3,1]$.
\end{enumerate}
Now, comparing the values of $E_f[\lambda L]$ for $L=\Z^2$ for $\lambda\in \{1/\sqrt{5},1/\sqrt{2},1\}$, we find that, based on the above discussion and on Step 1,
\begin{equation}\label{minsquare}
\displaystyle \min_{\lambda\ge 4/9}E_f[\lambda \Z^2]=E_f\left[\frac{1}{\sqrt{5}}\Z^2  \right]\approx -19.108745
\end{equation}
\textbf{Step 3.} By performing a similar discussion as in Steps 1 and 2, based on the distances to the origin for points in $L=\mathsf{A}_2$ lower than $9/4$, that are $1,\sqrt{3}$ and $2$ (all achieved $6$ times), we obtain
\begin{equation}\label{mintri}
\displaystyle \min_{\lambda\ge 4/9}E_f[\lambda \mathsf{A}_2]=E_f\left[\frac{4}{9}\mathsf{A}_2  \right]\approx -19.013358.
\end{equation}
\textbf{Step 4.} We now assume that $\lambda<\frac{4}{9}$ and we compute a lower bound for the energy \eqref{formula_egl}. We bound $S_1$ by the first term in the sum and $S_3$ by the sum over the whole lattice without the constraint $|x|>\tfrac1\lambda$, and we obtain
\begin{equation}
S_1> \frac{4\left(\frac{2}{3} \right)\left(\frac{4}{9}  \right)^p}{\lambda^p},\quad S_3>-\frac{\zeta_{\Z^2}(4)}{\lambda^4}.
\end{equation}
For $S_2$, we use the fact that $\#\{ x\in \Z^2; |x|\leq r \}=\pi r^2 + R(r)$ where $|R(r)|\leq 2\sqrt{2}\pi r$. We therefore get
 \begin{align}
 S_2=\sum_{x\in \Z^2 \atop |x|\in I_\lambda}(2-3\lambda |x|)&> \left(2-\frac{4}{3\lambda}\right)\#\{x\in \Z^2; |x|\in I_\lambda  \}\\
 &> \left(2-\frac{4}{3\lambda}\right)\left(\frac{\pi}{\lambda^2}-\frac{16\pi}{81\lambda^2}+\frac{2\sqrt{2}\pi}{\lambda}+\frac{8\sqrt{2}\pi}{9\lambda} \right)\\
 &=-\frac{260\pi}{243\lambda^3}-\left(\frac{104\sqrt{2}\pi}{27\lambda^2}-\frac{130\pi}{81\lambda^2} \right)+\frac{52\sqrt{2}\pi}{9\lambda}.
 \end{align}
 Thus, we have obtained
 \begin{equation}
E_f[\lambda \Z^2]>\frac{4\left(\frac{2}{3} \right)\left(\frac{4}{9}  \right)^p}{\lambda^p}-\frac{\zeta_{\Z^2}(4)}{\lambda^4}-\frac{260\pi}{243\lambda^3}-\left(\frac{104\sqrt{2}\pi}{27\lambda^2}-\frac{130\pi}{81\lambda^2} \right)+\frac{52\sqrt{2}\pi}{9\lambda} .
 \end{equation}
We now want to determine a value $\lambda<4/9$ such that $E_f[\lambda \Z^2]>E[\frac{1}{\sqrt{5}}\Z^2]$. A sufficient condition is, setting $X=\lambda^{-1}$, to know all the $X>\frac{9}{4}$ satisfy
 \begin{multline}\label{ineq_qp}
 4\left(\frac{2}{3} \right)\left(\frac{4}{9}  \right)^p X^p+\frac{52\sqrt{2}\pi}{9}X \\
 \ge\zeta_{\Z^2}(4) X^4 +\frac{260\pi}{243}X^3+\left(\frac{104\sqrt{2}\pi}{27}+\frac{130\pi}{81} \right)X^2+E\left[\frac{1}{\sqrt{5}}\Z^2\right].
 \end{multline}
Defining the following coefficient
$$
\alpha_4:=\zeta_{\Z^2}(4),\quad \alpha_3:=\frac{260\pi}{243}, \quad \alpha_2:=\left(\frac{104\sqrt{2}\pi}{27}-\frac{130\pi}{81} \right),
$$ 
it follows by a direct estimate that \eqref{ineq_qp} holds if
 $$
X\geq U_p:=\max\left\{\left(\frac{9}{8}\left( \frac{9}{4} \right)^p \alpha_4  \right)^{\frac{1}{p-4}},  \left(\frac{9}{8}\left( \frac{9}{4} \right)^p \alpha_3  \right)^{\frac{1}{p-3}},\left(\frac{9}{8}\left( \frac{9}{4} \right)^p \alpha_2  \right)^{\frac{1}{p-2}}\right\}.
 $$
We observe that $U_p\to \frac{9}{4}$ as $p\to +\infty$ and is $U_p$ decreasing in $p$ for large $p$. Therefore, by continuity of $\lambda\mapsto E_f[\lambda \Z^2]$, for any $\varepsilon>0$, there exists $p_0$ such that
$$
\left| \min_{\lambda<4/9} E_f[\lambda \Z^2]- E_f\left[\frac{4}{9}\Z^2\right]\right|<\varepsilon.
$$
Since $ E_f[\frac{4}{9}\Z^2]>E_f[\frac{1}{\sqrt{5}}\Z^2]$, it follows that, for enough large $p$, 
$$
\min_{\lambda>0}E_f[\lambda \Z^2]=E_f[\frac{1}{\sqrt{5}}\Z^2]\approx -19.108745.
$$
The same argument can be repeated for $L=\mathsf{A}_2$, obtaining that for any $\varepsilon>0$, there exists $p_0$ such that for any $p>p_0$ there holds 
$$
\left| \min_{\lambda<4/9} E_f[\lambda \mathsf{A}_2]- E_f\left[\frac{4}{9}\mathsf{A}_2\right]\right|<\varepsilon.
$$
Therefore, by \eqref{mintri}, $\min_{\lambda>0} E_f[\lambda \mathsf{A}_2]> E_f[\frac{4}{9} \mathsf{A}_2]-\varepsilon>E_f[\frac{1}{\sqrt{5}}\Z^2]$ for $\varepsilon>0$ sufficiently small, which in turn is achievable for $p_0$ sufficiently large. These choices allow to complete the proof.
\end{proof}

\end{document}